\title{Local problems in trees across a wide range of distributed models}
\author{Anubhav Dhar}
    {Aalto University, Espoo, Finland \and Indian Institute of Technology Kharagpur, India}
    {anubhavdhar@kgpian.iitkgp.ac.in}
    {https://orcid.org/0009-0006-5922-8300}{}
\author{Eli Kujawa}
    {Aalto University, Espoo, Finland \and University of Illinois Urbana-Champaign, USA}
    {ekujawa2@illinois.edu}
    {https://orcid.org/0009-0009-4183-7126}{}
\author{Henrik Lievonen}
    {Aalto University, Espoo, Finland \and \url{https://henriklievonen.fi/}}
    {henrik.lievonen@aalto.fi}
    {https://orcid.org/0000-0002-1136-522X}{}
\author{Augusto Modanese}
    {Aalto University, Espoo, Finland \and \url{https://augusto.modanese.net/}}
    {augusto.modanese@aalto.fi}
    {https://orcid.org/0000-0003-0518-8754}{}
\author{Mikail Muftuoglu}
    {Aalto University, Espoo, Finland}
    {mikail.muftuoglu@aalto.fi}
    {https://orcid.org/0009-0008-7350-5372}{}
\author{Jan Studený}
    {Aalto University, Espoo, Finland}
    {jan.studeny@aalto.fi}
    {https://orcid.org/0000-0002-9887-5192}{}
\author{Jukka Suomela}
    {Aalto University, Espoo, Finland \and \url{https://jukkasuomela.fi/}}
    {jukka.suomela@aalto.fi}
    {https://orcid.org/0000-0001-6117-8089}{}
\authorrunning{A.\ Dhar et al.}
\keywords{Distributed algorithms, quantum-LOCAL model, randomized online-LOCAL model, locally checkable labeling problems, trees}
\newcommand*{\Abuc}{\mathcal A^+(b,u,C)}
\newcommand*{\Afails}{\mathcal F}
\newcommand*{\CC}{\mathcal{C}}
\newcommand*{\Sigmabar}{\overline{\Sigma}}
\newcommand*{\Sigmacert}{\Sigma_{\mathcal T}}
\newcommand*{\VV}{\mathcal{V}}
\newcommand*{\algo}{\mathcal{A}}
\newcommand*{\allpaths}{\mathcal{P}}
\newcommand*{\deltree}{\mathcal{T}}
\newcommand*{\iin}{\mathrm{in}}
\newcommand*{\labeling}{\mathcal{L}}
\newcommand*{\lpump}{l_{\mathsf{pump}}}
\newcommand*{\ogruntime}{t_{\mathsf{orig}}}
\newcommand*{\oout}{\mathrm{out}}
\newcommand*{\pathset}{\mathcal{Q}}
\newcommand*{\sC}{\mathsf{C}}
\newcommand*{\sR}{\mathsf{R}}
\newcommand*{\seq}{\mathcal{S}}
\newcommand*{\sigmabaropt}{\overline{\sigma*}}
\newcommand*{\sigmabar}{\overline{\sigma}}
\newcommand*{\skeltree}{\tree_{\mathsf{skel}}}
\newcommand*{\tree}{T}
\renewcommand*{\PP}{\mathcal{P}}
\newcommand*{\SigmaC}[1]{{\Sigma}_{#1}^\mathsf{C}}
\newcommand*{\SigmaR}[1]{{\Sigma}_{#1}^\mathsf{R}}
\newcommand*{\rest}[2]{{#1\upharpoonright_{#2}}}
\newcommand*{\textmultich}[2]{%
   \bigl(\!\bigl(\!\begin{smallmatrix}#1\\#2\end{smallmatrix}\!\bigr)\!\bigr){}}
\newcommand{\displaymultich}[2]{\left(\!{\binom{#1}{#2}}\!\right)}
\newcommand*{\multich}[2]{\mathchoice{\displaymultich{#1}{#2}}
    {\textmultich{#1}{#2}}
    {\textmultich{#1}{#2}}
    {\textmultich{#1}{#2}}}
\DeclareMathOperator{\Class}{\mathsf{Class}}
\DeclareMathOperator{\Type}{\mathsf{Type}}
\DeclareMathOperator{\flexSCC}{\mathsf{flexible-SCC}}
\DeclareMathOperator{\replace}{\mathsf{Replace}}
\DeclareMathOperator{\trim}{\mathsf{trim}}
\DeclareMathOperator*{\argmin}{arg\,min}
\begin{document}

\maketitle

\begin{abstract}
    The \emph{randomized online-LOCAL} model captures a number of models of computing; it is at least as strong as all of these models:
    \begin{itemize}
        \item the classical LOCAL model of distributed graph algorithms,
        \item the quantum version of the LOCAL model,
        \item finitely dependent distributions [e.g.\ Holroyd 2016],
        \item any model that does not violate physical causality [Gavoille, Kosowski, Markiewicz, DISC 2009],
        \item the SLOCAL model [Ghaffari, Kuhn, Maus, STOC 2017], and
        \item the dynamic-LOCAL and online-LOCAL models [Akbari et al., ICALP 2023].
    \end{itemize}
    In general, the online-LOCAL model can be much stronger than the LOCAL model. For example, there are \emph{locally checkable labeling problems} (LCLs) that can be solved with logarithmic locality in the online-LOCAL model but that require polynomial locality in the LOCAL model.

    However, in this work we show that in \emph{trees}, many classes of LCL problems have the same locality in deterministic LOCAL and randomized online-LOCAL (and as a corollary across all the above-mentioned models). In particular, these classes of problems do not admit any distributed quantum advantage.

    We present a near-complete classification for the case of \emph{rooted regular trees}. We also fully classify the super-logarithmic region in \emph{unrooted regular trees}. Finally, we show that in general trees (rooted or unrooted, possibly irregular, possibly with input labels) problems that are global in deterministic LOCAL remain global also in the randomized online-LOCAL model.
\end{abstract}

\newpage

\section{Introduction}

\begin{figure}
    \centering
    \includegraphics[page=1,scale=0.99]{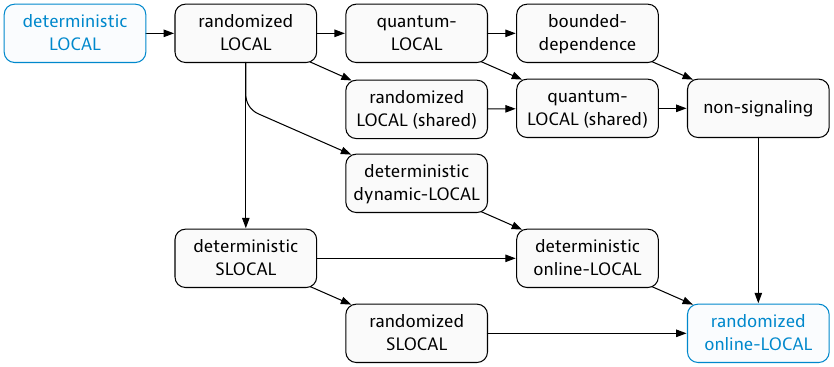}
    \caption{Landscape of models, based on \cite{akbari24_online_arxiv}. In this work we show that for many families of LCL problems, the two extreme models---\emph{\detlcl} and \emph{\rolcl}---are equally strong, and hence the same holds for \emph{all} intermediate models in this diagram.}
    \label{fig:models}
\end{figure}

The \emph{\rolcl model} was recently introduced in \cite{akbari24_online_arxiv};
this is a model of computing that is at least as strong as many other models
that have been widely studied in the theory of distributed computing, as well as
a number of emerging models; see \cref{fig:models}. In particular, different
variants of the \emph{\qlocal} and \emph{\slocal} models are sandwiched between
the classical \emph{\detlcl} model and the \rolcl model.

While the \rolcl model is in general much stronger than the \detlcl model, in this work we show that
in \emph{trees}, for many families of \emph{local} problems these two models
(and hence all models in between) are asymptotically equally strong.
\Cref{fig:landscape-lcls-after} summarizes the relations that we have thanks to
this work; for comparison, see \cref{fig:landscape-lcls-before} for the state of
the art before this work.

\begin{figure}
    \centering
    \includegraphics[width=\textwidth]{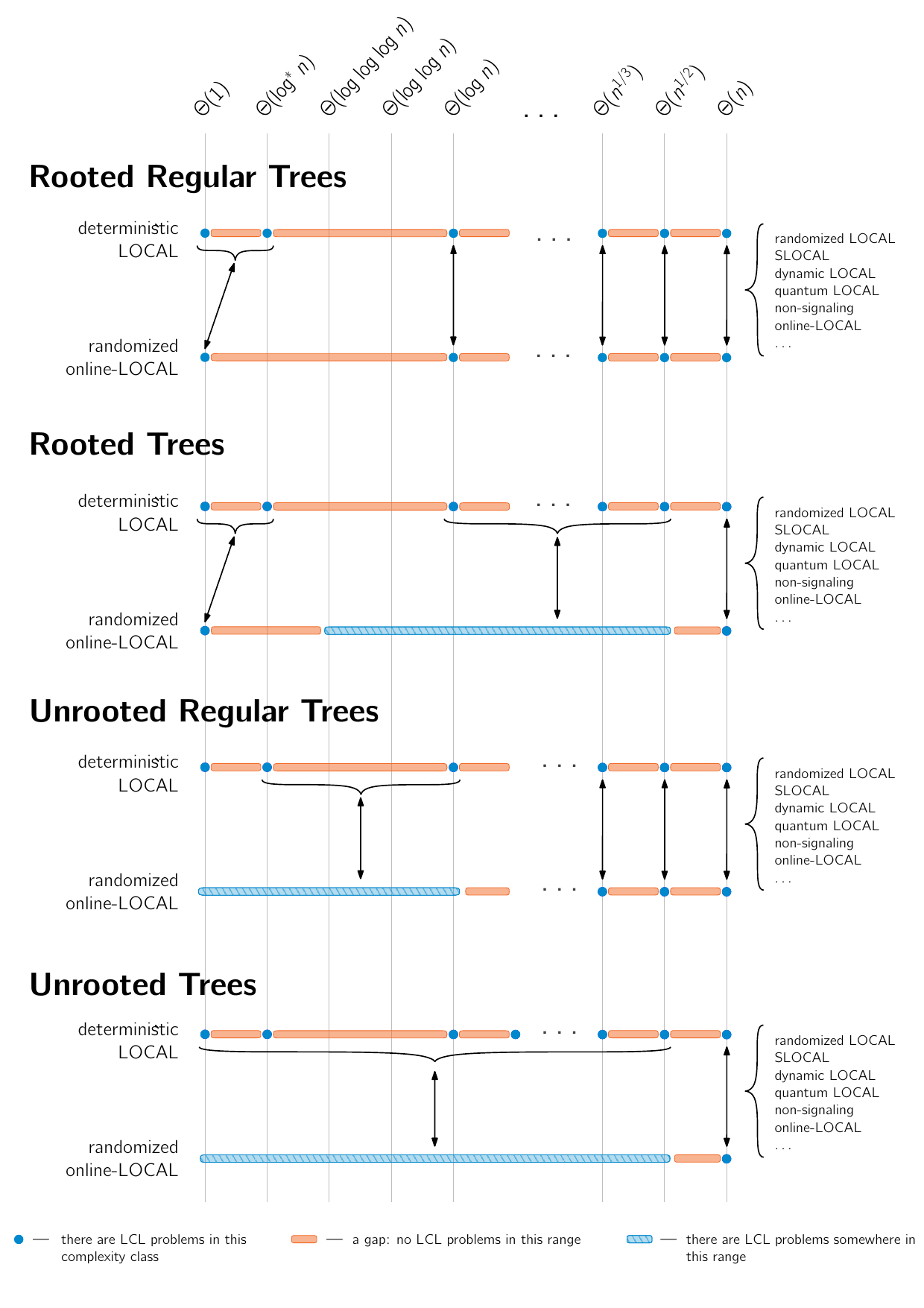}
    \caption{Landscape of LCL problems in trees after this work---compare with \cref{fig:landscape-lcls-before} that shows the state of the art before this work.}
    \label{fig:landscape-lcls-after}
\end{figure}

\begin{figure}
    \centering
    \includegraphics[width=\textwidth]{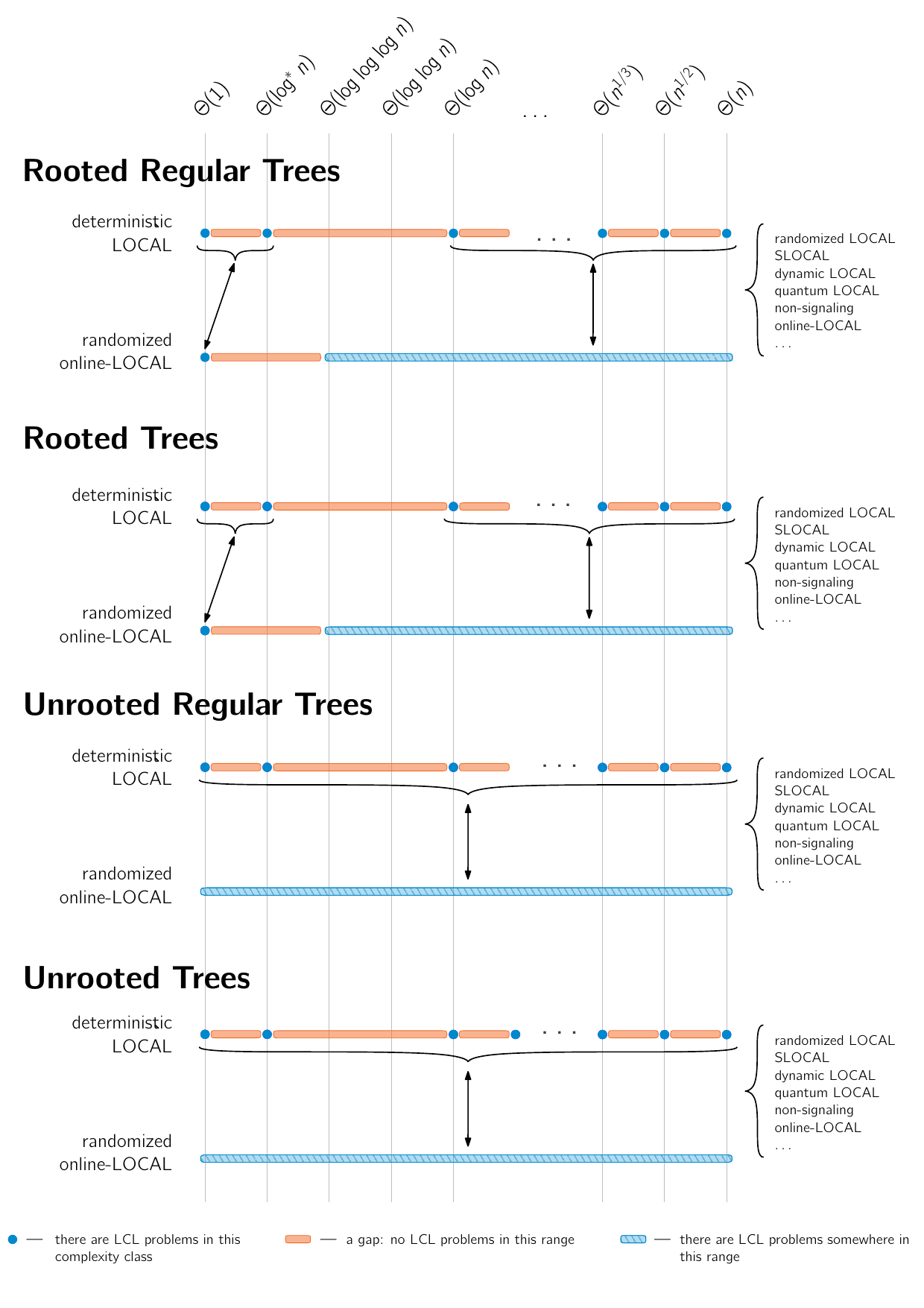}
    \caption{Landscape of LCL problems in trees before this work---compare with \cref{fig:landscape-lcls-after} to see the impact of our new contributions.}
    \label{fig:landscape-lcls-before}
\end{figure}

\subsection{Models}

We will define all relevant models formally in \cref{sec:background}, but for now the following brief definitions suffice:
\begin{itemize}
    \item In the \textbf{\boldmath\detlcl} model, an algorithm $A$ with locality $T$
    works as follows:
    The adversary chooses a graph $G$ and an assignment of polynomially-sized unique identifiers.
    Algorithm $A$ is applied to all nodes \emph{simultaneously in parallel}.
    When we apply $A$ to node $v$, algorithm $A$ gets to see the radius-$T$
    neighborhood of $v$ and, using this information, it has to choose the output
    label of node $v$.
    \item In the \textbf{\boldmath\rolcl} model, an algorithm $A$ with locality $T$ works
    as follows:
    The adversary chooses a graph $G$ and a processing order $\sigma$.
    Then the adversary presents nodes \emph{sequentially} following the order $\sigma$.
    Whenever a node $v$ is presented, algorithm $A$ gets to see the radius-$T$
    neighborhood of $v$ and, using this information \emph{as well as} all
    information it has seen previously and a global source of random bits, it
    has to choose the output label of node $v$.
\end{itemize}
This means that \rolcl is stronger than \detlcl in at least three different
ways: (1)~we have access to shared randomness, (2)~the sequential processing
order can be used to break symmetry, and (3)~there is global memory thanks to
which we can remember everything we have seen so far.
A reader familiar with the \slocal model can interpret it as randomized \slocal
augmented with global memory. 
Note that the adversary is oblivious; it cannot adapt $G$ and $\sigma$ based on
the actions of~$A$.

In general, \onlinelocal (with or without randomness) is much stronger than the
classical \local model (with or without randomness). 
For example, leader election is trivial in \onlinelocal, even with locality $T =
0$ (the first node that the adversary presents is marked as the leader). 
However, what is much more interesting is whether \onlinelocal has advantage
over \local for problems defined using local constraints, such as graph
coloring.

\subsection{Prior Work on LCL Problems}

We will study here locally checkable labeling problems (LCLs) \cite{whatcanbecomputedlocally}; these are problems that can be specified by giving a finite set of valid local neighborhoods. By prior work, we know a number of LCL problems that can separate the models in \cref{fig:models}, for example:
\begin{itemize}
    \item Sinkless orientation has locality $\Theta(\log n)$ in \detlcl,
    locality $\Theta(\log \log n)$ in \randlcl, and locality $\Theta(\log \log
    \log n)$ in \rslocal \cite{brandt16_lower_stoc,chang19_exponential_siamjc,%
    ghaffari17_distributed_soda,ghaffari18_derandomizing_focs,balliu23_sinkless_sosa}.
    \item One can construct an (artificial) LCL problem that shows that having
    access to shared randomness helps exponentially in comparison with private
    randomness, and this also gives a separation between e.g.\ \randlcl and
    \rolcl \cite{balliu24_shared_arxiv}.
    \item In 2-dimensional grids, 3-coloring has polynomial locality in e.g.\
    \rslocal and non-signaling models, while it can be solved with logarithmic
    locality in \detolcl \cite{akbari_et_al:LIPIcs.ICALP.2023.10,no_distributed_quantum_advantage}.
    \item In paths, 3-coloring requires $\Theta(\log^*n)$ locality in \randlcl
    \cite{linial92_locality_siamjc,naor91_lower_siamjdm}, while it can be solved
    with $O(1)$ locality in the bounded-dependence model
    \cite{holroyd16_finitely_fmpi}.
\end{itemize}
However, we do not have any such separations in rooted trees outside the $O(\log^* n)$ region. Moreover, in general unrooted trees, all separations are in the sub-logarithmic region. In this work we give a justification for this phenomenon.

\subsection{Contributions}

We study in this work LCL problems in the following three main settings, all
familiar from prior work:
\begin{enumerate}
    \item LCL problems in trees in general,
    \item LCL problems in unrooted regular trees (there are no inputs and we only care about nodes with exactly $d$ neighbors),
    \item LCL problems in rooted regular trees (there are no inputs, edges are oriented, and we only care about nodes with exactly $1$ successor and $d$ predecessors).
\end{enumerate}
In all these settings, it is known that the locality of any LCL problem in
\detlcl falls in one of the following classes: $O(1)$, $\Theta(\log^* n)$,
$\Theta(\log n)$, or $\Theta(n^{1/k})$ for some $k = 1, 2, 3, \dots$
\cite{grunau22_landscape_podc,TimeHierarchyLOCAL,%
chang20_complexity_disc,balliu23_locally_dc,balliu2022efficient}.
Furthermore, in the case of rooted regular trees, we can (relatively
efficiently) also decide which of these classes any given LCL problem belongs to
\cite{balliu2022efficient}.

Our main contribution is showing that, in many of these complexity classes,
\textbf{\boldmath\detlcl and \rolcl are asymptotically equally strong}:
\begin{enumerate}
    \item In general trees, the localities in \rolcl and \detlcl are
    asymptotically equal in the region $\omega(\sqrt{n})$.
    \item In unrooted regular trees, the localities in \rolcl and \detlcl are
    asymptotically equal in the region $\omega(\log n)$.
    \item In rooted regular trees, the localities in \rolcl and \detlcl are
    asymptotically equal in the region $\omega(\log^* n)$.
\end{enumerate}
By prior work, the relation between \detlcl and \rolcl was well-understood in
the $o(\log \log \log n)$ region for rooted (not necessarily regular) trees
\cite{akbari24_online_arxiv}; see \cref{fig:landscape-lcls-before} for the state
of the art before this work.
Putting together prior results and new results, the landscape shown in
\cref{fig:landscape-lcls-after} emerges.

\subsection{Roadmap}

We give an overview of key ideas in \cref{sec:ideas}.
We formally define LCL problems and the models of computing in \cref{sec:background},
and then we are ready to analyze the sub-logarithmic region in rooted regular trees in \cref{sec:rooted-regular-sub-logarithmic}.
Then in \cref{sec:depth} we introduce machinery related to the notion of \emph{depth} from prior work, and equipped with that we can analyze the super-logarithmic region in rooted and unrooted regular trees in \cref{sec:rooted-regular-super-logarithmic,sec:unrooted-regular}.
Finally, we analyze the case of general trees in \cref{sec:general-gap}.

\section{Key Ideas, Technical Overview, and Comparison with Prior Work}\label{sec:ideas}

We keep the discussion at a high level but invite the interested reader to
consult the formal definitions in \cref{sec:background} as needed.

Our results heavily build on prior work that has studied LCL problems in
deterministic and randomized \local models. 
In essence, our goal is to extend its scope across the entire landscape of
models in \cref{fig:models}.

For example, by prior work we know that any LCL problem $\Pi$ in trees has
locality either $O(\sqrt{n})$ or $\Omega(n)$ in the \detlcl model
\cite{almostGlobalLOCAL}. 
Results of this type are known as \emph{gap results}.
For our purposes, it will be helpful to interpret such a gap result as
a \emph{speedup theorem} that allows us to speed up \detlcl algorithms:
\begin{quote}
    If one can solve $\Pi$ with locality $o(n)$ in \textbf{\boldmath\detlcl},
    then one can solve the same problem $\Pi$ with locality $O(\sqrt{n})$ in
    \textbf{\boldmath\detlcl}.
\end{quote}
In essence our objective is to strengthen the statement as follows:
\begin{quote}
    If one can solve $\Pi$ with locality $o(n)$ in \textbf{\boldmath\rolcl},
    then one can solve the same problem $\Pi$ with locality $O(\sqrt{n})$ in
    \textbf{\boldmath\detlcl}.
\end{quote}
The critical implication would be that a faster \rolcl algorithm results not
only in a faster \rolcl algorithm, but also in a faster
\detlcl algorithm---we could not only reduce locality
\enquote{for free} but even switch to a weaker model. 
As a consequence, the complexity class $\Theta(n)$ would contain the same
problems across all models---since $\Omega(n)$ in \rolcl trivially implies
$\Omega(n)$ in \detlcl (which is a weaker model), the above would give us that
$o(n)$ in \rolcl implies $o(n)$ in \detlcl.

If we could prove a similar statement for \emph{every} gap result for LCLs in
trees, we would then have a similar implication for each complexity class: the
complexities across all models in \cref{fig:models} would be the same for every
LCL problem in trees. 
Alas, such a statement cannot be true in full generality (as we discussed above,
there are some known separations between the models), but in this work we take
major steps in many cases where that seems to hold.
To understand how we achieve this, it is useful to make a distinction between
two flavors of prior work: classification- and speedup-style arguments; we
will make use of both techniques.

\subsection{Classification Arguments for Regular Trees}

First, we have prior work that is based on the idea of classification of LCLs, see e.g.\ \cite{balliu2022efficient}. The high-level strategy is as follows:
\begin{enumerate}
    \item Define some property $P$ so that, for any given LCL $\Pi$, we can
    decide whether $\Pi$ has property $P$.
    \item Show that, if $\Pi$ can be solved with locality $o(n)$ in \textbf{\boldmath\detlcl}, then this implies that $\Pi$ must have property $P$.
    \item Show that any problem with property $P$ can be solved with locality $O(\sqrt{n})$ in \textbf{\boldmath\detlcl}.
\end{enumerate}
Such a strategy not only shows that there is a gap in the complexity landscape,
but it also usually gives an efficient strategy for determining what is the
complexity of a given LCL (as we will need to check some set of properties $P_1,
P_2, \dotsc$ and see which of them holds for a given $\Pi$ in order to determine
where we are in the complexity landscape). 
For such results our key idea is to modify the second step as follows:
\begin{enumerate}[1'.]
    \setcounter{enumi}{1}
    \item Show that, if $\Pi$ can be solved with locality $o(n)$ in \textbf{\boldmath\rolcl}, then this implies that $\Pi$ must have property $P$.
\end{enumerate}
Note that we do not need to change the third step; as soon as we establish
property $P$, the pre-existing \detlcl algorithm kicks in.
This is, in essence, what we do in
\cref{sec:rooted-regular-sub-logarithmic,sec:rooted-regular-super-logarithmic,sec:unrooted-regular}:
we take the prior classification from \cite{balliu2022efficient} for rooted and
unrooted regular trees and show that it can be extended to cover the entire
range of models all the way from \detlcl to \rolcl.

\subparagraph*{Log-star certificates.}

One key property that we make use of are \emph{certificates of $O(\log* n)$
solvability} for rooted regular trees, introduced in \cite{balliu23_locally_dc}.
In \cref{sec:rooted-regular-sub-logarithmic}, we show that the existence of an
$o(\log n)$-locality \rolcl algorithm implies that the LCL problem has such a
certificate; in turn, as the name suggests, the certificate of solvability
implies that the problem can be solved with $O(\log^* n)$ locality in both
\congest and \local.
This then extends to an $O(1)$ upper bound in
\slcl~\cite{DBLP:conf/stoc/GhaffariKM17} and in
\detolcl~\cite{akbari_et_al:LIPIcs.ICALP.2023.10}, which then directly implies
the same upper bound also for \rolcl model.
Hence we obtain the following:

\begin{restatable}{theorem}{restateThmRootedSubLogarithmic}
  Let $\Pi$ be an LCL problem on rooted regular trees.
  If $\Pi$ can be solved with $o(\log n)$ locality in \rolcl, then it can be
  solved in \local with $O(\log* n)$ locality.
  Consequently, $\Pi$ can be solved with $O(1)$ locality in \slocal, and thus
  also with the same locality in \onlinelocal, even deterministically.
\end{restatable}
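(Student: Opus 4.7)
The plan is to follow the classification strategy sketched in the overview. The target is to extract, from any $o(\log n)$-locality \rolcl algorithm for $\Pi$, a \emph{certificate of $O(\log^* n)$-solvability} in the sense of \cite{balliu23_locally_dc}. Once such a certificate has been produced, the remaining implications are essentially bookkeeping: the certificate yields an $O(\log^* n)$ \local algorithm by the construction in that paper; the $O(\log^* n)$ upper bound then collapses to $O(1)$ in \slocal via the simulation of \cite{DBLP:conf/stoc/GhaffariKM17}; and the $O(1)$ upper bound transfers to \detolcl by \cite{akbari_et_al:LIPIcs.ICALP.2023.10}, which in turn dominates \rolcl (even deterministically).

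The main technical step is therefore the implication ``$o(\log n)$ in \rolcl $\Rightarrow$ $\Pi$ admits a certificate.'' I would argue the contrapositive. Assume $\Pi$ has no certificate of $O(\log^* n)$-solvability; by the characterization of \cite{balliu23_locally_dc}, this produces a family of rooted regular trees of arbitrarily large depth together with local structures that force any ``short-sighted'' algorithm into a forbidden configuration somewhere. My plan is to feed such an instance to a hypothetical \rolcl algorithm $\algo$ with locality $T(n) = o(\log n)$ and derive a contradiction via an indistinguishability argument: on a tree with many identifier-wise indistinguishable subtrees at depth slightly larger than $T(n)$, the outputs $\algo$ is forced to produce must repeat in a combinatorial pattern that is provably incompatible with $\Pi$.

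Three features of \rolcl complicate the argument compared to its \detlcl analogue, and I would handle them as follows. First, \rolcl has \emph{shared randomness}; since the adversary is oblivious, I fix the hard input first and then condition on a seed for which the bad event still occurs with positive probability. Second, the adversary chooses the \emph{processing order}; I would select an order that reveals regions far from the critical nodes first, so that by the time $\algo$ reaches them, its view of their local identifiers is still essentially unconstrained. Third, \rolcl has \emph{global memory}; the construction must therefore embed the hard instance in a way that identifiers in the critical region are information-theoretically independent of everything $\algo$ has already seen, not merely topologically far from it.

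The main obstacle I expect is the quantitative matching: the number of indistinguishable copies of the bad configuration must be large enough to survive a union bound over random seeds and over the information accumulated in $\algo$'s memory. This is exactly why the threshold $o(\log n)$ rather than $O(\log n)$ is the right one---at locality $\Theta(\log n)$, sinkless-orientation-type problems genuinely admit non-trivial \rolcl algorithms, whereas strict $o(\log n)$ should leave a wide enough gap to push the counting argument through. I expect the cleanest route is to adapt the path/subtree indistinguishability framework underlying the known \detlcl lower bounds, augmenting the adversarial construction with the order- and seed-fixing steps described above.
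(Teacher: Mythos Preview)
Your high-level framing is correct and matches the paper: the goal is to show that an $o(\log n)$ \rolcl algorithm for $\Pi$ yields a certificate of $O(\log^\ast n)$ solvability in the sense of \cite{balliu23_locally_dc}, after which the chain of implications to \local, \slocal, and \onlinelocal is indeed bookkeeping. The divergence, and the gap, is in how you propose to carry out that implication.

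You argue by contrapositive: assume no certificate exists, invoke \cite{balliu23_locally_dc} to obtain a hard instance family, and then defeat the \rolcl algorithm on it. The problem is the second step. The characterization in \cite{balliu23_locally_dc} says ``certificate $\Leftrightarrow$ $O(\log^\ast n)$ in \local''; the absence of a certificate gives you an $\Omega(\log n)$ lower bound \emph{in \local}, but it does not hand you a ready-made family of instances with the indistinguishability and forcing properties you describe, and certainly not one tailored to defeat global memory and shared randomness. Your sketch is essentially ``take the known \local lower bound and make it robust to \rolcl,'' but the entire difficulty of the theorem lies in that step, and your proposal does not identify what concrete structure would survive the passage to \rolcl. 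The remarks about fixing a seed, choosing a revealing order, and arranging identifier-independence are reasonable heuristics, but they do not pin down an instance or a counting argument.

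The paper goes in the opposite direction: it argues \emph{directly}, running the algorithm on a designed family and reading a certificate off the output. Concretely, it builds a forest of $(|\Sigma|+1)$ chunks of complete $\Delta$-ary trees of height $2d$ with $d > T(n)$, and for each choice of a bit $b$, a middle node $u$, and a chunk $C$, it glues $C$ below $u$ at depth $2d$ (if $b=0$) or $2d+1$ (if $b=1$). The adversary reveals all middle nodes first; since $T(n) < d$, the algorithm's output distribution on them is independent of $(b,u,C)$. An averaging argument over the randomness (using that the number of instances is strictly less than $n$ and the failure probability is at most $1/n$) produces a single fixed middle-node labeling $\overline{\sigma^\ast}$ that is completable on \emph{every} instance $(b,u,C)$. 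From the completions one then reads off two sequences of labeled complete trees of depths $2d$ and $2d+1$---the bit $b$ is precisely what manufactures the coprime pair---rooted at each label occurring in $\overline{\sigma^\ast}$ and with identical leaf patterns; this is exactly the coprime certificate. None of these ingredients (the $(b,u,C)$ family, the averaging to a canonical labeling, the $2d$ versus $2d+1$ coprime trick) appear in your plan, and they are what actually make the \rolcl-to-certificate step go through.
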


\subparagraph*{Depth.}

Another key property that we make use of is the \emph{depth} of an LCL (see \cref{sec:depth}):
to every LCL problem
$\Pi$ on \emph{regular} trees there is an associated quantity $d_\Pi$ called its
depth that depends only on the description of $\Pi$ and which allows us to
classify its complexity in the \detlcl model \cite{balliu2022efficient}.
We show that depth also captures the complexity in \rolcl.
In \cref{sec:rooted-regular-super-logarithmic}, analyze the case of rooted
regular trees and show the following:

\begin{restatable}{theorem}{restateThmRootedSuperLogarithmic}%
    \label{thm:rooted-full}
    Let $\Pi$ be an LCL problem on rooted regular trees with finite depth $k =
    d_\Pi > 0$.
    Then any algorithm $\mathcal A$ solving $\Pi$ in \rolcl must have locality
    $\Omega(n^{1/k})$.
\end{restatable}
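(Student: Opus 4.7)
The plan is to prove the bound by induction on the depth $k$, leveraging the fact that the \rolcl adversary is oblivious so that hard instances and orderings may be fixed in advance. Shared randomness and global memory are the only extra resources \rolcl offers over \detlcl, so each step of the classical depth-based lower bound can be recast as an indistinguishability argument that is largely insensitive to those resources.

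For the base case $k = 1$, the depth-$1$ structural property from Section~\ref{sec:depth} should yield pairs of rooted regular tree instances $G_0, G_1$ that agree in arbitrarily large local views at all but one designated node and yet require disjoint valid label sets at that node. I would choose such a pair together with an adversarial ordering $\sigma$ that presents the distinguishing node last. For any algorithm of locality $T = o(n)$, the entire sequence of views in $G_0$ and $G_1$ is identical up to the final node, so the algorithm's output distribution at that final node is the same across the two instances; since the two instances require disjoint sets of valid labels there, the algorithm must err with constant positive probability on one of them, contradicting correctness.

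For the inductive step, I would assume the bound holds for depths below $k$ and suppose for contradiction that $\Pi$ admits a \rolcl algorithm $\mathcal A$ of locality $T(n) = o(n^{1/k})$. Using the depth machinery, one should be able to identify a ``peeled'' LCL $\Pi'$ of depth $k - 1$ sitting inside $\Pi$, together with a construction that embeds any instance $G'$ of $\Pi'$ on $m$ nodes as a core inside a larger rooted regular tree $G$ of $\Pi$ of size $n = \Theta(m^{k/(k-1)})$, the remaining nodes forming padding subtrees whose labelings can be chosen consistently with any valid $\Pi'$-labeling of the core. Simulating $\mathcal A$ on $G$ with an oblivious ordering that defers the core nodes to the end then yields a \rolcl algorithm for $\Pi'$ with locality $T(n) = o(m^{1/(k-1)})$, contradicting the inductive hypothesis.

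The main obstacle is engineering the reduction in the inductive step: (i)~exhibiting $\Pi'$ of depth exactly $k - 1$ relies on the algebraic content of the depth definition and on a canonical way of fixing an outer layer so that the remaining constraints form a strictly simpler LCL; (ii)~the padding subtrees must be simultaneously completable for every valid core labeling, which uses the rooted-regular assumption crucially, since predecessor subtrees can be filled in by a constant-locality procedure once the outer layer is decided; and (iii)~the online simulator must lazily commit to padding labels during execution, piggybacking on the global memory and shared random tape of $\mathcal A$. Points (i) and (ii) constitute the real technical heart, while (iii) should follow routinely from the shared oblivious-adversary semantics of \rolcl.
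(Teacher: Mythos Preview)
Your inductive reduction has a genuine gap at the heart of the inductive step, and the base case is also not right as stated.

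For the base case, depth $d_\Pi = 1$ does \emph{not} yield two instances that are locally identical except at one node where the valid label sets are disjoint. What depth $1$ says is that $\trim(\Sigma)$ is nonempty but, for every path-flexible strongly connected component $\Sigma''$ of $\mathcal{M}_{\rest{\Pi}{\trim(\Sigma)}}$, the set $\trim(\Sigma'')$ is empty. So the obstruction is not a single forced label at a single node; it is that along a long core path the algorithm's labels must lie in one strongly connected component of the restricted automaton, and either that component is not path-flexible (a parity obstruction) or applying $\trim$ to it kills everything. Neither of these is captured by a two-instance indistinguishability at one node.

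For the inductive step, the problem is that your peeled problem $\Pi'$ cannot be fixed in advance. You would like to set $\Pi' = \rest{\Pi}{\Sigma^{\sC}_1}$ for some first component $\Sigma^{\sC}_1$ of a maximal good sequence, embed a $\Pi'$-instance as the core of a $\Pi$-instance, and read a $\Pi'$-solution off the core. But nothing forces $\mathcal A$ to label the core with elements of $\Sigma^{\sC}_1$: the algorithm may use any labels in $\trim(\Sigma)$ there, and the component it lands in depends on its random bits. The paper does not attempt such a reduction. Instead it builds a single $k$-layered graph $G$ of size $O(t^k)$ directly, and the analysis (\cref{lem:inductive-res,lem:lowerbound}) is a case analysis over \emph{all} possible label sets $\Sigma'_1,\dots,\Sigma'_k$ the algorithm might produce on designated probe nodes, showing failure in each case.

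The second missing ingredient is how randomness is handled. You propose a fixed adversarial ordering; the paper uses a \emph{randomized} one (\cref{def:rand-ad-seq}): for each core path in layer $(\sC,i)$ it picks a probe node at a random offset in $\{2t+1,2t+2\}$, and it reveals the probes of each layer in a uniformly random permutation. The independence lemma (\cref{lem:independent-label}) then shows the algorithm's outputs on these probes are independent of this randomness. That randomness is exactly what drives the failure probability: if the probe labels do not sit inside a single strongly connected component, the random permutation makes the topmost probe receive a label with no walk to some other probe's label with probability at least $1/(c_i+1) > 1/n$; if they sit in an inflexible component, the random offset creates a parity mismatch with probability $1/2$. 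Your fixed ordering gives you neither of these levers, and without them you cannot rule out that the algorithm's shared randomness lets it guess consistently.
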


The high-level strategy is as follows:
Assume we have an algorithm with locality~$o(n^{1/k})$.
We construct a large experiment graph and run the algorithm on that.
If the algorithm fails, then it contradicts the assumption that we have such algorithm.
If the algorithm succeeds, then it also produces a certificate showing that~$d_\Pi > k$, which again is a contradiction.
In \cref{sec:unrooted-regular}, we prove an analogous statement for unrooted regular trees:

\begin{restatable}{theorem}{restateThmUnrootedFull}%
  \label{thm:unrooted-full}
  Let $\Pi$ be an LCL problem on unrooted regular trees with finite depth
  $k = d_\Pi > 0$.
  Then any \rolcl algorithm for $\Pi$ has locality $\Omega(n^{1/k})$.
\end{restatable}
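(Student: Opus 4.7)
My plan is to adapt the strategy used in Theorem~\ref{thm:rooted-full} to the symmetric unrooted setting. Assume for contradiction that there is an \rolcl algorithm $\mathcal{A}$ solving $\Pi$ on unrooted $d$-regular trees with locality $T(n) = o(n^{1/k})$. The goal is to exhibit an experiment graph and an adversarial processing order on which any successful execution of $\mathcal{A}$ yields a certificate witnessing $d_\Pi \geq k+1$, contradicting the hypothesis $d_\Pi = k$.

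First I would choose the experiment graph $G_n$ to be a balanced $d$-regular tree of $\Theta(n)$ nodes containing a designated \emph{test edge} $e = \{u,v\}$ deep in its interior, together with $k$ nested concentric ``probe layers'' on both sides of $e$, each of radius noticeably larger than $T(n)$. Because $T(n) = o(n^{1/k})$, there is room for all $k$ such layers inside $G_n$. The adversarial processing order $\sigma_n$ first presents every node lying outside the union of these probe layers, so that the algorithm commits its global memory before ever seeing the probe region, and then presents the probe nodes layer by layer from the outermost layer inwards; this ensures that when a node in layer $j$ is processed, its radius-$T(n)$ view is contained in layers $j$ and $j+1$ only.

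Second, from a successful run of $\mathcal{A}$ I would extract an unrooted depth certificate by analysing the distribution of outputs on each probe layer. A standard averaging argument, first over the shared random bits to fix a ``typical'' random string and then over the many isomorphic copies of each layer present in $G_n$, yields, for every $j \in \{0,1,\dots,k\}$, a nonempty set of labels that $\mathcal{A}$ is forced to use on probes at distance $j$ from $e$; these sets satisfy the recursion defining the unrooted depth-$k$ process, because the labelling produced by $\mathcal{A}$ is globally valid and because a probe node's output depends only on the previously processed layer via the observation above. Tracing the recursion through all $k$ layers and terminating at the test edge $e$ then produces a certificate of depth at least $k+1$.

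The main obstacle, and the essential difference from the rooted proof, is the bi-directional character of certificates in the unrooted setting: a candidate label at a node must be extendable towards all $d$ of its neighbours simultaneously, not merely towards its descendants. Consequently, the peeling of the probe layers has to be performed on both sides of $e$ in parallel, and an extra symmetrisation step is needed to glue together the two resulting rooted-style certificates into a single unrooted one that is consistent across $e$. I expect that most of the technical work lies in ensuring that the algorithm's global memory and the adversary's choice of $\sigma_n$ do not create hidden correlations between the two halves of the experiment that would prevent such a gluing.
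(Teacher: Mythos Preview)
Your high-level plan---build an experiment tree, fix an adversarial order, argue that a successful run produces a good sequence of length $k+1$---matches the paper's, but two concrete design choices in your proposal would not work, and your diagnosis of the main obstacle is off.

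\textbf{Processing order.} You reveal the non-probe nodes first and the probe layers last. The paper does the opposite: it reveals the probe nodes $U_i$ \emph{first}, before anything else. This is what drives the independence argument. When the $U_i$ are revealed first, their radius-$t$ neighborhoods are identical, pairwise disjoint, and entirely unlabeled, so the algorithm's output distribution on them cannot depend on which probe node is which or on the random bits in the adversary's order (\cref{independence-lemma-unrooted}). Under your order, by the time the algorithm reaches a probe node it has already seen and labeled most of the tree; nothing prevents it from exploiting that global information to coordinate the probe labels, and the averaging argument you sketch does not recover from this.

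\textbf{Forcing path-flexibility.} Your plan never says how the $\flexSCC$ step of the good-sequence recursion is enforced. Showing that the probe labels lie in a common strongly connected component of the automaton is not enough; you need the component to be \emph{path-flexible}. The paper achieves this by randomizing each probe node's position inside its core path over $\{2t+1, 2t+2\}$, so that the distance between any two probes in the same layer takes values $\alpha$, $\alpha+1$, $\alpha+2$ with positive probability each. A labeling that succeeds against all these choices forces walks of coprime lengths between the probe labels in the restricted automaton, hence path-flexibility. Without an analogous randomization of distances, your argument would at best certify an SCC, not an element of $\flexSCC$, and the recursion stalls.

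\textbf{The unrooted obstacle.} You frame the main difficulty as gluing two rooted certificates across a test edge. That is not how the paper proceeds, and it is not where the work lies. The paper builds a single centrally rooted graph $G = G^*_{\sR,k+1}$ and works throughout in the half-edge formalism: the induction alternates between showing node configurations in $S_{\sR,i}$ lie in $\mathcal{V}_i$ (via $\trim$ and the constant $\gamma$) and showing half-edge pairs in $S_{\sC,i}$ lie in $\mathcal{D}_i$ (via the flexibility argument above). There is no bisection and no gluing; the symmetry of the unrooted setting is absorbed into the automaton on $\Sigma\times\Sigma$ from the start.
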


\subparagraph*{Putting things together.}

Combining the new results with results from prior work \cite{grunau22_landscape_podc,%
TimeHierarchyLOCAL,chang20_complexity_disc,balliu23_locally_dc,%
balliu2022efficient}, we extend the characterization of LCLs in both unrooted and rooted regular trees all the way up to \rolcl:

\begin{corollary}%
  \label{cor:unrooted_full}
  Let $\Pi$ be an LCL problem on unrooted regular trees.
  Then the following holds:
  \begin{itemize}
    \item If $d_\Pi = 0$, the problem is unsolvable. 
    \item If $0 < d_\Pi < \infty$, then $\Pi$ has locality $\Theta(n^{1/d_\Pi})$
    in both deterministic \local and \rolcl.
    \item If $d_\Pi = \infty$, then $\Pi$ can be solved in \congest -- and hence
    also in \onlinelocal, even deterministically -- with locality $O(\log n)$.
  \end{itemize}
\end{corollary}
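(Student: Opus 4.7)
The plan is to view this corollary as a packaging result that combines the new lower bound from \cref{thm:unrooted-full} with the pre-existing classification of LCL problems on unrooted regular trees in the deterministic \detlcl model. Since \detlcl is a weaker model than \rolcl, any \detlcl algorithm also works in \rolcl, so what is genuinely new is only the lower bound part; the upper bounds are imported directly from prior work.

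I would proceed case by case. For $d_\Pi = 0$, I would appeal to the definition of depth from \cite{balliu2022efficient}: $d_\Pi = 0$ is precisely the condition that no valid labeling exists on sufficiently large regular trees, so $\Pi$ is unsolvable. For $0 < d_\Pi < \infty$, the lower bound $\Omega(n^{1/d_\Pi})$ in \rolcl is exactly the statement of \cref{thm:unrooted-full}, and the matching $O(n^{1/d_\Pi})$ upper bound in \detlcl is provided by the algorithm of \cite{balliu2022efficient}; this upper bound transfers to \rolcl because \detlcl algorithms can be simulated by \rolcl algorithms with the same locality. For $d_\Pi = \infty$, I would invoke the corresponding upper bound from \cite{chang20_complexity_disc,balliu2022efficient}, which gives an $O(\log n)$-locality algorithm that in fact runs in \congest; simulating \congest in \detlcl and then in \onlinelocal (both randomized and deterministic) yields the claimed bound across all intermediate models.

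The only subtlety worth spelling out is the direction of the inclusions: a \detlcl (or \congest) algorithm of locality $T$ can be run in \rolcl and \onlinelocal with the same locality $T$ by having the online algorithm, whenever a new node $v$ is presented, simply compute the \detlcl output on the radius-$T$ neighborhood of $v$ and ignore the global memory and randomness; this is standard and appears in \cite{akbari24_online_arxiv}. Given this, the lower bound in the strongest model (\rolcl) and the upper bound in the weakest model (\detlcl, or even \congest for the global case) close the gap for every model in \cref{fig:models}.

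The main obstacle is of course already absorbed into \cref{thm:unrooted-full}; the corollary itself is essentially bookkeeping. The one thing to state cleanly is that in the finite depth case the lower and upper bounds match asymptotically, and that in the infinite depth case no nontrivial lower bound is claimed—the statement is purely an upper bound that propagates through the model hierarchy.
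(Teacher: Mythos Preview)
Your proposal is correct and matches the paper's approach: the paper does not give an explicit proof of this corollary at all, presenting it simply as the result of ``combining the new results with results from prior work \cite{grunau22_landscape_podc,TimeHierarchyLOCAL,chang20_complexity_disc,balliu23_locally_dc,balliu2022efficient}'' together with \cref{thm:unrooted-full}. Your case analysis is exactly the bookkeeping the paper leaves implicit, and your handling of the model inclusions (upper bounds transfer up from \detlcl/\congest, lower bounds transfer down from \rolcl) is the intended argument.
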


\begin{corollary}
  If $\Pi$ is a solvable LCL problem in rooted regular trees, then it belongs to
  one of the following four classes:
  \begin{enumerate}
    \item $O(1)$ in both \detlcl and \rolcl
    \item $\Theta(\log* n)$ in \detlcl and $O(1)$ in \rolcl
    \item $\Theta(\log n)$ in both \detlcl and \rolcl
    \item $\Theta(n^{1/k})$ in both \detlcl and \rolcl where $k = d_\Pi > 0$
  \end{enumerate}
\end{corollary}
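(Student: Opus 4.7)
The plan is to derive the four-class characterization by case analysis on the deterministic LOCAL complexity of $\Pi$, matching each case to the corresponding \rolcl complexity by combining the prior deterministic classification with the two theorems restated above. By \cite{balliu2022efficient} and the other prior work cited in the introduction, any solvable LCL on rooted regular trees has \detlcl complexity exactly $O(1)$, $\Theta(\log^* n)$, $\Theta(\log n)$, or $\Theta(n^{1/d_\Pi})$ with $0 < d_\Pi < \infty$ in the polynomial case. Since \rolcl is no weaker than \detlcl, every \detlcl upper bound transfers to \rolcl for free, so the remaining task reduces to establishing matching \rolcl lower bounds in each case---except in the $\Theta(\log^* n)$ case, where I instead establish a strictly smaller \rolcl upper bound.

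For the $O(1)$ class the claim is immediate. For the $\Theta(\log^* n)$ class, I would appeal to the known fact that any $O(\log^* n)$-locality \detlcl algorithm can be simulated with $O(1)$ locality in \slocal \cite{DBLP:conf/stoc/GhaffariKM17} and therefore in \detolcl \cite{akbari_et_al:LIPIcs.ICALP.2023.10}, from which $O(1)$ locality in \rolcl is inherited. No \rolcl lower bound is needed since the claim is $O(1)$ in \rolcl.

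For the $\Theta(\log n)$ class, the $O(\log n)$ upper bound in \rolcl is inherited from \detlcl. For the matching $\Omega(\log n)$ lower bound I would argue by contradiction: an $o(\log n)$-locality \rolcl algorithm for $\Pi$ would, by the sub-logarithmic theorem restated above, imply an $O(\log^* n)$-locality \detlcl algorithm, contradicting that the \detlcl complexity of $\Pi$ equals $\Theta(\log n)$. Finally, for the $\Theta(n^{1/k})$ class with $k = d_\Pi > 0$ finite, the upper bound in \rolcl is again inherited from \detlcl, while the matching $\Omega(n^{1/k})$ lower bound in \rolcl is precisely \cref{thm:rooted-full}. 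The genuine technical obstacles have already been handled inside the two theorems above; the only residual subtlety for the corollary itself is the bookkeeping check that the parameter $d_\Pi$ appearing in \cref{thm:rooted-full} coincides with the depth parameter that controls the \detlcl classification from \cite{balliu2022efficient}, which is the main point to be careful about when writing up the argument.
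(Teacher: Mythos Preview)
Your proposal is correct and matches the paper's own treatment: the paper states this corollary without an explicit proof, presenting it as the direct consequence of combining the prior \detlcl classification from \cite{balliu2022efficient,balliu23_locally_dc} with the two new theorems (the sub-logarithmic gap theorem and \cref{thm:rooted-full}), which is exactly the case analysis you spell out. Your residual concern about the depth parameter is a non-issue here, since the paper's \cref{sec:depth} explicitly recalls $d_\Pi$ from \cite{balliu2022efficient}, so the parameters coincide by construction.
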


\subparagraph*{Comparison with prior work.}

There is prior work \cite{akbari_et_al:LIPIcs.ICALP.2023.10} that took first steps towards classifying LCL problems in rooted regular trees; the main differences are as follows:
\begin{enumerate}
    \item We extend the classification all the way to \rolcl, while
    \cite{akbari_et_al:LIPIcs.ICALP.2023.10} only discusses \detolcl. While this
    may at first seem like a technicality, \cref{fig:models} highlights the
    importance of this distinction: \rolcl captures \emph{all} models in this
    diagram.
    This includes in particular the \qlocal model and the non-signaling model.
    Note that it is currently open if \detolcl captures either of these models.
    \item We build on the classification of \cite{balliu2022efficient}, while \cite{akbari_et_al:LIPIcs.ICALP.2023.10} builds on the (much simpler and weaker) classification of \cite{balliu23_locally_dc}. This has two direct implications:
    \begin{enumerate}
        \item Our work extends to \emph{unrooted trees}; the results in
        \cite{akbari_et_al:LIPIcs.ICALP.2023.10} only apply to rooted trees.
        \item We get an \emph{exact classification} also for complexities
        $\Theta(n^{1/k})$. 
        Meanwhile \cite{akbari_et_al:LIPIcs.ICALP.2023.10}
        essentially only shows that, if the complexity is $\Theta(n^{1/k})$ in
        \detlcl, then it is $\Theta(n^{1/\ell})$ in \onlinelocal for some $\ell
        \le k$; hence it leaves open the possibility that these problems could
        be solved much faster (i.e., $\ell \gg k$) in \onlinelocal.
    \end{enumerate}
\end{enumerate}

\subsection{Speedup Arguments for General Trees}

Second, we have also prior work based on speedup simulation arguments,
see e.g.\ \cite{almostGlobalLOCAL}. 
The high-level strategy there is as follows:
\begin{quote}
    Assume we are given some algorithm $\algo$ that solves $\Pi$ with locality $o(n)$ in \textbf{\boldmath\detlcl}. Then we can construct a new algorithm $\algo'$ that uses $\algo$ as a black box to solve $\Pi$ with locality $O(\sqrt{n})$ in \textbf{\boldmath\detlcl}.
\end{quote}
Unlike classification-style arguments, this does not (directly) yield an
algorithm for determining the complexity of a given LCL problem. In essence,
this can be seen as a black-box simulation of $\algo$. For speedup-style
arguments, the key idea for us to proceed is as follows:
\begin{quote}
    Assume we are given some algorithm $\algo$ that solves $\Pi$ with locality $o(n)$ in \textbf{\boldmath\rolcl}. Then we can construct a new algorithm $\algo'$ that uses $\algo$ as a black box to solve $\Pi$ with locality $O(\sqrt{n})$. Moreover, we can simulate $\algo'$ efficiently in the \textbf{\boldmath\detlcl} model.
\end{quote}
Here a key challenge is that we need to explicitly change models; while \emph{in
general} \detlcl is not strong enough to simulate \rolcl, we show that \emph{in
this case} such a simulation is possible.
This is, in essence, what we do in \cref{sec:general-gap}:
starting with the argument from \cite{almostGlobalLOCAL}, we show that
sublinear-locality \rolcl algorithms not only admit speedup inside \rolcl, but
also a simulation in \detlcl.
Formally, we prove:

\begin{restatable}{theorem}{restateThmUnrootedSqrtn}%
  \label{thm:unrooted-sqrtn}
  Suppose $\Pi$ is an LCL problem that can be solved with $o(n)$ locality in
  \onlinelocal (with or without randomness) in unrooted trees.
  Then $\Pi$ can be solved in unrooted trees in \local with $O(\sqrt{n})$
  locality (with or without randomness, respectively).
\end{restatable}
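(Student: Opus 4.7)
The plan is to follow the two-step strategy previewed in the introduction: first, extend the classical \detlcl-to-\detlcl speedup of \cite{almostGlobalLOCAL} to the \rolcl setting to obtain an $O(\sqrt{n})$-locality \rolcl algorithm $\algo'$ that uses the given $\algo$ as a black box; then argue that $\algo'$, despite being formulated in \rolcl, can actually be executed in \local with the matching flavor of randomness.

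Let $\algo$ be the given (possibly randomized) \rolcl algorithm for $\Pi$ with locality $T(n) = o(n)$. Since $T$ is sublinear, for every $n$ we can fix a parameter $N = N(n)$ large enough that $T(N) \le c\sqrt{n}$ for a small constant $c$. Mirroring \cite{almostGlobalLOCAL}, when $\algo'$ is asked to label a node $v$ of the input tree $\tree^*$, it inspects the radius-$\Theta(\sqrt{n})$ ball $B_v$ around $v$, extends it canonically to an auxiliary tree $\widehat{\tree}_v$ of $N$ nodes by attaching padding gadgets at each boundary node of $B_v$ (with all new IDs generated canonically from the IDs in $B_v$), and simulates $\algo$ on $\widehat{\tree}_v$ under a canonical processing order that starts with $v$ (for instance, BFS from $v$ with ties broken by ID) and, in the randomized case, a canonical prefix of the shared random string. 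It then outputs whatever label $\algo$ assigns to $v$ in this simulation. The locality of $\algo'$ is $O(\sqrt{n})$ because only $B_v$ is ever consulted.

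Passing from this description to \local is then immediate: by construction, the output at $v$ is a function of $B_v$ alone in the deterministic case, and a function of $B_v$ together with the shared random string in the randomized case. Hence every node can perform the simulation independently in \local, and public randomness is available in randomized \local in exactly the same way as in \rolcl, so the randomness-preserving part of the conclusion comes for free and no genuine \enquote{model switch} is required once $\algo'$ has been specified.

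The main obstacle, as in \cite{almostGlobalLOCAL}, will be to argue consistency across adjacent nodes. Two neighbors $u$ and $v$ simulate $\algo$ on possibly different auxiliary trees $\widehat{\tree}_u, \widehat{\tree}_v$, and yet their labels along the edge $uv$ must jointly satisfy the LCL constraint. The canonical padding must therefore be chosen so that $\widehat{\tree}_u$ and $\widehat{\tree}_v$ coincide on a ball of radius strictly greater than $T(N)$ around the edge $uv$, and the canonical processing order and random-bit allocation must also agree on this ball in the two simulations. Since $\algo$ has locality $T(N)$, its output at every node inside this common region is then identical in the two simulations, in particular at both endpoints of $uv$. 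Carrying out this overlap argument---the same combinatorial core as in \cite{almostGlobalLOCAL}, but now additionally tracking the state that \rolcl maintains (adversarial order and global memory) and ensuring it too is canonical on the overlap---is the principal technical step; once it is in place, local validity of the output at each node follows from the correctness of $\algo$ on the corresponding padded instance $\widehat{\tree}_v$.
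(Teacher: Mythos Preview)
Your consistency argument does not survive the move from \local to \onlinelocal, and this is the crux of the whole theorem. In \local, equal $T(N)$-balls imply equal outputs; in \onlinelocal, the output at a node $w$ also depends on the \emph{entire} processing history before $w$ via global memory. Your two simulations use processing orders \enquote{BFS from $u$} and \enquote{BFS from $v$}: these orders do not agree even when restricted to the overlap ball---they literally start at different nodes. Concretely, take $\Pi$ to be $2$-coloring on paths and let $\algo$ be the locality-$0$ \onlinelocal algorithm that colors the first presented node $0$ and then extends. In $u$'s simulation $u$ is first and gets color $0$; in $v$'s simulation $v$ is first and gets color $0$; but $u$ and $v$ are adjacent. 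So the padding-and-local-simulation scheme, which is exactly the right idea for a \local $\algo$, is unsound for an \onlinelocal $\algo$, and no choice of canonical order or random-bit allocation that is \enquote{centered at the queried node} repairs it.

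The paper avoids this pitfall by not running per-node simulations at all. It builds a \emph{single} virtual tree $S$ from $T$ by pumping certain short subpaths (identified via a skeleton/ruling-set decomposition) so that nodes at distance $\Theta(\sqrt{n})$ in $T$ become at distance $\gg T(N)$ in $S$; the pumping lemma and the $\overset{*}{\sim}$ equivalence guarantee that a valid labeling of $S$ collapses back to one of $T$. The genuinely new step over \cite{almostGlobalLOCAL} is then an \enquote{amnesiac} preprocessing: since there are only constantly many pumpable subpath types, one feeds each type to $\algo$ polynomially many times and, by pigeonhole, extracts a \emph{canonical} labeling for each type that recurs regardless of history. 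Pumped segments are then labeled with these canonical labelings and the constant-radius remainder is filled in by brute force. This is precisely what neutralizes the global memory and sequential order of \onlinelocal; your proposal lacks any analogue of this step, and without it the reduction to \local cannot go through.
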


\section{Preliminaries}
\label{sec:background}

We denote the set of natural numbers, including zero, by $\N$.
For positive integers, excluding zero, we use $\N_+$.
For a set or multiset $X$ and $k \in \N_+$, we write $\multich{X}{k}$ for the
set of multisets over $X$ of cardinality $k$.

All graphs are simple.
Unless stated otherwise, $n$ always denotes the number of nodes in the graph.
An algorithm solving some graph problem is said to succeed with high probability
if, for all graphs except finitely many, it fails with probability at most
$1/n$.

\subsection{Locally Checkable Labeling Problems}

In this section we define locally checkable labeling (LCL) problems.
We first recall the most general definition due to Naor and Stockmeyer~\cite{whatcanbecomputedlocally}:

\begin{definition}[LCL problem in general graphs]%
  \label{def:lcl}
  A \emph{locally checkable labeling} (LCL) problem $\Pi=(\Sigma_\iin,
  \Sigma_\oout, \CC, r)$ is defined as follows:
  \begin{itemize}
    \item $\Sigma_\iin$ and $\Sigma_\oout$ are finite, non-empty sets of input
    and output labels, respectively. 
    \item $r \in \N_+$ is the \emph{checkability radius}.
    \item $\CC$ is the set of \emph{constraints}, namely a finite set of graphs
    where:
    \begin{itemize}
      \item Each graph $H\in\CC$ is centered at some node $v$.
      \item The distance of $v$ from all other nodes in $H$ is at most $r$.
      \item Each node $u\in H$ is labeled with a pair $(i(u), o(u)) \in
      \Sigma_\iin \times \Sigma_\oout$.
    \end{itemize}
  \end{itemize}
  For a graph $G = (V,E)$ whose vertices are labeled according to $\Sigma_\iin$,
  a (node) labeling of a graph $G = (V,E)$ is a \emph{solution} to $\Pi$ if it
  labels every node $v \in V$ with a label from $\Sigma_\oout$ such that the
  $r$-neighborhood of $v$ in $G$ is identical to some graph of $\CC$ (when we
  place $v$ at the center of the respective graph in $\CC$).
  Every node for which this holds is said to be \emph{correctly labeled}.
  If all nodes of $G$ are labeled with a solution, then $G$ itself is
  \emph{correctly labeled}.
\end{definition}

We use this definition to refer to problems in the case of general, that is, not
necessarily regular trees.
For regular trees, we use two other formalisms, one for the case of rooted and
one for that of unrooted trees.
These are simpler to work with and require checking only labels in the immediate
vicinity of a node or half-edge.
We note there is precedent in the literature for taking this approach (see, e.g.,
\cite{balliu2022efficient}).

\begin{definition}[LCL problem on regular rooted trees]%
  \label{def:lcl-rooted}
  An \emph{LCL problem on regular rooted trees} is a tuple $\Pi = (\Delta,
  \Sigma, \mathcal{V})$ where:
  \begin{itemize}
    \item $\Delta \in \N_+$ and $\Sigma$ is a finite, non-empty set.
    \item The (node) \emph{constraints} form a set $\mathcal{V}$ of pairs
    $(\sigma, S)$ where $\sigma \in \Sigma$ and $S \subseteq
    \multich{\Sigma}{\Delta}$.
  \end{itemize}
  For a rooted tree $T = (V,E)$ with nodes having degree in $\{ 1, \Delta \}$, a
  \emph{solution} to $\Pi$ is a labeling $\ell\colon V \to \Sigma$ such that,
  for each node $v$ with $\Delta$ children $u_1,\dots,u_\Delta$, we have
  $(\ell(v), \{\ell(u_1),\dots,\ell(u_\Delta)\}) \in \mathcal{V}$.
  Again, this requirement allows us to speak of \emph{correctly labeled} nodes
  and thus also \emph{correctly labeled} trees.
\end{definition}

Under this formalism, leaves may be labeled arbitrarily.
When $\Pi$ is clear from the context, we refer to the elements of
$\multich{\Sigma}{\Delta}$ as node configurations.

Meanwhile, in the case of unrooted trees, we work with LCL problems where the
labels are placed on half-edges (instead of nodes) and are subject to node and edge
constraints.

\begin{definition}[LCL problem on regular unrooted trees]%
  \label{def:lcl-unrooted}
  An \emph{LCL problem on unrooted trees} is a tuple $\Pi = (\Delta, \Sigma,
  \mathcal{V}, \mathcal{E})$ where:
  \begin{itemize}
    \item $\Delta \in \N_+$ and $\Sigma$ is a finite, non-empty set.
    \item The \emph{node constraints} form a set $\mathcal{V} \subseteq
    \multich{\Sigma}{\Delta}$.
    \item The \emph{edge constraints} form a set $\mathcal{E} \subseteq
    \multich{\Sigma}{2}$.
  \end{itemize}
  For an unrooted tree $T = (V,E)$ with nodes having degree in $\{ 1, \Delta
  \}$, a \emph{solution} to $\Pi$ is a labeling of half-edges, that is, a map
  $\ell\colon V \times E \to \Sigma$ such that the following holds:
  \begin{itemize}
    \item For every node $v \in V$ with degree $\Delta$ that is incident to the
    edges $e_1,\dots,e_\Delta$, we have $\{ \ell(v,e_1),\dots,\ell(v,e_\Delta)
    \} \in \mathcal{V}$.
    \item For every edge $e = \{ u,v \} \in E$, $\{ \ell(u,e),\ell(v,e) \} \in
    \mathcal{E}$.
  \end{itemize}
  As before, if $T$ is labeled with a solution, then it is \emph{correctly
  labeled}.
  In light of these two types of requirements, it is also natural to speak of
  \emph{correctly labeled} nodes and edges.
\end{definition}

As in the preceding definition, leaves are unconstrained and may
be labeled arbitrarily.
As before, if such an LCL problem $\Pi$ on regular unrooted trees is clear from
the context, we refer to the elements of $\multich{\Sigma}{\Delta}$ as node
configurations and to those of $\multich{\Sigma}{2}$ as edge configurations.

\subsection{Models of Distributed Computing}
Next we formally define the \local model and its extensions.

\begin{definition}[Deterministic \local model]
    The deterministic \local model of distributed computing runs on a graph $G =
    (V,E)$ where each node $v\in V$ represents a computer and each edge a
    connection channel.
    Each node is labeled with a unique identifier.
    All nodes run the same distributed algorithm in parallel.
    Initially, a node is only aware of its own identifier and degree. 
    Computation proceeds in synchronous rounds, and in each round a node can send and receive a message to and from each
    neighbor and update its state. Eventually each node must stop and announce its local output (its part of
    the solution, e.g. in graph coloring its own color). The \emph{running time}, \emph{round complexity}, or \emph{locality} of the algorithm is the (worst-case) number of rounds $T(n)$ until the algorithm stops in any $n$-node graph.
\end{definition}

\begin{definition}[\Randlcl model]
    The \randlcl model is defined identically to the (deterministic) \local
    model, with the addition that each node has access to a private, infinite
    stream of random bits.
    Additionally, a \randlcl algorithm is required to succeed with high
    probability.
\end{definition}

\begin{definition}[\Detolcl model \cite{akbari_et_al:LIPIcs.ICALP.2023.10}]
    In the (deterministic) \onlinelocal model, an adversary chooses a sequence of nodes, $\sigma = (v_1, v_2, \dots, v_n)$, and reveals the nodes to the algorithm one at a time.
    The algorithm processes each node sequentially. Given an \onlinelocal algorithm with locality $T(n)$, when a node $v_i$ is revealed, the algorithm must choose an output for $v_i$
    based on the subgraph induced by the radius-$T(n)$ neighborhoods of $v_1, v_2, \dots, v_i$. In other words, the algorithm retains global memory.
\end{definition}

\begin{definition}[\Rolcl model \cite{akbari24_online_arxiv}]
    In the \rolcl model, an adversary first commits to a sequence of nodes,
    $\sigma = (v_1, v_2, \dots, v_n)$, and reveals the nodes to the algorithm
    one at a time, as in the \onlinelocal model.
    The algorithm runs on each $v_i$ and retains global memory, as in the
    \onlinelocal model. 
    Additionally, the algorithm has access to an infinite stream of random bits
    unbeknownst to the adversary; that is, the adversary \emph{cannot} change
    the order they present the remaining nodes based on the intermediate output
    of the algorithm.
    As in the \randlcl, the algorithm is required to succeed with high
    probability.
\end{definition}

\section{Sub-logarithmic Gap for LCLs in Rooted Regular Trees}
\label{sec:rooted-regular-sub-logarithmic}

In this section, we show that there are no LCL problems on rooted trees with
locality in the range $\omega(1) \text{---} o(\log n)$ in the \rolcl model.
Moreover, the problems that are solvable with locality $O(1)$ in \rolcl are exactly the same problems that are solvable with locality $O(\log* n)$ in the \local model.

\restateThmRootedSubLogarithmic*

We show this by showing that a locality-$o(\log n)$ \rolcl algorithm solving LCL problem~$\Pi$ implies the existence of a \emph{coprime certificate for $O(\log* n)$ solvability} (see \cref{def:rooted-tree-logstar-certificate}) that then implies that $\Pi$ is solvable with locality $O(\log* n)$ in the \local model~\cite{balliu23_locally_dc}, and hence with locality $O(1)$ in the \rolcl model.

Throughout this section, we consider an LCL problem~$\Pi = (\Delta, \Sigma, \VV)$ and a \rolcl algorithm~$\algo$ that solves~$\Pi$ with locality~$T(n) = o(\log n)$ with high probability.
We start by constructing a family of input instances and then argue that $\algo$
solving these instances produces a canonical labeling regardless of the
randomness.
Finally, we show how this canonical labeling yields the coprime certificate for $O(\log* n)$ solvability.

\subsection{Construction of Input Instances}

Let $d$ be a depth parameter that we fix later, and let $\Delta$ be the number of children of internal nodes.
We now construct the input instance as follows:
\begin{definition}[Family of input instances]
    \label{def:input-instance}
    To construct the family of input instances:
    \begin{enumerate}
        \item Construct $(\abs{\Sigma} + 1)$ chunks of trees, each containing $\Delta^{d+1}$ complete rooted trees of height~$2d$.
        Give the trees in each chunk an ordering.
        Let $M$ be the set of nodes in these trees at distance~$d$ from the root; we call these nodes \emph{middle nodes}.
        \item Choose a bit~$b$, and choose a node~$u$ which is at depth $d$ in any of the trees created in the previous phase.
        Choose a chunk~$C$ such that node~$u$ is not contained in chunk~$C$.
        \item The subtree rooted at $u$ has $\Delta^d$ leaf descendants.
        \begin{description}
            \item[If $b=0$:] Identify the roots of the first $\Delta^d$ trees of chunk~$C$ with the leaf descendants of $u$ in a consistent order (see bottom-left of \cref{fig:0uC1uC}).
            \item[If $b=1$:] Make the roots of the first $\Delta^{d+1}$ trees of chunk~$C$ the children of the leaf descendants of $u$ in a consistent order (see bottom-right \cref{fig:0uC1uC}).
        \end{description}
    \end{enumerate}
\end{definition}

Note that trees constructed in this way have $n = (\abs{\Sigma} + 1) \cdot \Delta^{d+1} \cdot \frac{\Delta^{2d + 1} - 1}{\Delta - 1}$ nodes.
Moreover, choosing $(b, u, C)$ uniquely fixes the construction.
Let $\PP$ be the set of choices for $(b, u, C)$.

\begin{observation}
    \label{obs:instance-card-lt-n}
    The number of instances is less than the number of nodes in each instance.
    In particular, $\abs{\PP} = 2 \abs{\Sigma} (\abs{\Sigma} + 1) \Delta^{2d+1} < n$.
\end{observation}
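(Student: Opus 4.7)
The plan is a direct counting argument in two parts. First, to compute $|\PP|$, I would enumerate the three choices in \cref{def:input-instance}: the bit $b$ contributes a factor of $2$; the node $u$ ranges over the depth-$d$ nodes of the $(|\Sigma|+1)\Delta^{d+1}$ trees across all chunks, each of which has $\Delta^d$ such nodes, giving $(|\Sigma|+1)\Delta^{2d+1}$ options; and $C$ must be one of the $|\Sigma|$ chunks not containing $u$. Multiplying yields $|\PP| = 2|\Sigma|(|\Sigma|+1)\Delta^{2d+1}$, as claimed.

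For the inequality $|\PP| < n$, I would use the explicit expression $n = (|\Sigma|+1)\Delta^{d+1}(\Delta^{2d+1}-1)/(\Delta-1)$ stated immediately before the observation. Since the construction is uninteresting for $\Delta = 1$ (the tree is a path), we may assume $\Delta \geq 2$, so the trivial geometric-series bound $(\Delta^{2d+1}-1)/(\Delta-1) \geq \Delta^{2d}$ gives $n \geq (|\Sigma|+1)\Delta^{3d+1}$. Substituting into the desired inequality and cancelling the common factor $(|\Sigma|+1)\Delta^{2d+1}$, the claim $|\PP| < n$ reduces algebraically to $2|\Sigma| < \Delta^d$, which holds whenever $d \geq \lceil \log_\Delta(2|\Sigma|+1) \rceil$.

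Because $|\Sigma|$ and $\Delta$ depend only on the fixed LCL $\Pi$, this imposes only a constant lower bound on $d$, which will be trivially satisfied once $d$ is fixed later as a function of the locality $T(n)$ and $n$ is taken sufficiently large. There is no real obstacle here; the only care required is in the bookkeeping, in particular making sure that $u$ is counted over the union of all chunks rather than within a single chunk, and that the exclusion of the chunk containing $u$ is reflected by the factor $|\Sigma|$ rather than $|\Sigma|+1$.
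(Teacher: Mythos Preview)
Your proposal is correct and matches the paper's (implicit) reasoning exactly: the paper does not spell out a proof of the observation, but immediately after it fixes $d > \log_\Delta(2|\Sigma|)$, which is precisely the condition $2|\Sigma| < \Delta^d$ that you derive as the reduced form of the inequality. Your counting of $|\PP|$ and your geometric-series lower bound on $n$ are both accurate and constitute the natural verification.
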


We can now fix depth parameter~$d$ such that $d > \log_\Delta (2 \abs{\Sigma})$ and $d > T(n)$.
Recall that $T(n) = o(\log n)$, and hence such $d$ exists.

\begin{figure}[t!]
    \centering
    \includegraphics[width=\textwidth]{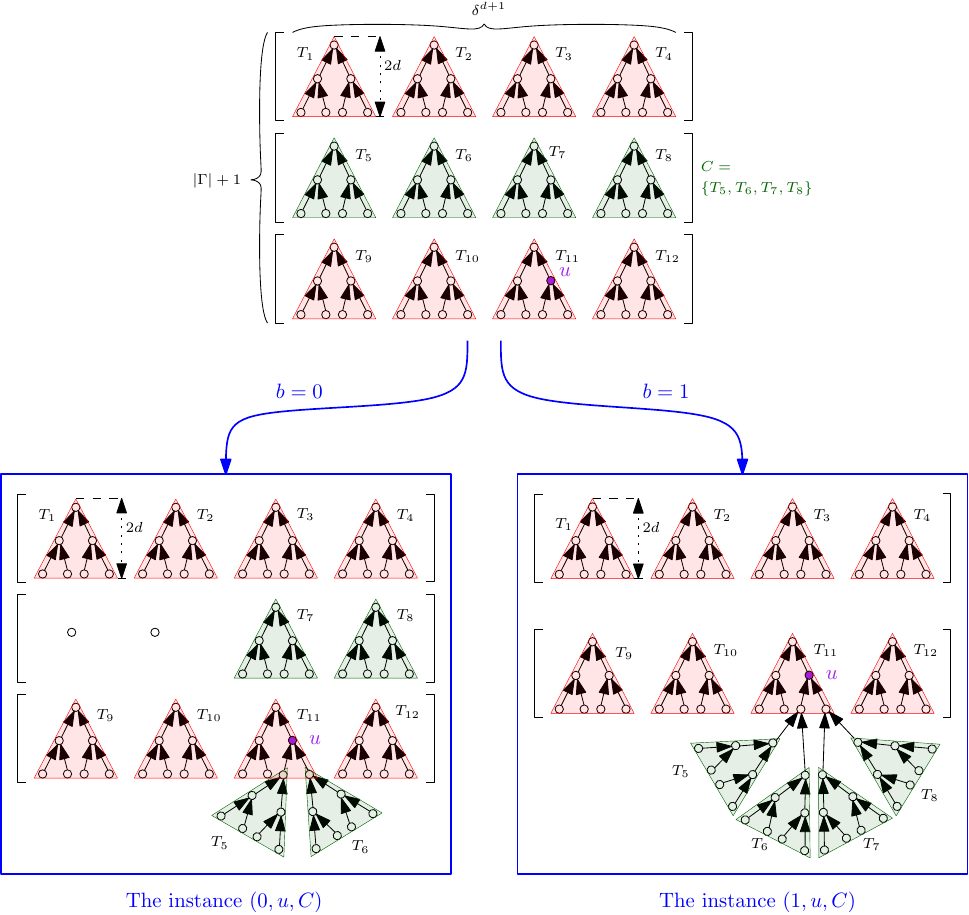}
    \caption{Visualization of two trees in the family of input instances. At the top we have $\abs{\Sigma}+1$ chunks of complete trees of depth~$2d$, each containing $\Delta^{d+1}$ trees. The green trees in the middle row represent the chosen chunk~$C$. On the bottom-left, we have $b=0$, in which case we identify the roots of the first $\Delta^d$ trees of~$C$ with the leaf descendants of node~$u$. On the bottom-right, we have $b=1$, in which case we make all trees in chunk~$C$ the children of the leaf descendants of node~$u$. See \cref{def:input-instance} for full construction.}
    \label{fig:0uC1uC}
\end{figure}

\subsection{Randomness of the Algorithm}

Throughout this discussion, we let $\Pr_{R}(\cdot)$ denote the probability of an event when the randomness is over some source $R$ for which the probability is being defined.

For each instance $(b, u, C)$, we fix the order middle nodes~$M$ such that it is consistent across all instances.
We then reveal these middle nodes to~$\algo$ in this order.
Let $S \in \Sigmabar = \Sigma^{\Delta^{2d+1}(\abs{\Sigma}+1)}$ be the random sequence of labels generated by $\algo$ for the middle nodes~$M$.
Since the locality of~$\algo$ is $T(n) < d$, the algorithm does not get to see the roots or the leaves of these height-$2d$ trees.
In particular, the algorithm does not get to know which instance $(b, u, C)$ we have chosen.
Hence $S$ must be independent of the choice of $(b, u, C)$.

\begin{observation}
    \label{obs:S-indp}
    For every choice $\sigmabar \in \Sigmabar$ of labeling middle nodes~$M$, and every instance $(b, u, C) \in \PP$, we must have
    \begin{equation*}
        \Pr_S(S = \sigmabar) = \Pr_S(S = \sigmabar \mid (b, u, C))
    \end{equation*}
    where $\Pr_S(S = \sigmabar)$ denotes the probability that $\algo$ produces output $\sigmabar$ for nodes~$M$, and $\Pr_S(S = \sigmabar \mid (b, u, C))$ denotes the same given that the input instance was~$(b, u, C)$.
\end{observation}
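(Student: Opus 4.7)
The plan is to show that $\algo$'s view while processing the middle nodes $M$ is identical across all instances $(b,u,C) \in \PP$, from which the claim follows immediately from the definition of a randomized algorithm. The key quantitative input is the inequality $T(n) < d$ baked into the construction.

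First, I would verify the distance bound. Every middle node $m \in M$ sits at depth exactly $d$ in a height-$2d$ subtree, hence at distance exactly $d$ from both the root of that subtree and from every leaf. Since $T(n) < d$, the radius-$T(n)$ neighborhood of $m$ is strictly contained in the interior of that single subtree and cannot include any node outside it. The only features distinguishing instances in $\PP$---the identity of $u$, the identity of $C$, and the bit $b$---manifest themselves solely at and beyond the leaves of $u$'s subtree, where chunk $C$ is either identified with or hung below those leaves. All such modifications therefore lie at distance at least $d$ from every middle node and are invisible to $\algo$ when it processes any $m_i$.

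Second, I would exploit the adversary's freedom in assigning unique identifiers. Since each $(b,u,C) \in \PP$ corresponds to a separate oblivious adversarial instance, we may (and do) stipulate that the same identifiers are used on the shared portion of the construction---the $(\abs{\Sigma}+1)$ chunks of height-$2d$ trees together with their middle nodes---across every instance. Combined with the distance bound from the previous step, this implies that when $\algo$ processes the $i$-th middle node, the subgraph it sees---the union of the radius-$T(n)$ balls around $m_1,\dots,m_i$ together with all identifiers visible in that region---is \emph{literally identical} across every instance in $\PP$.

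Finally, the observation follows: a randomized algorithm's output is a function only of its view and its random bits, and in \rolcl the random bits are independent of the adversary's choice by definition. Since the views are identical, the full output sequence $S$ over $M$ has the same distribution whether we condition on a particular $(b,u,C)$ or not. I do not expect any real obstacle; the only mildly delicate point is the identifier-consistency step, and it is handled simply by noting that the adversary chooses each instance independently and is free to align IDs on the shared part of the construction across the finite family $\PP$.
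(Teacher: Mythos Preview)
Your argument is correct and matches the paper's reasoning, which is simply the one-sentence justification preceding the observation: since $T(n) < d$, the algorithm cannot see the roots or leaves of the height-$2d$ trees and hence cannot distinguish the instances. One minor remark: your second step about aligning unique identifiers is unnecessary here, as the \rolcl model defined in this paper gives the algorithm only the induced subgraph (no identifiers), so the views are automatically identical once the structural containment is established.
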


Note that while the output~$S$ for nodes~$M$ is independent of $(b, u, C)$, the output that $\algo$ produces for the rest of the nodes of the input may be dependent of $(b, u, C)$.
We denote this random variable by $\Abuc$.

We now analyze the failure probability of~$\algo$ to show that there exists a fixed labeling~$\sigmabaropt$ for middle nodes~$M$ such that the labeling is still completable for each input instance $(b, u, C)$.
Let $\Afails$ denote the event that algorithm~$\algo$ fails to solve the problem.
By assumption, $\algo$ succeeds with high probability; hence $\Pr_{S, \Abuc}(\Afails) \le \frac{1}{n}$ for all choices of $(b, u, C) \in \PP$.
We can now pick a labeling of the middle nodes that minimizes the average failure probability:

\begin{definition}
    Let $\sigmabaropt$ be defined as follows:
    \begin{equation*}
        \sigmabaropt \in
        \argmin_{\sigmabar \in \Sigmabar, \Pr_S(S = \sigmabar) > 0} 
        \sum_{(b, u, C) \in \PP} \Pr_{\Abuc}(\Afails \mid (b, u, C), S = \sigmabar)
    \end{equation*}
    where ties are broken deterministically.
\end{definition}

This definition ensures the following:
Give that the algorithm labels nodes~$M$ with $\sigmabaropt$, the total probability of failure over all instances $(b, u, C)$ is minimized.
In some sense, $\sigmabaropt$ is the best labeling for~$M$ when the algorithm know nothing about the instance.
Note that~$\sigmabaropt$ is a concrete element of $\Sigmabar$ and hence does not depend on the randomness of the algorithm.
We can formalize this idea:

\begin{lemma}
    \label{lem:sol-always-exists}
    Regardless of the instance $(b, u, C) \in \PP$, if the labeling~$S$ of nodes~$M$ is $\sigmabaropt$, there exists a valid way to label the remaining nodes of the instance.
\end{lemma}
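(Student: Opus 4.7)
The plan is to derive a contradiction by averaging the failure probability of $\algo$ over all input instances and using the optimality of $\sigmabaropt$. Suppose for contradiction that there exists some instance $(b^*, u^*, C^*) \in \PP$ on which $\sigmabaropt$ admits no valid completion to the non-middle nodes. Then, conditioned on $S = \sigmabaropt$, every output that $\algo$ can produce on $(b^*, u^*, C^*)$ violates some constraint of $\Pi$, so $\Pr_{\Abuc}(\Afails \mid (b^*, u^*, C^*), S = \sigmabaropt) = 1$. In particular, this term alone contributes at least $1$ to the sum $f(\sigmabaropt) := \sum_{(b,u,C) \in \PP} \Pr_{\Abuc}(\Afails \mid (b,u,C), S = \sigmabaropt)$.

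The second step is to bound $f(\sigmabaropt)$ strictly above by $1$. The high-probability guarantee gives $\Pr_{S, \Abuc}(\Afails \mid (b,u,C)) \le 1/n$ for every $(b,u,C) \in \PP$. Conditioning on the value of $S$ and invoking Observation~\ref{obs:S-indp} to replace $\Pr_S(S = \sigmabar \mid (b,u,C))$ by $\Pr_S(S = \sigmabar)$ rewrites this as
\begin{equation*}
    \sum_{\sigmabar \in \Sigmabar} \Pr_S(S = \sigmabar) \cdot \Pr_{\Abuc}\bigl(\Afails \mid (b,u,C), S = \sigmabar\bigr) \le \frac{1}{n}.
\end{equation*}
Summing over $(b,u,C) \in \PP$ and swapping the two summations expresses the left-hand side as a convex combination (over $\sigmabar$ with positive probability) of the quantities $f(\sigmabar)$. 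Since $\sigmabaropt$ is by definition a minimizer of $f$ over precisely these $\sigmabar$, this convex combination is bounded below by $f(\sigmabaropt)$. Combined with Observation~\ref{obs:instance-card-lt-n}, we obtain $f(\sigmabaropt) \le \abs{\PP}/n < 1$, contradicting the lower bound from the previous step.

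The main subtlety I expect is making the minimization step precise: since $\sigmabaropt$ is only defined as a minimizer among $\sigmabar$ with $\Pr_S(S = \sigmabar) > 0$, I need to observe that terms with $\Pr_S(S = \sigmabar) = 0$ contribute zero to the convex combination and hence do not spoil the lower bound. Apart from this bookkeeping, the argument is a clean averaging step that crucially uses Observation~\ref{obs:S-indp} to decouple the middle-node labels from the identity of the hidden instance---this is what lets the algorithm's conditional failure probabilities interact coherently across different instances, and it is the reason the construction packs fewer than $n$ instances into a single tree (Observation~\ref{obs:instance-card-lt-n}) so that the $1/n$ failure budget suffices.
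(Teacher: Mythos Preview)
Your proposal is correct and follows essentially the same approach as the paper's proof: both condition on $S$, invoke Observation~\ref{obs:S-indp}, sum over all instances, exchange the order of summation, use the minimality of $\sigmabaropt$ to lower-bound the resulting convex combination by $f(\sigmabaropt)$, and then apply Observation~\ref{obs:instance-card-lt-n} to conclude $f(\sigmabaropt) < 1$. The paper phrases this as a direct derivation that each conditional failure probability is strictly below $1$, whereas you frame it as a contradiction, but the underlying computation is identical; your remark about the zero-probability terms is a nice explicit acknowledgment of a detail the paper leaves implicit.
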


\begin{proof}
    We start by noting that
    \begin{equation*}
        \frac{1}{n}
        \ge
        \Pr_{S, \Abuc}(\Afails)
    \end{equation*}
    since $\algo$ works correctly with high probability.
    We can now expand the probability to be conditional over the initial labeling~$S$:
    \begin{equation*}
        \Pr_{S, \Abuc}(\Afails)
        =
        \sum_{\sigmabar \in \Sigmabar} \Pr_{\Abuc} (\Afails \mid (b, u, C), S = \sigmabar) \Pr_S(S = \sigmabar \mid (b, u, C)) .
    \end{equation*}
    Next, we apply \cref{obs:S-indp} to get
    \begin{equation*}
        \frac{1}{n}
        \ge
        \sum_{\sigmabar \in \Sigmabar} \Pr_{\Abuc} (\Afails \mid (b, u, C), S = \sigmabar) \Pr_S(S = \sigmabar) ,
    \end{equation*}
    and then we sum over all choices of $(b, u, C)$ on both sides to get
    \begin{equation*}
        \frac{\abs{\PP}}{n} = \sum_{(b, u, C) \in \PP} \frac{1}{n}
        \ge
        \sum_{(b, u, C) \in \PP} \sum_{\sigmabar \in \Sigmabar} \Pr_{\Abuc} (\Afails \mid (b, u, C), S = \sigmabar) \Pr_S(S = \sigmabar)
    \end{equation*}
    Exchanging the order of summation and noting that $\sigmabaropt$ minimizes \[\sum_{(b, u, C) \in \PP} \Pr_{\Abuc} (\Afails \mid (b, u, C), S = \sigmabar),\] we get
    \begin{equation*}
        \frac{\abs{\PP}}{n}
        \ge
        \sum_{(b, u, C) \in \PP} \Pr_{\Abuc} (\Afails \mid (b, u, C), S = \sigmabaropt).
    \end{equation*}
    Recalling that $\frac{\abs{\PP}}{n} < 1$ by \cref{obs:instance-card-lt-n} and that all probabilities are non-negative numbers, we complete our calculation:
    \begin{equation*}
        1 > \Pr_{\Abuc} (\Afails \mid (b, u, C), S = \sigmabaropt) \quad \text{for each } (b, u, C) \in \PP.
    \end{equation*}
    Hence for each instance $(b, u, C)$, the algorithm fails with probability
    strictly less than 1.
    Thus it succeeds with non-zero probability, and hence a correct labeling must exist.
\end{proof}

\subsection{Getting a Coprime Certificate as a Subset of All Such Instances}

We are now ready to extract from algorithm~$\algo$ a coprime certificate for $O(\log* n)$ solvability.
The idea is to pick a subset of input instances~$\PP$ that---along with the labeling~$\sigmabaropt$---form the pairs of sequences of trees of the certificate.
We defer the proof of why this implies the existence of a locality-$O(\log* n)$ \local algorithm to previous work~\cite{balliu23_locally_dc}.

\begin{definition}[Coprime certificate for $O(\log* n)$ solvability \cite{balliu23_locally_dc}]
    \label{def:rooted-tree-logstar-certificate}
    Let $\Pi = (\Delta, \Sigma, C)$ be an LCL problem. A certificate for $O(\log* n)$ solvability of $\Pi$ consists of labels $\Sigma_{\mathcal{T}} = \{\sigma_1, \ldots, \sigma_t\} \subseteq \Sigma$, a depth pair $(d_1, d_2)$ and a pair of sequences $\mathcal{T}^1$ and $\mathcal{T}^2$ of $t$~labeled trees such that
    \begin{enumerate}
        \item The depths $d_1$ and $d_2$ are coprime.
        
        \item Each tree of $\mathcal{T}^1$ (resp. $\mathcal{T}^2$) is a complete $\Delta$-ary tree of depth $d_1 \ge 1$ (resp. $d_2 \ge 1$).
        
        \item Each tree is labeled by labels from $\Sigma$ correctly according to problem $\Pi$.
        
        \item Let $\bar{\mathcal{T}}^1_i$ (resp. $\bar{\mathcal{T}}^2_i$) be the tree obtained by starting from $\mathcal{T}^1_i$ (resp. $\mathcal{T}^2_i$) and removing the labels of all non-leaf nodes. It must hold that all trees $\bar{\mathcal{T}}^1_i$ (resp. $\bar{\mathcal{T}}^2_i$) are isomorphic, preserving the labeling. All the labels of the leaves of $\bar{\mathcal{T}}^1_i$ (resp. $\bar{\mathcal{T}}^2_i$) must be from set $\Sigma_{\mathcal{T}}$.

        \item The root of tree $\mathcal{T}^1_i$ (resp. $\mathcal{T}^2_i$) is labeled with label $\sigma_i$.
    \end{enumerate}
\end{definition}

Let $\Sigmacert = \{\sigma_1, \sigma_2, \ldots, \sigma_t\} \subseteq \Sigma$ be the set of labels appearing in $\sigmabaropt$, and let $(u_1, u_2, \ldots, u_t)$ be some middle nodes in the input instances such that node~$u_i$ has label~$\sigma_i$ according to $\sigmabaropt$.
Note that there are $\abs{\Sigma} + 1$ chunks whereas $t = \abs{\Sigmacert} < \abs{\Sigma} + 1$. 
Therefore, we get the following result by the pigeonhole principle:
\begin{observation}\label{obs:C_0}
    There is a chunk $C_0$ which does not contain any node in $\{u_1, \ldots, u_t\}$.
\end{observation}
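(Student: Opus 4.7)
The proof is a one-line pigeonhole argument, and I would structure the plan in two short steps.

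First, I would verify that each $u_i$ lies in exactly one chunk. This is immediate from \cref{def:input-instance}: the $\abs{\Sigma}+1$ chunks partition the collection of complete height-$2d$ trees whose depth-$d$ vertices constitute the set $M$ of middle nodes. Since the $u_i$ were selected from $M$ as representatives of the labels in $\Sigmacert$ under $\sigmabaropt$, each $u_i$ belongs to a unique chunk, so the $t$ nodes $u_1, \ldots, u_t$ collectively occupy at most $t$ distinct chunks.

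Second, I would invoke $t = \abs{\Sigmacert} \le \abs{\Sigma}$, which holds because $\Sigmacert \subseteq \Sigma$ by definition. Combined with the fact that there are $\abs{\Sigma}+1$ chunks in total, this leaves at least $(\abs{\Sigma}+1) - t \ge 1$ chunks that contain no $u_i$; designate any one such chunk as $C_0$.

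I do not anticipate any real obstacle — the extra $(+1)$ in the number of chunks was baked into \cref{def:input-instance} precisely to provide the slack this pigeonhole step needs. The only sanity check worth making is that the $u_i$ are genuinely middle nodes (rather than, say, nodes introduced by the bit-$b$ gluing at the leaf descendants of $u$), and this is immediate from the way the $u_i$ were chosen as elements of $M$ labelled by $\sigmabaropt$.
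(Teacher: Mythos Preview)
Your proposal is correct and matches the paper's own argument exactly: the paper simply notes that there are $\abs{\Sigma}+1$ chunks while $t = \abs{\Sigmacert} < \abs{\Sigma}+1$, and concludes by the pigeonhole principle. Your additional sanity check that each $u_i$ is a middle node (hence lies in a unique chunk) is a harmless elaboration of the same one-line argument.
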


With these definitions of labels~$\Sigmacert = (\sigma_1, \ldots, \sigma_t)$, chunk~$C_0$, and nodes~$(u_1, \ldots, u_t)$, we are ready to prove our main result:

\begin{theorem}\label{thm:cert-exists}
    For an LCL problem $\Pi$ without inputs on rooted regular trees, if there is a \rolcl algorithm $\mathcal A$ with locality $T(n) = o(\log n)$, then there exists a coprime certificate of $O(\log* n)$ solvability for $\Pi$, with $\Sigmacert$ as the subset of labels. 
\end{theorem}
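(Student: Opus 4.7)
The strategy is to extract the certificate trees from the valid completions of $\sigmabaropt$ guaranteed by \cref{lem:sol-always-exists}, exploiting the one-level depth difference between the $b=0$ and $b=1$ variants of the instance construction in \cref{def:input-instance} to obtain trees of coprime depths $d_1 = 2d$ and $d_2 = 2d+1$; indeed $\gcd(2d, 2d+1) = 1$. For each label $\sigma_i \in \Sigmacert$ we use the middle node $u_i$ already fixed so that $\sigmabaropt$ assigns $\sigma_i$ to $u_i$, and consider the two instances $(0, u_i, C_0)$ and $(1, u_i, C_0)$.

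By \cref{lem:sol-always-exists} the partial labeling $\sigmabaropt$ extends to a valid $\Pi$-labeling of each of these two instances. In the $b = 0$ instance, the subtree rooted at $u_i$ is a complete $\Delta$-ary tree of depth $3d$ (the depth-$d$ subtree of $u_i$ has its $\Delta^d$ leaves identified with the roots of the first $\Delta^d$ depth-$2d$ trees of $C_0$); truncating this to depth $2d$ yields $\mathcal{T}^1_i$, whose $\Delta^{2d}$ leaves are exactly the middle nodes of the first $\Delta^d$ trees of $C_0$. In the $b = 1$ instance, an extra layer appears because the $\Delta^{d+1}$ roots of $C_0$'s trees become children of $u_i$'s depth-$d$ leaves, so the subtree rooted at $u_i$ has depth $3d+1$; truncating to depth $2d+1$ yields $\mathcal{T}^2_i$, whose $\Delta^{2d+1}$ leaves are the middle nodes of the first $\Delta^{d+1}$ trees of $C_0$. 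Both truncated trees remain $\Pi$-correctly labeled, because truncation only removes subtrees below a leaf and the constraints of \cref{def:lcl-rooted} involve only a node together with its $\Delta$ children.

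With $(d_1, d_2) = (2d, 2d+1)$ and the sequences $(\mathcal{T}^1_i)_{i=1}^t$, $(\mathcal{T}^2_i)_{i=1}^t$ in hand, the five properties of \cref{def:rooted-tree-logstar-certificate} can be checked in turn. Coprimality (1), the complete $\Delta$-ary shape of depth $\ge 1$ (2), and correctness of the labeling (3) are built into the construction; the root-label property (5) holds since $u_i$ is a middle node with $\sigmabaropt$-label $\sigma_i$. The main subtlety is condition (4), and this is exactly where \cref{obs:C_0} is used: because $u_1, \dots, u_t \notin C_0$, the set of middle nodes of $C_0$ and their $\sigmabaropt$-labels are the same regardless of $i$, and because the attachment of the $C_0$-trees to $u$'s leaf descendants uses a consistent order independent of the chosen $u$ (cf.\ \cref{def:input-instance}), the labeled-leaf structures $\bar{\mathcal{T}}^1_i$ (respectively $\bar{\mathcal{T}}^2_i$) are isomorphic across $i$. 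Finally, the leaf labels lie in $\Sigmacert$ since they are produced by $\sigmabaropt$, whose image is $\Sigmacert$ by definition.
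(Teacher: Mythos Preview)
Your proof is correct and follows essentially the same approach as the paper: you build $\mathcal{T}^1_i$ and $\mathcal{T}^2_i$ from the valid completions of $\sigmabaropt$ on the instances $(0,u_i,C_0)$ and $(1,u_i,C_0)$, truncated at depths $2d$ and $2d+1$, and then verify the five conditions of \cref{def:rooted-tree-logstar-certificate}. Your write-up is in fact more explicit than the paper's---you spell out the depth counts, why truncation preserves correctness, and precisely where \cref{obs:C_0} and the ``consistent order'' clause of \cref{def:input-instance} are needed for condition~(4)---whereas the paper dispatches the verification with ``it is easy to check.''
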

\begin{proof}
    For each $i \in \{1, \ldots, t\}$, consider the instance $(0, u_i, C_0)$ with $S=\sigmabaropt$ and some valid way to label the remaining nodes; such a labeling exists due to~\cref{lem:sol-always-exists}.
    The subtree rooted at $u_i$ contains the first $\Delta^{d}$ trees of the chunk $C_0$.
    Let the depth-$2d$ subgraph rooted at $u_i$ be tree~$\mathcal T^1_i$.
    See \cref{fig:cert2d} for a visualization.

    Note that for each $i \in \{1, \ldots, t\}$, tree~$\mathcal T^1_i$ has a root labeled with $\sigma_i$, and the leaves are labeled with labels from set~$\Sigmacert$ in an identical way.
    Hence $\mathcal T^1$ form the first sequence of the certificate.
    
    Similarly, we consider $(1, u_i, C_0)$ for all $i \in \{1, \ldots, t\}$.
    This time we let the depth-$(2d+1)$ subtree rooted at $u_i$ be tree~$\mathcal T^2_i$.
    See \cref{fig:cert2d+1} for a visualization.
    The sequence $\mathcal T^2$ forms the second sequence of the certificate, again by similar arguments.

    It is easy to check that $\Sigmacert$, $2d$ and $2d+1$ and sequences $\mathcal T^1$ and $\mathcal T^2$ indeed form a coprime certificate for $O(\log* n)$ solvability for problem~$\Pi$. \qedhere

    \begin{figure}[!t]  
        \centering    
        \begin{subfigure}[t]{.39\textwidth}
            \centering
            \includegraphics[scale=.8]{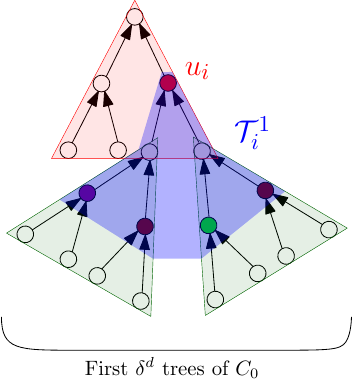}
            \caption{Trees $\mathcal T_i^1$ of depth $2d$ for certificate from instance
            $(0, u_i, C_0)$.}
            \label{fig:cert2d}
        \end{subfigure}
        \hfill
        \begin{subfigure}[t]{.59\textwidth}
            \centering
            \includegraphics[scale=.8]{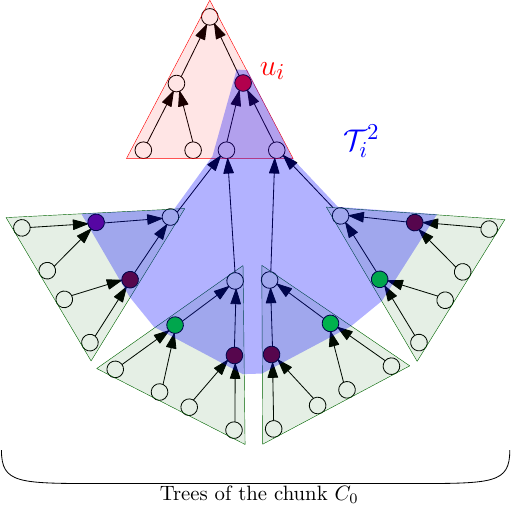}
            \caption{Trees $\mathcal T_i^1$ of depth $2d+1$ for certificate from
            instance $(1, u_i, C_0)$.}
            \label{fig:cert2d+1}
        \end{subfigure}
        \caption{Construction of trees of coprime certificate for $O(\log* n)$ solvability. The blue shaded region represents the tree of the sequence and extends from the labeled middle node~$u_i$ to the middle nodes of the trees hanging from its leaf descendants.}
    \end{figure}
\end{proof}

We can now proof the main result of this section:
\restateThmRootedSubLogarithmic*

\begin{proof}
    A locality-$o(\log n)$ \rolcl algorithm for problem~$\Pi$ implies the existence of certificate of $O(\log* n)$ solvability of~$\Pi$ by \cref{thm:cert-exists}.
    This further implies existence of \local algorithm solving $\Pi$ with locality $O(\log* n)$~\cite{balliu23_locally_dc}, which in turn implies the existence of \slocal algorithm solving $\Pi$ with locality $O(1)$~\cite{DBLP:conf/stoc/GhaffariKM17}.
    Since the \rolcl is stronger than the \slocal model \cite{akbari24_online_arxiv}, this implies the existence of a locality-$O(1)$ \rolcl algorithm that solves $\Pi$.
    Therefore, there can be no LCL problem on rooted regular trees where the optimal algorithm has a locality of $T(n)$ which is both $o(\log n)$ and $\omega(1)$.
\end{proof}



\section{Definition of Depth}\label{sec:depth}

Before we continue, we recall notions pertaining to the analysis of
LCL problems in the regular trees case (for both rooted and unrooted trees).
We stress the tools that we present here concern only the \emph{description} of
the problems and thus may be applied no matter what is our computational model
of interest.
Most definitions are morally the same for both rooted and unrooted trees (and,
indeed, are even named the same), though the latter turns out to be more
convoluted.
Hence we will always introduce the relevant concept first for rooted trees.
In fact, to obtain a clearer understanding of the concepts involved, an
unfamiliar reader may prefer to postpone the parts concerning unrooted trees
to a reread of the text altogether.

\subparagraph*{Automata.}

The definitions of LCL problems we introduced above are useful in that they
allow us to associate an LCL problem $\Pi$ with an \emph{automaton} that encodes
correct solutions to $\Pi$.
Conceptually speaking, the automaton describes the labels that we observe as we
are traversing a path in a correctly labeled tree.
For rooted trees this is relatively straightforward to specify; in the unrooted
case we do not have a sense of direction and hence we must encode that in the
nodes by using two components, one specifying which label we see when entering
and the other which label we see when exiting the node.
Alternatively, it may also be instructive to imagine the automaton as describing
paths on the \emph{line graph} of the original tree (i.e., we are traversing the
tree \emph{edge by edge} instead of node by node).

\begin{definition}[Automaton for an LCL on regular trees]%
  \label{def:lcl-automaton}
  Let $\Pi$ be an LCL problem on regular trees.
  \begin{itemize}
    \item If $\Pi = (\Delta, \Sigma, \mathcal{V})$ is a problem on rooted trees,
    then the \emph{automaton} $\mathcal{M}_\Pi$ associated with $\Pi$ is the
    digraph with $\Sigma$ as its set of nodes and where we have an edge
    $(\sigma,\sigma')$ if and only if there is $(\sigma,S) \in \mathcal{V}$ such
    that $\sigma' \in S$.
    \item If $\Pi = (\Delta, \Sigma, \mathcal{V}, \mathcal{E})$ is a problem on
    unrooted trees, then the \emph{automaton} $\mathcal{M}_\Pi$ associated with
    $\Pi$ is the digraph defined as follows:
    \begin{itemize}
      \item The nodes of $\mathcal{M}_\Pi$ are the elements $(x,y) \in \Sigma
      \times \Sigma$ for which $\{ x,y \} \in \multich{\Sigma}{2}$ is a
      sub-multiset of some node configuration in $\mathcal{V}$.
      \item There is an edge between $(x_1,x_2)$ and $(y_1,y_2)$ in
      $\mathcal{M}_\Pi$ if and only if $\{ x_2,y_1 \} \in \mathcal{E}$.
    \end{itemize}
  \end{itemize}
\end{definition}

\subparagraph*{Preliminaries.}

For every LCL problem $\Pi$ on regular trees, there is a fixed quantity $d_\Pi$
that depends only on the problem description and which essentially captures its
complexity.
This quantity is called the \emph{depth} of $\Pi$.
The precise definition is rather involved, so here we attempt to keep it brief
and self-contained; for a more extensive treatment, see for instance
\cite{balliu2022efficient}.
As mentioned before, the definitions for rooted and unrooted trees differ in the
details but are otherwise conceptually very similar to one other.

Let us now fix an LCL problem $\Pi$ on regular trees.
The following is our roadmap in order to define the depth $d_\Pi$:
\begin{enumerate}
  \item First we define what is meant by taking the restriction of $\Pi$ to a
  set of labels (in the rooted case) or set of node configurations (in the
  unrooted case).
  \item Next we define two operations $\trim$ and $\flexSCC$ that induce two
  different kinds of restrictions that we may take.
  \item By alternating between the two, we obtain so-called good sequences of
  problem restrictions.
  \item Finally, $d_\Pi$ is defined to be exactly the maximum length of such a
  good sequence.
\end{enumerate}

\subparagraph*{Restrictions.}
In the case of rooted trees, a restriction is defined in terms of a
\emph{subset} of labels; that is, we are simply considering the same problem but
allowing only a subset of labels to be used in the solution.
In contrast, in unrooted trees, restrictions are defined by a \emph{set of
permissible pairs} of labels and we allow only node configurations to be used
where \emph{every} pair of labels present in the configuration is permissible.

\begin{definition}[Restriction of an LCL on regular trees]%
  \label{def:lcl-restriction}
  Let $\Pi$ be an LCL on regular trees. 
  \begin{itemize}
    \item When $\Pi = (\Delta, \Sigma, \mathcal{V})$ is a problem on rooted
    trees, the \emph{restriction} of $\Pi$ to a subset $\Sigma' \subseteq
    \Sigma$ of labels is the problem $\rest{\Pi}{\Sigma'} = (\Delta, \Sigma',
    \mathcal{V}')$ where $\mathcal{V}'$ consists of all pairs $(\sigma, S) \in
    \mathcal{V}$ for which $\sigma \in \Sigma'$ and $S \subseteq \Sigma'$.
    \item If $\Pi = (\Delta, \Sigma, \mathcal{V}, \mathcal{E})$ is a problem on
    unrooted trees, then the \emph{restriction} of $\Pi$ to $\mathcal{D}
    \subseteq \multich{\Sigma}{2}$ is the problem $\rest{\Pi}{\mathcal{D}} =
    (\Delta, \Sigma, \mathcal{V}', \mathcal{E})$ where $\mathcal{V}' \subseteq
    \mathcal{V}$ is maximal such that $\multich{C}{2} \subseteq \mathcal{D}$
    holds for every $C \in \mathcal{V}'$.
  \end{itemize}
\end{definition}

\subparagraph*{\boldmath The $\trim$ operation.}
Next we define the $\trim$ operation, which restricts the set of labels (or, in
the case of unrooted trees, node configurations) to those that can be used
at the root of a complete tree of arbitrary depth.
Although this sounds like a precise definition, we stress that \enquote{complete
tree} is in fact an ambiguous term in this context since it has different
meanings depending on which LCL formalism we are (i.e., rooted or unrooted
trees).
We settle this matter upfront in a separate definition before turning to that of
$\trim$ proper.

\begin{definition}[Complete regular tree]%
  \label{def:complete-tree}
  Let $i \in \N_+$.
  With $T_i$ we denote the tree of depth $i$ where the root has degree $\Delta -
  1$ and every other inner vertex has degree $\Delta$.
  Meanwhile, $T_i^\ast$ denotes the tree of depth $i$ where every inner vertex
  (including the root) has degree $\Delta$.
\end{definition}

Note that, technically, $T_i$ is not regular.
Nevertheless we can add a single parent node of degree $1$ to the root in order
to make it so.
(Since it has degree $1$, this new node is unconstrained.)
We avoid doing so in order to be able to refer to the root of $T_i$ more easily.

\begin{definition}[$\trim$]%
  \label{def:trim}
  Let $\Pi$ be an LCL on regular trees of degree $\Delta$. 
  \begin{itemize}
    \item If $\Pi = (\Delta, \Sigma, \mathcal{V})$ is a problem on rooted trees,
    then $\trim(\cdot)$ maps subsets of $\Sigma$ again to subsets of $\Sigma$.
    For $\Sigma' \subseteq \Sigma$, $\trim(\Sigma')$ is the set of all $\sigma
    \in \Sigma'$ for which, for every $i \in \N_+$, there is a solution of $\Pi$
    to $T_i$ such that:
    \begin{itemize}
      \item The root is labeled with $\sigma$.
      \item Every other vertex (including the leaves) is labeled with a label
      from $\Sigma'$.
    \end{itemize}
    \item When $\Pi = (\Delta, \Sigma, \mathcal{V}, \mathcal{E})$ is a problem
    on unrooted trees, $\trim(\cdot)$ maps subsets of $\mathcal{V}$ again to
    subsets of $\mathcal{V}$.
    Namely, for $\mathcal{D} \subseteq \mathcal{V}$, $\trim(\mathcal{D})$ is the
    set of node configurations $C \in \mathcal{D}$ for which, for every $i \in
    \N_+$, there is a solution of $\Pi$ to $T_i^\ast$ such that:
    \begin{itemize}
      \item The root has $C$ as its node configuration.
      \item Every other vertex (except for the leaves, which are always
      unrestricted) has an element of $\mathcal{D}$ as its node configuration.
    \end{itemize}
  \end{itemize}
\end{definition}

Hence $\trim$ restricts the sets of labels (or node configurations, in the case
of unrooted trees) that can be placed at the root of the tree.
Note that the rest of the tree does not need to be labeled according to what
ends up being placed in the $\trim$ set but only to the set $\Sigma'$ of labels
(or set $\mathcal{D}$ of node configurations) that we started with.

\subparagraph*{\boldmath The $\flexSCC$ operation.}

In contrast to the $\trim$ operation, which is defined on the basis of labelings
of the complete tree, $\flexSCC$ is computed solely by analyzing (the graph of)
the automaton $\mathcal{M}_\Pi$ associated with $\Pi$.

For concreteness, let us focus on the case of rooted trees.
Given some subset $\Sigma' \subseteq \Sigma$ of labels, we start by considering
the restriction of $\Pi$ to $\Sigma'$, in which case we obtain a restricted LCL
problem $\Pi' = \rest{\Pi}{\Sigma'}$.
Notice the automaton $\mathcal{M}_{\Pi'}$ associated with $\Pi$ is contained in
$\mathcal{M}_\Pi$.
We are interested in the strongly connected components of $\mathcal{M}_{\Pi'}$
and in particular those whose nodes can be reached from one another in a
\emph{flexible} way.
Flexibility here is meant in the sense of the walk's length, namely that one may
reach any vertex from any other one in any desired (large enough) number of
steps.
Conceptually, each such component of $\mathcal{M}_{\Pi'}$ gives us a strategy
for locally filling in labels for arbitrarily long paths using just the set
$\Lambda \subseteq \Sigma'$ of labels in the component.

This notion of flexibility is specified in the next definition.

\begin{definition}[Path-flexibility]%
  \label{def:path-flexible}
  Let $\Pi$ be an LCL problem on regular trees (of either kind), and let
  $\mathcal{M}_\Pi$ be the automaton associated with it.
  A subset $U$ of vertices of $\mathcal{M}_\Pi$ is said to be
  \emph{path-flexible} if there is a constant $K \in \N_+$ such that, for any
  choice of vertices $u,v \in U$, there is a walk of length $k \ge K$ from $u$
  to $v$ that only visits vertices in $U$.
\end{definition}

With this notion we now define $\flexSCC$.
As already stressed several times, the definition for unrooted trees is more
involved since we have to manipulate multisets, but morally it yields the same
operation as in the rooted case.

\begin{definition}[$\flexSCC$]%
  \label{def:flexscc}
  Let $\Pi$ be an LCL on regular trees. 
  \begin{itemize}
    \item If $\Pi = (\Delta, \Sigma, \mathcal{V})$ is a problem on rooted trees,
    then $\flexSCC(\cdot)$ maps subsets of $\Sigma$ to sets of subsets of
    $\Sigma$.
    Namely, for $\Sigma' \subseteq \Sigma$, $\flexSCC(\Sigma')$ is obtained as
    follows: 
    First take the restriction $\Pi' = \rest{\Pi}{\Sigma'}$ of $\Pi$ to
    $\Sigma'$ and construct the automaton $\mathcal{M}_{\Pi'}$ associated with
    it.
    Then $\flexSCC(\Sigma')$ is the set of strongly connected components of
    $\mathcal{M}_{\Pi'}$ that are path-flexible.
    \item When $\Pi = (\Delta, \Sigma, \mathcal{V}, \mathcal{E})$ is a problem
    on unrooted trees, $\flexSCC(\cdot)$ maps subsets of $\mathcal{V}$ to sets
    of elements of $\multich{\Sigma}{2}$.
    For $\mathcal{V}' \subseteq \mathcal{V}$, $\flexSCC(\mathcal{V}')$ is
    constructed as follows:
    First determine the restriction $\Pi' = \rest{\Pi}{\mathcal{V}'}$ of $\Pi$
    to $\mathcal{V}'$ as well as the automaton $\mathcal{M}_{\Pi'}$ associated
    with it.
    Then $\flexSCC(\mathcal{V}')$ is \enquote{morally} the set of strongly
    connected components of $\mathcal{M}_{\Pi'}$ that are path-flexible, but
    lifted back to a set over elements of $\multich{\Sigma}{2}$ (instead of
    elements of $\Sigma \times \Sigma$).
    Concretely we have that $\flexSCC(\mathcal{V}')$ is the set that contains
    every $\mathcal{D} \in \multich{\Sigma}{2}$ for which $U_{\mathcal{D}} = \{
    (x,y) \in V(\mathcal{M}_\Pi) \mid \{ x,y \} \in \mathcal{D} \}$ is a
    path-flexible component in $\mathcal{M}_{\Pi'}$.
  \end{itemize}
\end{definition}

\subparagraph*{Good sequences and depth.}
With the above we may define so-called \emph{good sequences} that are obtained
by applying the two operations $\trim$ and $\flexSCC$ alternatingly.

\begin{definition}[Good sequences]%
  \label{def:good-sequence}
  Let $\Pi$ be an LCL problem on regular trees.
  \begin{itemize}
    \item When $\Pi = (\Delta, \Sigma, \mathcal{V})$ is a problem on rooted
    trees, a \emph{good sequence} is a tuple $s = (\Sigma_1^{\sR},
    \Sigma_1^{\sC}, \dots, \Sigma_k^{\sR})$ of subsets of $\Sigma$ where:
    \begin{itemize}
      \item $\Sigma_i^{\sR} = \trim(\Sigma_{i-1}^{\sC})$, where $\Sigma_0^{\sC}
      = \Sigma$.
      \item $\Sigma_i^{\sC} \in \flexSCC(\Sigma_i^{\sR})$ is a path-flexible
      component in the automaton $\mathcal{M}_{\Pi_i}$ associated with $\Pi_i =
      (\Delta, \Sigma_i^{\sR}, \mathcal{V})$.
    \end{itemize}
    \item When $\Pi = (\Delta, \Sigma, \mathcal{V}, \mathcal{E})$ is a problem
    on unrooted trees, a \emph{good sequence} is a tuple $s = (\mathcal{V}_1,
    \mathcal{D}_1, \dots, \mathcal{V}_k)$ alternating elements of
    $\multich{\Sigma}{\Delta}$ and $\multich{\Sigma}{2}$ where:
    \begin{itemize}
      \item $\mathcal{V}_i$ is obtained by the following procedure:
      \begin{enumerate} 
        \item Restrict the problem $\Pi_{i-1} = (\Delta, \Sigma,
        \mathcal{V}_{i-1}, \mathcal{E})$ to $\mathcal{D}_i$ (where
        $\mathcal{V}_0 = \mathcal{V}$ and $\mathcal{D}_0 = \multich{\Sigma}{2}$)
        to obtain a problem $\Pi_{i-1}' = \rest{\Pi_{i-1}}{\mathcal{D}_i}$ with
        node constraints $\mathcal{V}_{i-1}'$.
        \item Apply trim, that is, set $\mathcal{V}_i =
        \trim(\mathcal{V}_{i-1}')$.
      \end{enumerate}
      \item $\mathcal{D}_i \in \flexSCC(\mathcal{V}_i)$ is a path-flexible
      component in the automaton $\mathcal{M}_{\Pi_i}$ associated with $\Pi_i =
      (\Delta, \Sigma, \mathcal{V}_i, \mathcal{E})$.
    \end{itemize}
  \end{itemize}
  In both cases we refer to $k$ as the \emph{length} of $s$.
\end{definition}

Note that the step of performing $\trim$ (in both contexts) is deterministic,
whereas picking a path-flexible strongly connected component might be a
non-deterministic one.

Finally the depth $d_\Pi$ is defined as the length of the longest good sequence.

\begin{definition}[Depth]%
  \label{def:lcl-depth}
  Let $\Pi$ be an LCL problem on regular trees (of either kind).
  The \emph{depth} $d_{\Pi}$ of $\Pi$ is the largest integer $k$ for which there
  is a good sequence for $\Pi$ of length $k$.
  If no good sequence exists, then $d_\Pi = 0$.
  If there is a good sequence of length $k$ for every $k$, then $d_\Pi=\infty$.
\end{definition}

\section{Super-logarithmic Gap for LCLs on Rooted Regular Trees}
\label{sec:rooted-regular-super-logarithmic}

In this section, we show that if an LCL problem on rooted regular trees has complexity $\omega(\log n)$, then its complexity is $\Omega(n^{1/k})$ for some $k \in \N_+$ in the \rolcl model.
Moreover, this complexity coincides with the existing upper bounds in \congest and \local~\cite{balliu2022efficient}, providing a tight classification.
Formally, we prove the following theorem, restated for the reader's convenience:
\restateThmRootedSuperLogarithmic*

For the rest of this section, we consider $\Pi = (\Delta, \Sigma, \VV)$ to be an
LCL problem on rooted trees with $d_\Pi = k \in \N_+$.
We show that problem~$\Pi$ has locality~$\Omega(n^{1/k})$ by constructing a randomized input instance that no locality-$o(n^{1/k})$ \rolcl algorithm can solve with high probability.
We do this by adapting the previous construction by Balliu et al.~\cite{balliu2022efficient}.
We augment this construction by providing a randomized processing order of nodes.
We then show that the labeling produces by the algorithm for a specific set of nodes must be independent of the exact order.
Finally, we prove by induction that either the algorithm must fail with probability more than~$\frac{1}{n}$, or there exists a good sequence longer than~$d_\Pi$.

\subsection{Lower-Bound Graph Construction}

We start by setting constant~$\beta$ to be the smallest natural number such that for each subset of labels~$\tilde\Sigma \subseteq \Sigma$ and for each label~$\sigma \in \tilde\Sigma \setminus \trim(\tilde\Sigma)$, there exists no correct labeling of the complete tree~$T_\beta$ using only labels in $\tilde\Sigma$ and having root labeled with~$\sigma$.
We also let~$t$ be a constant that we fix later.
We now define the parts of the lower bound graph followed by the main lower bound construction:
\begin{definition}[Lower bound graph part~\cite{balliu2022efficient}]
    \label{def:lb-graphs-2}
    Let $s = 4t + 4$.
    \begin{itemize}
        \item Let $G_{\sR,1}$ be the complete rooted tree~$T_\beta$.
        We say that all nodes of $G_{\sR,1}$ are in layer~$(\sR, 1)$.
        \item For each integer $i \ge 1$, let~$G_{\sC,i}^\circ$ be the tree formed by an $s$-node directed path $v_1 \leftarrow v_2 \leftarrow \dots \leftarrow v_s$ such that we append~$\Delta-1$ copies of~$G_{\sR,i}$ as the children of all nodes~$v_i$.
        We call the path $v_1 \leftarrow v_2 \leftarrow \dots \leftarrow v_s$ the \emph{core path of~$G_{\sC,i}^\circ$}, and say that they are in layer~$(\sC, i)$.
        \item Let $G_{\sC,i}$ be defined like~$G_{\sC,i}^\circ$ except that we append another copy of~$G_{\sR,i}$ as a child of~$v_s$.
        \item For each $i \ge 2$, let~$G_{\sR,i}$ be the tree formed by taking the complete rooted tree~$T_\beta$ and appending~$\Delta$ copies of~$G_{\sC,i-1}$ to each leaf of~$T_\beta$.
        All nodes of~$T_\beta$ are said to be in layer~$(\sR, i)$.
    \end{itemize}
    This construction is visualized in~\cref{fig:Grc}.
\end{definition}

\begin{figure}[tbp]
    \centering
    \includegraphics[width=\textwidth]{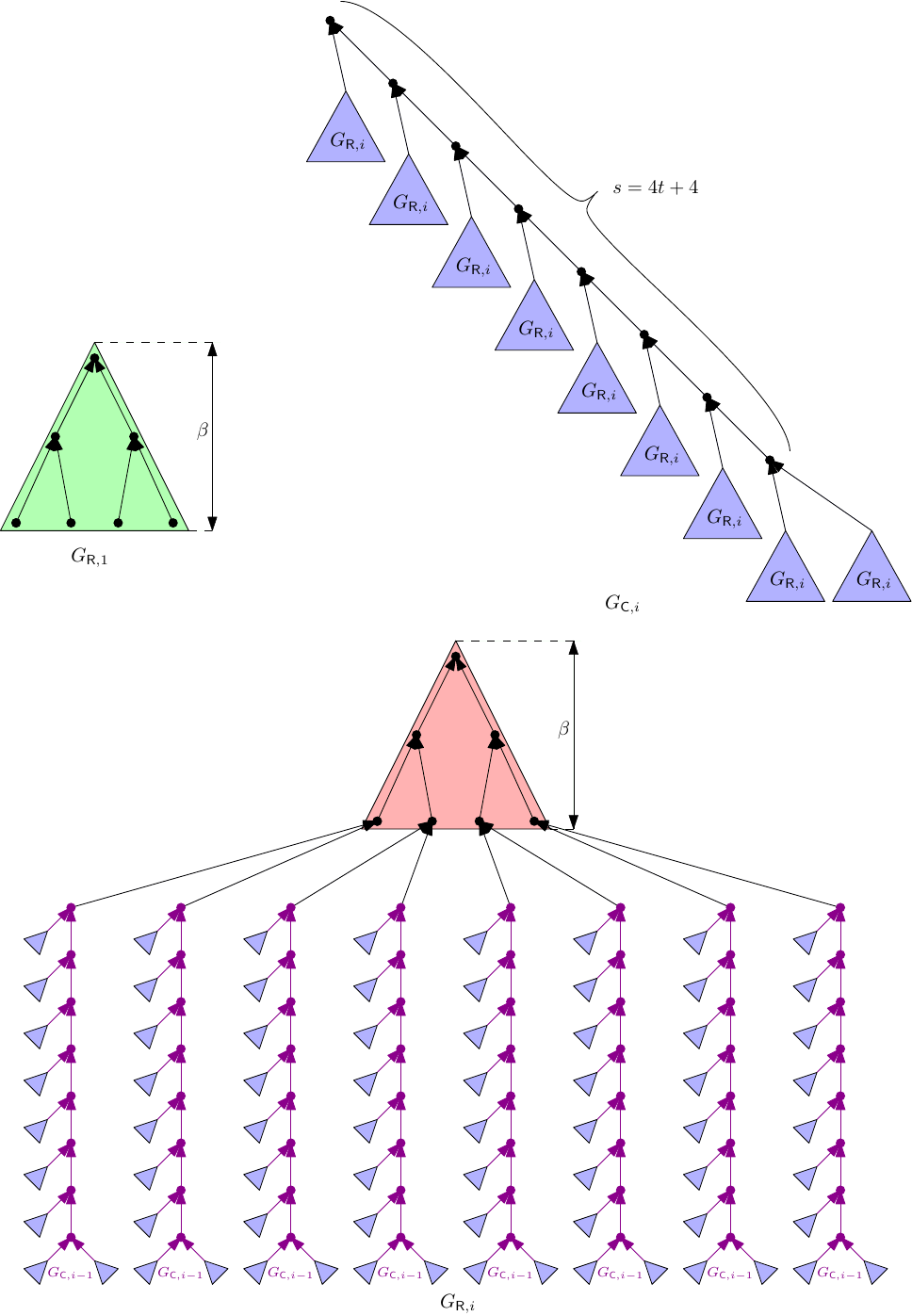}
    \caption{Visualization of lower bound trees; see \cref{def:lb-graphs-2}. The tree on top-left shows the structure of $G_{\sR, 1}$, the tree on top right shows $G_{\sC, i}$ for $i \ge 1$, and the three at the bottom shows $G_{\sR, i}$ for $i \ge 2$.}\label{fig:Grc}
\end{figure}

\begin{definition}[Main lower bound graph~\cite{balliu2022efficient}]
    \label{def:lb-main}
    We define our main lower bound graph~$G$ as the rooted tree formed by the following construction:
    \begin{itemize}
        \item Construct rooted trees $G_{\sC,1}^\circ, G_{\sC,2}^\circ, \dots, G_{\sC,k}^\circ$ and $G_{\sR,k+1}$.
        \item Let $P_i = v_1^i \leftarrow v_2^i \leftarrow \dots \leftarrow v_s^i$ be the core path of $G_{\sC,i}^\circ$, and let $r$ be the root of $G_{\sR,k+1}$.
        \item Add edges $v_s^1 \leftarrow v_1^2, v_s^2 \leftarrow v_1^3, \dots, v_s^{k-1} \leftarrow v_1^k$ and $v_s^k \leftarrow r$.
    \end{itemize}
    This construction is visualized in~\cref{fig:G_lowerbound_uil}.
\end{definition}

The nodes of the main lower bound graph~$G$ are partitioned into layers by the recursive construction.
We order these nodes into the following order:
\begin{equation*}
    (\sR,1) \prec (\sC,1) \prec (\sR,2) \prec (\sC,2) \prec \dots \prec (\sR,k+1)
\end{equation*}
Intuitively, nodes closer to leaves come before nodes closer to the center of the tree.

It is easy to see that each node in~$G$ has degree either~$0$ or~$\Delta$, and that the number of nodes in~$G$ is $n = O(t^k)$.
To show that~$\Pi$ requires locality~$\Omega(n^{1/k})$ to solve, it suffices to show that no \rolcl algorithm can solve~$\Pi$ on~$G$ with locality~$t$ or less.

Next, we define a randomized order~$Z$ in which the adversary reveals nodes of~$G$ to the \rolcl algorithm:
\begin{definition}[Randomized adversarial sequence and nodes $u^i_0, u^i_1, \ldots, u^i_{c_i}$]\label{def:rand-ad-seq}
    A randomized adversarial sequence $Z$ is a random sequence of all nodes in $G$, sampled by the followed process.
    \begin{enumerate}
        \item Start with an empty sequence $Z$.
        \item For each $i \in \{1, \dots, k\}$, do the following:
        \begin{enumerate}
            \item Let $Q^{i}_0, Q^{i}_1, Q^{i}_2, \ldots, Q^{i}_{c_i}$ be all core paths in layer~$(\sC, i)$, where $Q^{i}_0$ is $P_i$, that is the core path closest to the root.
            \item For each $l \in \{1, \dots, c_i\}$, do the following:
            \begin{enumerate}
                \item Let $r^i_l$ be the node in $Q^i_l$ that is closest to the root of $G$.
                \item Sample $d^i_l$ from the set $\{2t+1, 2t+2\}$ uniformly at random.
                \item  Let $u^i_l$ be the node of $Q^i_l$ that is at the distance of $d^i_l$ from $r^i_l$.
            \end{enumerate}
            \item Append a uniform random permutation of $u^{i}_0, u^i_1, \ldots, u^i_{c_i}$ to $Z$.
        \end{enumerate}
        \item Append all other nodes in any fixed order to $Z$.
    \end{enumerate}
    The randomized adversarial sequence $Z$ therefore also defines the nodes $u^i_0, u^i_1, \ldots, u^i_{c_i}$.  
\end{definition}

\begin{figure}
    \centering
    \includegraphics[width=0.96\textwidth]{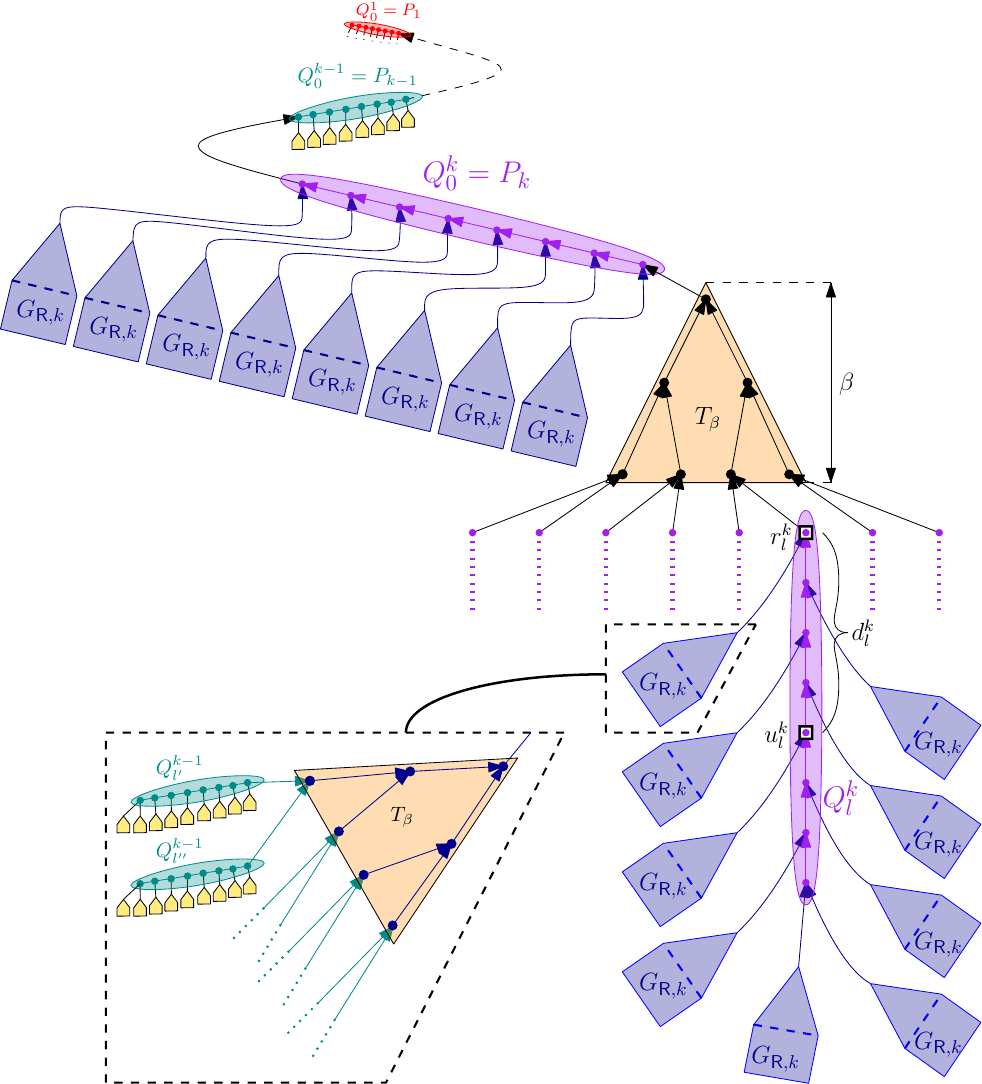}
    \caption{The figure shows the lower bound graph $G$. The orange triangles represent complete trees $T_\beta$. Each purple oval represents a core path $Q^k_{l}$ of some $G_{\sC, k}$. Each blue pentagon represents copies of $G_{\sR,k}$. The zoom-in shows the structure of the internal connections of $G_{\sR,k}$. The turquoise chains are core paths $Q^{k-1}_{l'}$ of $G_{\sC,k-1}$ where each yellow pentagon represents copies of $G_{\sR, k-1}$. The topmost red oval represents the core path $P_1$ (which is same as $Q^1_0$) of the topmost copy of $G_{\sC, 1}$. Near the zoom-in, the figure shows how $u^k_l$ is chosen in $Q^k_l$ from $r^k_l$ and $d^k_l$. See \cref{def:lb-main,def:rand-ad-seq} for more information.}\label{fig:G_lowerbound_uil}
\end{figure}

Note that for either choice of $d^i_l \in \{2t+1,2t+2\}$, the distance of $u^i_l$ from either end of the core path it belongs to is at least $2t + 1$.
This is the reason behind the choice of $s = 4t + 4$.
In particular, this ensures that any locality-$t$ \rolcl algorithm~$\algo$ must label nodes~$u^i_l$ for $i \in \{1, 2, \ldots, k\}$ and $l \in \{0, 1, \ldots, c_i\}$ independently of the adversarial sequence~$Z$:
\begin{lemma}[Independence lemma]\label{lem:independent-label}
    For $i \in \{1, 2, \ldots k\}$, the labelling of the nodes $u^i_0, u^i_1, \ldots, u^i_{c_i}$ produced by any \rolcl algorithm~$\algo$ with locality $t$ is independent of the randomness of the randomized adversarial sequence~$Z$.
\end{lemma}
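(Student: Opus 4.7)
The plan is to exploit two structural properties of the distinguished nodes $u^i_l$---pairwise disjointness of their $t$-neighborhoods and structural invariance of each $t$-neighborhood under the random choice of $d^i_l$---and then conclude via a coupling argument that the joint distribution of labels at the $u^i_l$'s is the same for every realization of $Z$.

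First, a direct distance calculation in $G$ establishes that for all pairs $(i,l) \neq (j,l')$, the distance between $u^i_l$ and $u^j_{l'}$ strictly exceeds $2t$, so that the corresponding $t$-neighborhoods are disjoint. The reason is that, by the choice $d^i_l \in \{2t+1, 2t+2\}$ together with $s = 4t + 4$, each $u^i_l$ is at distance at least $2t+1$ from both endpoints of its core path; moreover, any two distinct core paths in $G$ are separated by at least a $T_\beta$ subtree together with two endpoint hops. Summing these contributions strictly exceeds $2t$ in each case (same layer, different core paths; or different layers).

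Second, since $u^i_l$ lies strictly more than $t$ away from both endpoints of its core path, its $t$-neighborhood never reaches an endpoint. The neighborhood therefore consists only of a middle segment of $2t+1$ nodes in the core path together with the $G_{\sR, i}$ copies hanging off, and this structure is identical for both possible values $d^i_l \in \{2t+1, 2t+2\}$. The adversary can furthermore assign identifiers along each core path so that swapping the two candidate positions yields an isomorphism of the respective ID-labeled $t$-neighborhoods.

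With these two properties, independence follows via a coupling argument. Fix any two realizations $Z$ and $Z'$ of the randomized adversarial sequence and run $\algo$ on both executions using the same stream of algorithmic random bits. At each processing step, the local $t$-view of the currently revealed $u$-node is indistinguishable between the two executions by the second property, and by the first property no two $u$-nodes appear in each other's views, so the algorithm's running history is a list of pairwise-disjoint, structurally isomorphic local views. A careful induction on the processing order---using the candidate-swap isomorphism to relabel between the two executions---then transfers the joint labeling between $Z$ and $Z'$, establishing that the labeling of $u^i_0, \dots, u^i_{c_i}$ has the same distribution under both, i.e.\ is independent of the randomness of $Z$. The principal technical obstacle lies precisely in this last step: the algorithm's global memory could in principle exploit differences in processing order within a single layer to produce $Z$-dependent correlations, and the coupling must neutralize this by exploiting the structural symmetry of local views so that any residual variation comes only from the algorithmic randomness, over which we marginalize.
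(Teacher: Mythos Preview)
Your approach is essentially the paper's: both rest on the $t$-neighborhoods of the $u^i_l$ being pairwise disjoint and, within each layer, structurally identical, so that the algorithm's view while processing these nodes is the same under every realization of $Z$; the paper states this directly, while you phrase it as a coupling. One correction, however: there are no unique identifiers in the \rolcl model---the algorithm sees only the abstract graph structure revealed so far---so your sentence about the adversary ``assigning identifiers along each core path'' is based on a confusion with the \local model and should be dropped. Without identifiers the isomorphism of views is immediate from the structural facts you already established, and the coupling collapses to the one-line observation in the paper; in particular, the ``principal technical obstacle'' you anticipate (global memory exploiting processing order) does not arise, because the sequence of revealed views is literally the same graph for every $Z$.
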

\begin{proof}
    For any $i$, the $t$-neighborhoods of $u^i_0, u^i_1, \ldots, u^i_{c_i}$ are identical and disjoint.
    For $i_1 \ne i_2$, the $t$-neighborhoods of $u^{i_1}_{j_1}$ and $u^{i_2}_{j_2}$ are also disjoint as their lowest common ancestor is at a distance of at least $2t$ from one of them.
    Therefore, for any $u^i_l$, its $t$-neighborhood is unlabeled for all randomized adversarial sequences $Z$.
    Since the $t$-neighborhoods of $u^i_0, u^i_1, \ldots, u^i_{c_i}$ are the identical, disjoint and unlabeled at the time of revealing, the output distribution of $u^i_0, u^i_1, \ldots, u^i_{c_i}$ produced by $\mathcal A$ is independent of the randomized adversarial sequence $Z$ for all $i \in \{1, 2, \ldots k\}$.
\end{proof}

Next, we define a sequence of subsets of the nodes in $G$.
This definition is very similar to the subsets defined in previous works of Balliu et al.~\cite{balliu2022efficient}.

\begin{definition}[Subsets of nodes in $G$]\label{def:subsets}
    We define the following subsets of nodes in $G$:
    \begin{itemize}
       \item Define $S'_{\sR,1}$ as the set of nodes $v$ in $G$ such that the subgraph induced by $v$ and its descendants within radius-$\beta$ is isomorphic to $T_\beta$.
        \item For $2 \leq i \leq k+1$, define $S'_{\sR,i}$ as the set of nodes $v$ in $G$ such that the subgraph induced by $v$ and its descendants within radius-$\beta$ is isomorphic to $T_\beta$ and contains only nodes in set~$S'_{\sC,i-1}$.
        \item For $1 \leq i \leq k$, define $S'_{\sC,i}$ as the set of nodes $v$ in $G$ meeting one of the following conditions.
        \begin{itemize}
            \item $v$ is in one of these layers: $(\sR,i+1), (\sC, i+1), (\sR, i+2), \ldots, (\sR, k+1)$.
            \item $v \in P_i$ is either the node $u^i_0$ or its descendant in layer $(\sC,i)$.
            \item $v \notin P_i$ is either the node $u^i_l$ or an ancestor of it in layer $(\sC,i)$, for some $l \ne 0$.      
        \end{itemize}
    \end{itemize}
\end{definition}

The intuition behind the definition is the following:
$S'_{\sR, 1}$ basically consists of all nodes of the tree $G$ apart from some layer $(\sR, 1)$ nodes which are very close to leaves.
$S'_{\sC, i}$ contains the nodes which are descendants of $u^i_0$ but also an ancestor of some node $u^i_l$.
It is evident from the definition that $S'_{\sC, i}$ contains nodes only from layer $(\sC, i)$ or above.
And finally, $S'_{\sR, i}$ contains of only nodes in layer $(\sR, i)$ or above.
In particular, it contains the nodes, and their immediate children, that are roots of subgraphs isomorphic to~$T_\beta$ that are fully in layer~$(\sR, i)$.

We prove some basic properties of the sets in \cref{def:subsets}. 
The proof is again similar to that in the previous works of Balliu et al.~\cite{balliu2022efficient}, we restate it for the convenience of the reader:
\begin{lemma}[Subset containment]\label{lem:containment}
    We have $S'_{\sR,1} \supseteq S'_{\sC,1} \supseteq \cdots \supseteq S'_{\sR,k+1} \neq \emptyset$.
\end{lemma}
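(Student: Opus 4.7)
The plan is to verify the chain of inclusions by proving alternately $S'_{\sC,i} \supseteq S'_{\sR,i+1}$ and $S'_{\sR,i} \supseteq S'_{\sC,i}$, and then establishing non-emptiness of $S'_{\sR,k+1}$ via an explicit witness. Both families of pairwise inclusions reduce to direct case analysis on the defining bullets of the two types of subsets, and I would set this up as a straightforward induction on $i$, with the inductive hypothesis playing its role mainly through the requirement ``descendants lie in $S'_{\sC,i-1}$'' in the definition of $S'_{\sR,i}$ for $i \ge 2$.

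The containment $S'_{\sC,i} \supseteq S'_{\sR,i+1}$ is essentially immediate: any $v \in S'_{\sR,i+1}$ lies in layer $(\sR,i+1)$, which is one of the layers explicitly listed under the first bullet of the definition of $S'_{\sC,i}$, so $v \in S'_{\sC,i}$. For $S'_{\sR,i} \supseteq S'_{\sC,i}$, I would split by the three bullets of the definition of $S'_{\sC,i}$. If $v$ lies in a layer strictly above $(\sC,i)$, the recursive construction of $G$ shows that $v$ sits at the top of (or within) a $T_\beta$ block of some $G_{\sR,j}$ with $j > i$, so the radius-$\beta$ descendants of $v$ form a subgraph isomorphic to $T_\beta$; since these descendants live in layers $\ge (\sR,i)$, the previously verified first bullet of the definition of $S'_{\sC,i-1}$ places them in $S'_{\sC,i-1}$. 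If instead $v$ is $u^i_0$ or one of its descendants in $P_i$, or $u^i_l$ or one of its ancestors in another $(\sC,i)$-core path, then the choice $s = 4t+4$ together with the placement of the $u^i_l$ at distance $2t+1$ or $2t+2$ from the top of their core paths ensures that within radius $\beta$ of $v$ one sees only a short core-path fragment together with the appended $G_{\sR,i}$-roots (each carrying a $T_\beta$), which collectively realize the $T_\beta$ shape required and whose nodes again fall into $S'_{\sC,i-1}$ by the inductive hypothesis.

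For non-emptiness, the natural witness is the root $r$ of $G_{\sR,k+1}$: its radius-$\beta$ descendant subgraph is precisely the top $T_\beta$ of $G_{\sR,k+1}$, and every node in that $T_\beta$ lies in layer $(\sR,k+1)$, hence in $S'_{\sC,k}$ by the first-bullet clause of the definition (at index $k$). This verifies $r \in S'_{\sR,k+1}$.

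The main obstacle I expect is the bookkeeping in the second and third bullets of the case analysis, where $v$ sits inside a core path. One must carefully argue that the combination of core-path descendants and appended-chunk roots really does realize a $T_\beta$ within radius $\beta$ of $v$, and that we do not accidentally enter layers below $(\sR,i)$ where the descendants would no longer be guaranteed to lie in $S'_{\sC,i-1}$. The quantitative slack $s \gg \beta$, which holds for $t$ chosen large enough, is exactly what makes this argument go through.
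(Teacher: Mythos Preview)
Your overall plan matches the paper's, but your argument for $S'_{\sC,i} \supseteq S'_{\sR,i+1}$ has a gap: the claim that ``any $v \in S'_{\sR,i+1}$ lies in layer $(\sR,i+1)$'' is false as stated --- such $v$ may lie in any higher layer (the root $r$ of $G_{\sR,k+1}$ belongs to every $S'_{\sR,i+1}$ yet sits in layer $(\sR,k+1)$), and ruling out layer $(\sC,i)$ would require an extra argument you do not give. You are also working harder than necessary: the definition of $S'_{\sR,i+1}$ already requires that the radius-$\beta$ descendant subgraph of $v$, which contains $v$ itself, lies entirely in $S'_{\sC,i}$; hence $v \in S'_{\sC,i}$ immediately. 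That one line is the paper's whole argument for this inclusion.

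For $S'_{\sR,i} \supseteq S'_{\sC,i}$ your three-bullet case split is sound in spirit but more laborious than needed, and the appeal to an ``inductive hypothesis'' is a red herring. The paper observes uniformly that any $v \in S'_{\sC,i}$ lies in layer $(\sC,i)$ or above, so by the recursive construction of $G$ its radius-$\beta$ descendant subgraph is a copy of $T_\beta$ consisting only of nodes in layer $(\sR,i)$ or above; every such node lies in $S'_{\sC,i-1}$ directly via the \emph{first} bullet of that set's definition, so $v \in S'_{\sR,i}$. No previously established inclusion is used. Your non-emptiness witness $r$ and its verification are identical to the paper's.
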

\begin{proof}
    The claim that $S'_{\sC,i} \supseteq S'_{\sR,i+1}$ follows from the definition of $S'_{\sR,i+1}$ that $v \in S'_{\sC,i}$ is a necessary condition for $v \in S'_{\sR,i+1}$.
    To prove the claim that $S'_{\sR,i} \supseteq S'_{\sC,i}$, we recall that $v \in S'_{\sC,i}$ implies that  $v$ is in layer $(\sC,i)$ or above. By the construction of $G$, the subgraph induced by $v$ and its descendants within the radius-$\beta$ neighborhood of $v$ is isomorphic to $T_\beta$ and contains only nodes in layer $(\sR,i)$ or above.  
    Since all nodes in layer $(\sR,i)$ or above are in $S'_{\sC,i-1}$, we infer that $v \in S'_{\sC,i}$ implies $v \in S'_{\sR,i}$. 

    To see that $S'_{\sR,k+1} \neq \emptyset$, consider the node $r$, which is the root of $G_{\sR,k+1}$.
    The subgraph induced by $r$ and its descendants within radius-$\beta$ neighborhood of $r$ is isomorphic to $T_\beta$ and contains only nodes in layer $(\sR,k+1)$. We know that all nodes in layer $(\sR,k+1)$ are in $S'_{\sC,k}$, so $r \in S'_{\sR,k+1}$.
\end{proof}

It is also evident that the children of the nodes in $S'_{\sC, i}$ are also contained in $S'_{\sR, i}$, for all $i \in \{1, 2, \ldots, k\}$:
\begin{observation}\label{obs:children-containment}
    If $v$ is a child of some node in $S'_{\sC, i}$, then $v \in S'_{\sR, i}$, for all $i \in \{1, 2, \ldots, k\}$.
\end{observation}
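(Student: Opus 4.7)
The plan is to prove the observation by case analysis on which of the three bullets of \cref{def:subsets} places $w$ into $S'_{\sC,i}$, and for each child $v$ of $w$ verify the two conditions defining $S'_{\sR,i}$: (a) the subgraph induced by $v$ and its descendants within radius $\beta$ is isomorphic to $T_\beta$, and (b) every node of this subgraph lies in $S'_{\sC,i-1}$. The key structural input is that the construction in \cref{def:lb-graphs-2} is built from blocks whose top $\beta$ levels are each a complete $T_\beta$, and the children of a node in $S'_{\sC,i}$ are always placed at the top of one such block or still inside the current block.

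Consider first the case where $w$ is a core-path node in layer $(\sC,i)$ matching the second or third bullet of the definition. Then $w$ has $\Delta - 1$ side children, each the root of a copy of $G_{\sR,i}$, together with one path-successor child. For a side child $v$, the top $\beta$ levels of the attached $G_{\sR,i}$ form, by construction, a $T_\beta$, and all these nodes belong to layer $(\sR,i)$ which is captured by the first bullet of $S'_{\sC,i-1}$; this yields $v \in S'_{\sR,i}$ directly. For the path-successor child $v$, I would argue that $v$ itself still lies in $S'_{\sC,i}$ and then invoke \cref{lem:containment}: moving one step along $P_i$ from $u^i_0$ or its descendant keeps us among the descendants of $u^i_0$; moving along $Q^i_l$ (with $l \ne 0$) toward $u^i_l$ stays among its ancestors on the path; and stepping off the last node $v_s$ of a core path crosses into a strictly higher layer, covered by the first bullet of $S'_{\sC,i}$.

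For the remaining case, where $w$ lies in a layer strictly above $(\sC,i)$, each child $v$ of $w$ either (i) lies inside the same $T_\beta$ block or the same core path as $w$, hence still in layer $(\sC,i)$ or above, so $v \in S'_{\sC,i}$ and \cref{lem:containment} finishes the case, or (ii) is the root of a freshly attached copy of $G_{\sR,j}$ or $G_{\sC,j-1}$ at a layer boundary of the construction, in which case the top $\beta$ levels hanging below $v$ again form a $T_\beta$ whose nodes all lie in layer $(\sR,i)$ or above and therefore in $S'_{\sC,i-1}$ by the first bullet.

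The main obstacle is the bookkeeping at layer boundaries, where a child can transition between the $T_\beta$ portion of a $G_{\sR,j}$ and the core path of an embedded $G_{\sC,j-1}$, or between one $G_{\sC,j}^\circ$ core path and the next. The technical point underpinning each such verification is that every $T_\beta$ block has depth exactly $\beta$, so the radius-$\beta$ descendant set of a node sitting at the top of such a block stays inside that block and does not leak into a lower layer than $(\sR,i)$. Once this is established for each type of boundary, the verification of conditions (a) and (b) becomes mechanical and the observation follows in every case.
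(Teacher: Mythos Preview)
The paper gives no proof of this observation; it simply states that the claim is \enquote{evident}, relying implicitly on the layer argument already used in the proof of \cref{lem:containment}: every child of a node in layer $(\sC,i)$ or above is itself in layer $(\sR,i)$ or above, and any node in layer $(\sC,i)$ or above (or any root of a copy of $G_{\sR,i}$) has its radius-$\beta$ descendant subtree isomorphic to $T_\beta$ and contained in layers $\succeq(\sR,i)$, hence in $S'_{\sC,i-1}$.

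Your case analysis follows a more explicit route and is largely fine, but there is one genuine gap. For the path-successor child you argue that $v$ itself lies in $S'_{\sC,i}$ and then invoke \cref{lem:containment}. This works for bullet~2 and for strict ancestors of $u^i_l$ in bullet~3, but it fails when $w = u^i_l$ with $l \ne 0$: the path-child of $u^i_l$ is a \emph{descendant} of $u^i_l$ on $Q^i_l$, and such a node is in none of the three bullets defining $S'_{\sC,i}$ (it is in layer $(\sC,i)$, not $P_i$, and not an ancestor of any $u^i_{l'}$ on that path). So the reduction to \cref{lem:containment} breaks here.

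The fix is immediate once you abandon the detour through $S'_{\sC,i}$: this path-child $v$ still sits in layer $(\sC,i)$, and the same structural argument used in the proof of \cref{lem:containment} shows directly that its radius-$\beta$ descendant subtree is isomorphic to $T_\beta$ with all nodes in layer $(\sR,i)$ or above, hence in $S'_{\sC,i-1}$, giving $v \in S'_{\sR,i}$. More generally, the uniform argument \enquote{child of a node in $S'_{\sC,i}$ is in layer $\succeq(\sR,i)$; now reuse the layer reasoning from \cref{lem:containment}} handles every case at once and avoids the per-bullet bookkeeping altogether.
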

Furthermore, due to the recursive structure of the input instance $G$, we get the following result:
\begin{observation}\label{obs:path-from-ci-to-ci}
    For each $i \in \{1, 2, \ldots, k\}$, for all $v \in S'_{\sC, i}$ there exists a path $P$ from $u^i_l$ (for some $l$) to $u^i_0$, such that $P$ contains $v$.
    Moreover, all nodes of $P$, and their children, belong to $S'_{\sR, i}$.
\end{observation}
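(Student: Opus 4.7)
The plan is to do a case analysis on the three defining conditions for $v \in S'_{\sC,i}$ in \cref{def:subsets}, and in each case to exhibit explicitly a suitable $l$ and a path $P$ between $u^i_l$ and $u^i_0$ that contains $v$. The containment conclusion will then follow by invoking \cref{obs:children-containment} on each node along $P$, together with \cref{lem:containment} to place $P$ itself inside $S'_{\sR, i}$.

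In the high-layer case, $v$ lies in some layer strictly above $(\sC, i)$; by the recursive construction of $G$, the subtree rooted at $v$ contains many $G_{\sC, i}$-copies and hence many $u^i_l$'s. I would choose $u^i_l$ either as a descendant of $v$ (when $v$ is not already an ancestor of $u^i_0$ in $G$) or as some $u^i_l$ outside the subtree rooted at $v$ (when $v$ is such an ancestor), so that the unique tree path from $u^i_l$ to $u^i_0$ passes through $v$ in either situation. In the off-spine case, $v$ lies on some $Q^i_l$ with $l \neq 0$ and is either equal to $u^i_l$ or an ancestor of it on $Q^i_l$; for this same $l$, I would take the tree path from $u^i_l$ to $u^i_0$, which ascends from $u^i_l$ through $v$ up to $r^i_l$, proceeds through higher-layer nodes, and eventually descends along $P_i$ to $u^i_0$. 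In the on-spine case, $v \in P_i$ is either $u^i_0$ itself or one of its descendants on $P_i$; here I would pick $l = 0$ and take $P$ to be the subpath of $P_i$ starting at $u^i_0$ and extending down through $v$, interpreting the statement as allowing $P$ to be extended past $u^i_0$ along $P_i$.

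To close, I would verify that every node on $P$ satisfies one of the three defining conditions of $S'_{\sC, i}$---because it lies either in a strictly higher layer, on some $Q^i_l$ between $u^i_l$ and $r^i_l$, or on $P_i$ between $u^i_0$ and the relevant extension/entry point---and hence its children belong to $S'_{\sR, i}$ by \cref{obs:children-containment}, while $P$ itself lies in $S'_{\sR,i}$ via \cref{lem:containment}. The main obstacle I anticipate is the on-spine case, where the natural tree path from $u^i_0$ back to itself is trivial and does not pass through any proper descendant of $u^i_0$, so the observation must be interpreted as allowing $P$ to extend past $u^i_0$ along $P_i$. A second care point is the bookkeeping of which layer each node along the possibly circuitous route through higher-layer $G_{\sR, \cdot}$ and $G_{\sC, \cdot}$ substructures belongs to; this requires carefully unwinding \cref{def:lb-graphs-2,def:lb-main,def:subsets} at each step of the path.
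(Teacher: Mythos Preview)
The paper does not give a proof here; it states the observation as a direct consequence of the recursive construction. Your plan---a case analysis on the three defining clauses for $S'_{\sC,i}$, followed by verifying $P \subseteq S'_{\sC,i}$ and then invoking \cref{lem:containment} and \cref{obs:children-containment}---is the natural way to make this explicit, and the closing verification is exactly right.

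There is, however, a genuine error in your on-spine case. When $v \in P_i$ is a \emph{proper} descendant of $u^i_0$, choosing $l=0$ yields the trivial one-vertex path from $u^i_0$ to itself, which does not contain $v$. Your proposed workaround---reinterpreting the observation so that $P$ may extend past $u^i_0$---is not what the statement says and would break the downstream uses (e.g.\ \cref{lem:restr-probl-path} and \cref{lem:inductive-res}), which rely on $u^i_0$ and $u^i_l$ being the actual endpoints of $P$. The correct move is simply to pick some $l \neq 0$. Every $u^i_l$ with $l \neq 0$ lies in the subtree rooted at the child of $v_s^i$ that continues the spine (namely $v_1^{i+1}$ for $i<k$, or $r$ for $i=k$); hence every such $u^i_l$ is a descendant of $v_s^i$ and therefore of $v$. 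The unique tree path from that $u^i_l$ to $u^i_0$ then climbs up to $v_s^i$ and runs along $P_i$ down to $u^i_0$, passing through $v$ as required.

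As a minor simplification, your high-layer sub-case ``$v$ is an ancestor of $u^i_0$'' is vacuous: the ancestors of $u^i_0$ in $G$ all lie on $P_i, P_{i-1}, \dots, P_1$ and hence in layers at most $(\sC,i)$, so no node in a layer above $(\sC,i)$ can be an ancestor of $u^i_0$. In fact, in all three cases $v$ is a descendant of $u^i_0$, and the argument becomes uniform: take $u^i_l$ to be any layer-$(\sC,i)$ node in the subtree rooted at $v$ (which exists by the recursive construction) and use the ancestor chain from $u^i_l$ up to $u^i_0$ as $P$.
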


\subsection{Running Algorithm on the Lower-Bound Graph}

We are now ready to show that no locality-$t$ \rolcl algorithm can solve problem~$\Pi$ on graph~$G$ and processing order~$Z$ with high probability.
For contradiction, we assume that such algorithm~$\algo$ exists.
We then show that algorithm~$\algo$ fails to find a proper labeling for lower bound graph~$G$ with randomized input sequence~$Z$ with probability larger than~$\frac{1}{n}$.
In particular, this shows that $\algo$ does not work with high probability.
Hence the locality any \rolcl algorithm solving~$\Pi$ must be strictly larger than $t = \Omega(n^{1/k})$.

We start by proving the following result that holds for every valid output labelling:
\begin{lemma}\label{lem:restr-probl-path}
    Let $L$ be a valid labeling of the nodes according to~$\Pi$.
    For each $i \in \{1, 2, \ldots, k\}$, let $\tilde{\Sigma}^L_i \subseteq \Sigma$ be any subset of labels such that for each node $v \in S'_{\sR, i}$, the corresponding label $L(v)$ is in $\tilde{\Sigma}^L_i$.
    Let $\mathcal M$ be the automaton associated with the restricted problem
    $\rest{\Pi}{\tilde{\Sigma}^L_i}$.  
    Then for all nodes $u', u'' \in S'_{\sC, i}$ such that $u' \leftarrow u''$
    is an edge in $G$, edge $L(u') \rightarrow L(u'')$ exists in~$\mathcal M$.
\end{lemma}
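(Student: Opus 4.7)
The plan is to unpack the definition of the automaton $\mathcal{M}$ for the restricted problem $\rest{\Pi}{\tilde{\Sigma}^L_i}$ and show that the node configuration witnessed at $u'$ under the valid labeling $L$ lies entirely inside $\tilde{\Sigma}^L_i$. Recall from \cref{def:lcl-automaton,def:lcl-restriction} that $\mathcal M$ has an edge $(\sigma, \sigma')$ precisely when there exists $(\sigma, S) \in \mathcal{V}$ with $\sigma \in \tilde{\Sigma}^L_i$, $S \subseteq \tilde{\Sigma}^L_i$, and $\sigma' \in S$. So it suffices to exhibit such a pair for $\sigma = L(u')$ and $\sigma' = L(u'')$.

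First I would argue that $L(u') \in \tilde{\Sigma}^L_i$. By \cref{lem:containment}, $S'_{\sC,i} \subseteq S'_{\sR,i}$, so $u' \in S'_{\sC,i}$ implies $u' \in S'_{\sR,i}$, and by the assumption on $\tilde{\Sigma}^L_i$ we immediately get $L(u') \in \tilde{\Sigma}^L_i$. Next, I would show that every child of $u'$ in $G$ has a label from $\tilde{\Sigma}^L_i$: by \cref{obs:children-containment}, any child of a node in $S'_{\sC,i}$ lies in $S'_{\sR,i}$, and again the hypothesis on $\tilde{\Sigma}^L_i$ forces the label to be in $\tilde{\Sigma}^L_i$. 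In particular $L(u'')\in\tilde{\Sigma}^L_i$ since $u''$ is a child of $u'$.

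To finish, I would invoke the validity of $L$: since $u'$ is an inner node with $\Delta$ children $w_1,\dots,w_\Delta$ (one of which is $u''$), the pair $(L(u'), \{L(w_1),\dots,L(w_\Delta)\})$ belongs to $\mathcal{V}$. Combined with the two inclusions above, this pair also belongs to the restricted constraint set of $\rest{\Pi}{\tilde{\Sigma}^L_i}$, and it witnesses precisely the edge $L(u') \rightarrow L(u'')$ in $\mathcal M$.

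The argument is essentially a definition-chase, so I do not anticipate any substantial obstacle. The only subtlety to be careful about is that one needs \emph{all} children of $u'$ to be labeled from $\tilde{\Sigma}^L_i$ (not just $u''$), because the restricted constraint $\mathcal{V}'$ of $\rest{\Pi}{\tilde{\Sigma}^L_i}$ demands $S \subseteq \tilde{\Sigma}^L_i$ for the entire multiset $S$. This is exactly what \cref{obs:children-containment} gives us, which is why that observation was stated ahead of the lemma.
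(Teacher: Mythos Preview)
Your proof is correct and follows essentially the same approach as the paper: show that $u'$ and all of its children carry labels from $\tilde{\Sigma}^L_i$, then use the validity of $L$ to conclude that the node configuration at $u'$ belongs to the restricted constraint set, which yields the desired edge in $\mathcal{M}$. The only cosmetic difference is that the paper obtains the ``children of $u'$ lie in $S'_{\sR,i}$'' fact via \cref{obs:path-from-ci-to-ci} (locating $u'$ on a path whose nodes and their children lie in $S'_{\sR,i}$), whereas you invoke \cref{obs:children-containment} directly---your route is slightly more economical, but the underlying argument is identical.
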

Note that the result talks about existence of the edge $L(u') \rightarrow L(u'')$ in the automaton of the restricted problem $\rest{\Pi}{\tilde{\Sigma}^L_i}$.
The result would trivially hold for the automaton of the original problem $\Pi$.
\begin{proof}
    By \cref{obs:path-from-ci-to-ci}, there exists path $P = (u^i_0 = v_0 \leftarrow v_1 \leftarrow \cdots \leftarrow v_\alpha = u^i_l)$ for some $l$ containing node~$u''$, and hence also~$u'$.
    Moreover, all nodes of $P$ and their children belong to set $S'_{\sR, i}$.
    This implies that the labels of all nodes of $P$, and labels of their children, belong to set~$\tilde{\Sigma}^L_i$.

    Let $\sigma \in \tilde{\Sigma}^L_i$ be the label of $u'$, and let $\sigma_1, \sigma_2, \dots, \sigma_\Delta$ be the labels of its children.
    As the children of $u'$ belong to $S'_{\sR, i}$, all $\sigma_j$ belong to $\tilde{\Sigma}^L_i$.
    Therefore $(\sigma : \sigma_1, \dots, \sigma_\Delta)$ must be a valid configuration in the restricted problem~$\rest{\Pi}{\tilde{\Sigma}^L_i}$.
    This then implies that all $\sigma \rightarrow \sigma_1, \dots, \sigma \rightarrow \sigma_\Delta$ are edges in~$\mathcal M$.
    As the label of node~$u''$ is one of $\sigma_1, \dots, \sigma_\Delta$, edge $\sigma = L(u') \rightarrow L(u'') = \sigma_j$ belongs to~$\mathcal M$.
\end{proof}

We now restrict our attention to a particular run of algorithm~$\algo$ on the randomized sequence~$Z$ on graph~$G$.
We start by defining the set of labels the algorithm uses to label nodes $u^i_0, u^i_1, u^i_2, \ldots, u^i_{c_i}$ (see \cref{def:rand-ad-seq}):
\begin{definition}\label{def:Gamma-prime}
    For $i \in \{1, 2, \ldots k\}$, let $\Sigma'_i \subseteq \Sigma$ be the set of labels used by $\mathcal A$ to label the nodes $u^i_0, u^i_1, u^i_2, \ldots, u^i_{c_i}$.
\end{definition}
Note that the set $\Sigma'_i$, $i \in \{1, 2, \ldots, k\}$ depends on a particular run of $\mathcal A$, and can change with different runs of $\mathcal A$.
Whenever we talk about $\Sigma'_i$, $i \in \{1, 2, \ldots, k\}$, we consider a fixed run of $\mathcal A$ and analyze it.

We now prove the following result on the structure of these sets of labels:
\begin{lemma}\label{lem:inductive-res}
    Let $j$ be an integer in $\{0, 1, 2, \ldots, k\}$, and let $\Sigma'_1, \Sigma'_2, \ldots, \Sigma'_k$ be the sets of nodes $\algo$ used to label nodes~$u^i_0, u^i_1, u^i_2, \ldots, u^i_{c_i}$.
    If there now exists sets~$\Sigma = \Sigma''_0, \Sigma''_1, \Sigma''_2, \ldots, \Sigma''_j$ satisfying $\Sigma''_i \in \flexSCC(\trim(\Sigma''_{i-1}))$ and $\Sigma'_i \subseteq \Sigma''_i$ for all  $i \in \{1, 2, \ldots j\}$, then the following must hold for the full labeling~$L$ produced by~$\algo$:
    \begin{itemize} 
        \item \textbf{Induction hypothesis for $S'_{\sR, i}$:} for all $i \in \{1, 2, \ldots, j, j+1\}$ and $v \in S'_{\sR, i}$ we have $L(v) \in \trim(\Sigma''_{i-1})$.
        \item \textbf{Induction hypothesis for $S'_{\sC, i}$:} for all $i \in \{1, 2, \ldots, j\}$ and $v \in S'_{\sC, i}$ we have $L(v) \in \Sigma''_i$.
    \end{itemize}
\end{lemma}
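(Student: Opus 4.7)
The plan is to proceed by induction on $j$, with the two kinds of claims (for $S'_{\sR,i}$ and for $S'_{\sC,i}$) interleaved in the obvious way so that the $\sR$-case at index $i$ feeds into the $\sC$-case at the same index, and the $\sC$-case at index $i$ feeds into the $\sR$-case at the next index. This alternation of steps exactly mirrors the alternation of $\trim$ and $\flexSCC$ in the definition of a good sequence, which is precisely the structure the overall lower bound is designed to exploit.

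The base case $j=0$ requires only the $S'_{\sR,1}$-claim: for $v \in S'_{\sR,1}$, Definition~\ref{def:subsets} says that the radius-$\beta$ subtree rooted at $v$ is isomorphic to $T_\beta$ and is labeled by $L$ with labels from $\Sigma = \Sigma''_0$. The defining property of $\beta$ then forces $L(v) \in \trim(\Sigma) = \trim(\Sigma''_0)$. For the inductive step, assume the statement for $j$. Given sets $\Sigma''_0,\dots,\Sigma''_{j+1}$ satisfying the hypotheses up to index $j+1$, applying the induction hypothesis to the prefix $\Sigma''_0,\dots,\Sigma''_j$ already yields every claim except the two \enquote{new} ones: $v \in S'_{\sC,j+1}$ and $v \in S'_{\sR,j+2}$.

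The second new case will be handled essentially as in the base case. By Definition~\ref{def:subsets}, every $v \in S'_{\sR,j+2}$ is the root of a $T_\beta$-subtree whose nodes all lie in $S'_{\sC,j+1}$, so once the $S'_{\sC,j+1}$-case is in hand, every label in this subtree belongs to $\Sigma''_{j+1}$, and the $\beta$-property then yields $L(v) \in \trim(\Sigma''_{j+1})$, as required.

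The heart of the argument is the new $S'_{\sC,j+1}$-case. I would pick $v \in S'_{\sC,j+1}$ and invoke Observation~\ref{obs:path-from-ci-to-ci} to place $v$ on a directed tree-path $P$ in $G$ running from $u^{j+1}_0$ down to some $u^{j+1}_l$, with all nodes of $P$ and all their children in $S'_{\sR,j+1}$. The already-established $S'_{\sR,j+1}$-claim then tells us that every label on $P$ and on its children lies in $\trim(\Sigma''_j)$, so Lemma~\ref{lem:restr-probl-path} applied with $\tilde\Sigma^L_{j+1} = \trim(\Sigma''_j)$ turns the label sequence along $P$ into a directed walk in the automaton $\mathcal M$ of the restricted problem $\rest{\Pi}{\trim(\Sigma''_j)}$. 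By the assumption $\Sigma'_{j+1} \subseteq \Sigma''_{j+1}$, both endpoints $L(u^{j+1}_0), L(u^{j+1}_l)$ lie in $\Sigma''_{j+1}$, and by the hypothesis $\Sigma''_{j+1} \in \flexSCC(\trim(\Sigma''_j))$ the set $\Sigma''_{j+1}$ is a strongly connected component of $\mathcal M$. Since the condensation of $\mathcal M$ is a DAG, any directed walk whose two endpoints sit in a common SCC must remain inside that SCC, so $L(v) \in \Sigma''_{j+1}$.

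The main obstacle I anticipate is precisely this SCC-confinement step: one has to be careful to apply Lemma~\ref{lem:restr-probl-path} in the right restricted problem, namely $\rest{\Pi}{\trim(\Sigma''_j)}$ rather than $\Pi$ itself, because $\Sigma''_{j+1}$ is an SCC only in the automaton of that restriction. The fact that the $S'_{\sR,j+1}$-half of the induction hypothesis delivers exactly the set of labels needed to instantiate Lemma~\ref{lem:restr-probl-path} — and that Observation~\ref{obs:path-from-ci-to-ci} guarantees every $v \in S'_{\sC,j+1}$ lies on such a path with the required child-label control — is the structural reason the induction goes through. Everything else is bookkeeping.
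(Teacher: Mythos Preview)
Your proposal is correct and follows essentially the same approach as the paper: both alternate between the $S'_{\sR,i}$-step (using the $T_\beta$-subtree and the defining property of $\beta$ to force labels into $\trim(\Sigma''_{i-1})$) and the $S'_{\sC,i}$-step (using \cref{obs:path-from-ci-to-ci} and \cref{lem:restr-probl-path} to get a walk in the restricted automaton whose endpoints lie in the SCC $\Sigma''_i$, hence so does every intermediate label). The only cosmetic difference is that you package this as an induction on $j$ (applying the lemma at index $j$ to the prefix and then proving the two new claims), whereas the paper fixes $j$ and inducts on $i$; the content and the key ideas---including the need to work in the automaton of $\rest{\Pi}{\trim(\Sigma''_j)}$ and the SCC-confinement argument---are identical.
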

The idea behind defining $\Sigma''_0, \Sigma''_1, \ldots, \Sigma''_j$ is to ensure that $(\SigmaR{1} = \trim(\Sigma''_0), \SigmaC{1} = \Sigma''_1, \SigmaR{2} = \trim(\Sigma''_1),\ldots, \SigmaC{j} = \Sigma''_j, \SigmaR{j+1} = \trim(\Sigma''_{j-1}))$ forms a good sequence and the corresponding nodes follow some specific structure.

\begin{proof}
    We prove this lemma with induction on $i$ for a fixed $j \le k$:

    \subparagraph*{Base Case: induction hypothesis for $S'_{\sR,1}$.}
    By definition, we have $\Sigma''_0 = \Sigma$.
    Moreover, by definition, for every node $v \in S'_{\sR, 1}$, the radius-$\beta$ descendants of $v$ are isomorphic to $T_\beta$.
    For any valid assignments, the labels of $v$ and its radius-$\beta$ descendants come from the set $\Sigma = \Sigma''_0$.
    By definition of $\beta$, this must mean that the label of $v$ must come from $\trim(\Sigma) = \trim(\Sigma''_0)$.
    Therefore, for all $v \in S'_{\sR, i}$, we have $L(v) \in \trim(\Sigma''_{i-1})$ for $i=1$.
    
    \subparagraph*{Induction: induction hypothesis for $S'_{\sC, i}$.}
    We assume that for some $1 \le i \le j$, the induction hypothesis for $S'_{\sR, i}$ holds true, i.e. for all $v \in S'_{\sR, i}$ we have $L(v) \in \trim(\Sigma''_{i-1})$.
    Let $v$ be any node in $S'_{\sC, i}$.
    By \cref{obs:path-from-ci-to-ci}, there exists a path $P$ from $u^i_l$ for some $l$ to $u^i_0$ containing $v$ where all nodes of $P$ and the children of all such nodes belong to $S'_{\sR, i}$.
    Let the path $P$ be $u^i_0 = v_0 \leftarrow v_1 \leftarrow \cdots \leftarrow v_\alpha = u^i_l$.
    Then $v_y$ and its children are in $S'_{\sR, i}$, for all $y \in \{0, 1, \ldots, \alpha\}$.
    Moreover, by \cref{lem:restr-probl-path}, for all $y \in \{0, 1, \ldots, \alpha-1\}$, $L(v_y) \rightarrow L(v_{y+1})$ is a directed edge in $\mathcal M$, where $\mathcal M$ is the automaton associated with the restricted problem $\rest{\Pi}{\trim(\Sigma''_{i-1})}$.

    Now, $L(u^i_0), L(u^i_l) \in \Sigma'_i$ and $\Sigma'_i \subseteq \Sigma''_i$ where $\Sigma''_i$ is a flexible strongly connected component in $\mathcal M$ by the definition of~$\Sigma''_i$.
    Since $\mathcal M$ has the directed edge $L(v_y) \rightarrow L(v_{y-1})$ for all $y \in \{0, 1, \ldots, \alpha -1\}$, $L(u^i_0) = L(v_0) \rightarrow L(v_1) \rightarrow \cdots \rightarrow L(v) \rightarrow \cdots \rightarrow L(v_\alpha) = L(u^i_l)$ is a walk in $\mathcal M$ that starts and ends in the strongly connected component $\Sigma''_i$.
    This must mean $L(v) \in \Sigma''_i$, for all $v \in S'_{\sC, i}$.

    \subparagraph*{Induction: induction hypothesis for $S'_{\sR, i}$.}
    We assume that for some $2 \le i \le j+1$, the induction hypothesis for $S'_{\sC, i-1}$ holds true, that is for all $v \in S'_{\sC, i-1}$ we have $L(v) \in \Sigma''_{i-1}$.
    Moreover, by definition, each node~$v \in S'_{\sR, i}$ is a root of a subtree isomorphic to~$T_\beta$ such that all nodes reside in~$S'_{\sC, i-1}$, each of which has a label from set~$\Sigma''_{i-1}$ by induction hypothesis for~$S'_{\sC, i-1}$.
    Hence, the label of $v$ must be from $\trim(\Sigma''_{i-1})$, implying for all $v \in S'_{\sR, i}$, we have $L(v) \in \trim(\Sigma''_{i-1})$.
\end{proof}

We can now combine this result with the independence lemma (\cref{lem:independent-label}) to provide a lower bound for the failure probability of~$\algo$ on the randomized adversarial sequence~$Z$, even after fixing the labels produced by~$\algo$ on nodes~$u^i_0, u^i_1, \ldots, u^i_{c_i}$ for all $i \in \{1, 2, \ldots, k\}$:
\begin{lemma}\label{lem:lowerbound}
    For all choices of $\Sigma'_1, \Sigma'_2, \ldots, \Sigma'_k$, algorithm~$\algo$ fails to produce a correct labelling for the randomized adversarial sequence $Z$ of $G$ with probability more than $\frac{1}{n}$.
\end{lemma}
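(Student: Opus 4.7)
The plan is to argue by contradiction with the hypothesis $d_\Pi = k$. Fix an arbitrary realization of $\Sigma'_1, \ldots, \Sigma'_k$, and attempt to build a chain $\Sigma''_0 = \Sigma, \Sigma''_1, \ldots, \Sigma''_k$ with $\Sigma'_i \subseteq \Sigma''_i$ and $\Sigma''_i \in \flexSCC(\trim(\Sigma''_{i-1}))$ for every $i$. If such a chain exists, then $(\trim(\Sigma''_0), \Sigma''_1, \trim(\Sigma''_1), \ldots, \Sigma''_k, \trim(\Sigma''_k))$ is a good sequence of length $k+1$, contradicting $d_\Pi = k$. Consequently, the construction must fail at some smallest index $j + 1 \leq k$: no path-flexible SCC of the automaton of $\Pi$ restricted to $\trim(\Sigma''_j)$ contains $\Sigma'_{j+1}$.

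Given this failure at step $j+1$, I would argue that the conditional failure probability of $\algo$ exceeds $1/n$. By \cref{lem:inductive-res} applied to the valid partial chain $\Sigma''_0, \ldots, \Sigma''_j$, any valid labeling $L$ must assign labels from $\trim(\Sigma''_j)$ to all nodes in $S'_{\sR, j+1}$. Combining \cref{lem:restr-probl-path} with \cref{obs:path-from-ci-to-ci}, the labels along the unique $G$-path from $u^{j+1}_0$ to each $u^{j+1}_l$ form a walk in the automaton $\mathcal{M}$ of $\Pi$ restricted to $\trim(\Sigma''_j)$. If $\Sigma'_{j+1}$ contains a label outside $\trim(\Sigma''_j)$ or is split across several strongly connected components of $\mathcal{M}$, no such walk exists, so $\algo$ fails with probability $1$.

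The remaining subcase is that $\Sigma'_{j+1}$ lies in a single SCC of $\mathcal{M}$ that is not path-flexible, hence has period $p \geq 2$. Here the randomness of $Z$ becomes essential: the tree distance from $u^{j+1}_0$ to $u^{j+1}_l$ depends linearly on $d^{j+1}_l \in \{2t+1, 2t+2\}$, so toggling $d^{j+1}_l$ shifts the required walk length in $\mathcal{M}$ by $1$. Since walks between fixed endpoints in a period-$p$ SCC have lengths in a single residue class modulo $p$, and $p \geq 2$, at most one of the two values of $d^{j+1}_l$ yields a realizable walk. The conditional failure probability per $l$ is thus at least $1/2$; because the samples $d^{j+1}_1, \ldots, d^{j+1}_{c_{j+1}}$ are mutually independent, the conditional probability that \emph{every} required walk exists is at most $2^{-c_{j+1}} \leq 1/2$, which is strictly below $1 - 1/n$ for $n \geq 3$.

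The main obstacle will be rigorously connecting non-path-flexibility to a periodicity condition and then translating the $\pm 1$ walk-length variation into a residue-class mismatch; care is needed for degenerate situations such as loops in the SCC when $|\Sigma'_{j+1}| = 1$ or when both values of $d^{j+1}_l$ miss the required residue (which only strengthens the failure bound). Once this is in place, both subcases yield conditional failure probability strictly larger than $1/n$ for every realization of $\Sigma'_1, \ldots, \Sigma'_k$, which gives the claim.
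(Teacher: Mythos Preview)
Your overall plan matches the paper's: pick the maximal $j$ for which a chain $\Sigma''_0,\ldots,\Sigma''_j$ exists and then case-split on why extension fails. However, the subcase \enquote{$\Sigma'_{j+1}$ is split across several strongly connected components of $\mathcal{M}$} is not handled correctly. You claim failure probability $1$, but walks between distinct strongly connected components can exist in one direction: the only constraints that \cref{obs:path-from-ci-to-ci} and \cref{lem:restr-probl-path} give are walks \emph{from} $L(u^{j+1}_0)$ \emph{to} each $L(u^{j+1}_l)$, never the reverse and never between two nodes $u^{j+1}_l,u^{j+1}_{l'}$ with $l,l'\ge 1$. If the run happens to place on $u^{j+1}_0$ a label that sits \enquote{upstream} in the condensation of $\mathcal{M}$, every required walk may well exist and the labeling need not fail. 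Conditioning on the \emph{set} $\Sigma'_{j+1}$ does not fix which label lands on $u^{j+1}_0$.

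The paper closes this gap via the second source of randomness in $Z$, which your argument does not exploit: the uniformly random order in which $u^{j+1}_0,\ldots,u^{j+1}_{c_{j+1}}$ are revealed. By \cref{lem:independent-label} the algorithm cannot distinguish these nodes, so a label $\sigma_1$ admitting no walk $\sigma_1\rightsquigarrow\sigma_2$ to some other $\sigma_2\in\Sigma'_{j+1}$ lands on $u^{j+1}_0$ with probability at least $1/(c_{j+1}+1)>1/n$, and only then is failure forced. (A smaller issue: your dichotomy presumes the chain cannot reach length $k$, but it can; what $d_\Pi=k$ forbids is $\trim(\Sigma''_k)\neq\emptyset$. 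When $j=k$ you should instead argue, via \cref{lem:inductive-res} and $S'_{\sR,k+1}\neq\emptyset$, that no valid labeling exists at all, hence failure probability $1$.)
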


\begin{proof}
    Let $j$ be the largest integer in $\{0, 1, 2, \ldots, k\}$ such that there exists $\Sigma''_0, \Sigma''_1, \Sigma''_2, \ldots, \Sigma''_j$ satisfying the conditions of \cref{lem:inductive-res}, that is:
    \begin{itemize}
        \item $\Sigma''_0 = \Sigma$,
        \item for all $i \in \{1, 2, \ldots j\}, \Sigma''_i \in \flexSCC(\trim(\Sigma''_{i-1}))$ and $\Sigma'_i \subseteq \Sigma''_i$.
    \end{itemize}
    Note that such $j$ always exists as these conditions trivially hold true for $j=0$.

    We now do a case analysis on~$j$ and show that in each case the algorithm fails to produce a correct labeling with probability more than~$\frac{1}{n}$:
    \subparagraph*{\boldmath Case I: $j < k$.}
    This means that there does not exist $\Sigma''_{j+1} \supseteq \Sigma'_{j+1}$ such that $\Sigma''_{j+1} \in \flexSCC(\trim(\Sigma''_j))$.
    Let $\mathcal M$ be the automaton associated with the restricted problem
    $\rest{\Pi}{\trim(\Sigma''_j)}$.
    Now one of the following must hold:
    \begin{enumerate}[(a)]
        \item $\Sigma'_{j+1} \nsubseteq \trim(\Sigma''_{j})$,
        \item $\Sigma'_{j+1} \subseteq \trim(\Sigma''_{j})$ but $\Sigma'_{j+1}$ does not form a strongly connected component of $\mathcal M$,
        \item $\Sigma'_{j+1} \subseteq \trim(\Sigma''_{j})$ and $\Sigma'_{j+1}$ forms an inflexible strongly connected component of $\mathcal M$.
    \end{enumerate}
    Observe that if all three are false, then we could take the strongly connected component containing the elements of $\Sigma'_{j+1}$ in $\mathcal M$ as $\Sigma''_{j+1}$.
    This would contradict our assumption that there does not exist $\Sigma''_{j+1} \in \flexSCC(\trim(\Sigma''_{j}))$.

    We now show that for all three cases, the probability that $\mathcal A$ eventually fails to output a correct solution is more than $\frac{1}{n}$:
    \begin{description}
        \item[Case I(a): $\Sigma'_{j+1} \nsubseteq \trim(\Sigma''_{j})$.]
        There must be some $l$ such that label~$L(u^{j+1}_l)$ generated by $\algo$ for node~$u^{j+1}_l$ is not in set $\trim(\Sigma''_j)$.
        By \cref{lem:inductive-res}, we know that in any valid labelling we have $L(v) \in \trim(\Sigma''_{j})$ for every node~$v \in S'_{\sR,j+1}$.
        However, this is violated in this particular run as $u^{j+1}_l \in S'_{\sC,j+1} \subseteq S'_{\sR,j+1}$ but $L(u^{j+1}_l) \notin \trim(\Sigma''_j)$.
        Therefore, irrespective of the other output labels, $\algo$ must fail to generate a valid labelling for $G$.
        Thus, the probability that $\mathcal A$ eventually fails is indeed $1$.

        \item[Case I(b): $\Sigma'_{j+1}$ does not from a strongly connected component of $\mathcal M$.]
        There must be two labels $\sigma_1, \sigma_2 \in \Sigma'_{j+1}$ such that $\sigma_1$ and $\sigma_2$ are not in the same strongly connected component in $\mathcal M$.
        This means that at least one of walks $\sigma_1 \rightsquigarrow \sigma_2$ and $\sigma_2 \rightsquigarrow \sigma_1$ does not exist in~$\mathcal M$.
        Without loss of generality, we may assume that walk~$\sigma_1 \rightsquigarrow \sigma_2$ is missing from~$\mathcal M$.
        Algorithm~$\algo$ uses label~$\sigma_1$ to label at least one of nodes~$u^{j+1}_0, u^{j+1}_1, \ldots, u^{j+1}_{c_{j+1}}$.
        As the choices of the algorithm are independent of the input sequence~$Z$ by \cref{lem:independent-label}, and the sequence~$Z$ contains a uniformly random permutation of the nodes~$u^{j+1}_0, u^{j+1}_1, \ldots, u^{j+1}_{c_{j+1}}$, the probability that~$\algo$ uses label~$\sigma_1$ on node~$u^{j+1}_0$ is at least $\frac{1}{c_{j+1}+1} > \frac{1}{n}$.

        We now show that algorithm~$\algo$ is bound to fail in this case.
        As label~$\sigma_2 \in \Sigma'_{j+1}$, algorithm~$\algo$ must use it to label at least one of the nodes~$u^{j+1}_1, \ldots, u^{j+1}_{c_{j+1}}$.
        Let that node be~$u^{j+1}_l$.
        Consider path~$P = u^{j+1}_0 \leftarrow v_1 \leftarrow \dots \leftarrow v_\alpha \leftarrow u^{j+1}_l$.
        Note that all nodes of~$P$, and their children, belong to set~$S'_{\sR, j}$.
        \Cref{lem:restr-probl-path} now gives us that walk~$\sigma_1 = L(u^{j+1}_0) \rightarrow L(v_1) \rightarrow \dots \rightarrow L(v_\alpha) \rightarrow L(u^{j+1}_l) = \sigma_2$ exists in~$\mathcal M$, which contradicts our assumption that $\sigma_1$ and $\sigma_2$ don't belong to the same strongly connect component.
        
        As this happens with probability at least $\frac{1}{c_{j+1}+1} > \frac{1}{n}$, algorithm~$\algo$ must fail with at least the same probability.

        \item[Case I(c): $\Sigma'_{j+1}$ forms an inflexible strongly connected component of $\mathcal M$.]\hskip 0pt plus 10pt
        Similar to the previous case, there must exist a walk $L(u^{j+1}_0) \rightsquigarrow L(u^{j+1}_l)$ in $\mathcal M$ for each $l \in \{1, 2, \allowbreak\ldots, c_{j+1}\}$.
        For any pair of labels~$(L(u^{j+1}_0), L(u^{j+1}_l))$ which are in an inflexible strongly connected component in $\mathcal M$, automaton~$\mathcal M$ cannot contain walks~$L(u^{j+1}_0) \rightsquigarrow L(u^{j+1}_l)$ of length both $d$ and $d+1$; otherwise the component containing the pair would be flexible.
        However, the distance between nodes $u^{j+1}_0$ and $u^{j+1}_l$ is even with probability~$\frac{1}{2}$ and odd with probability~$\frac{1}{2}$ by construction of sequence~$Z$.
        As the labels chosen by~$\algo$ for nodes $u^{j+1}_0$ and $u^{j+1}_l$ are independent of sequence~$Z$ by \cref{lem:independent-label}, it must fail to choose a label with the correct parity with probability at least~$\frac{1}{2}$.
        Hence the algorithm must fail with probability at least $\frac{1}{2} > \frac{1}{n}$.
    \end{description}
    This proves that if $j < k$, the probability that the algorithm eventually fails is more than $\frac{1}{n}$.
    The other case $j=k$ is relatively simpler:
    \subparagraph*{\boldmath Case II: $j=k$.}
    Let $v$ be a node in $S'_{\sR, k+1}$; note that $S'_{\sR, k+1} \ne \emptyset$ by \cref{lem:containment}.
    Due to \cref{lem:inductive-res}, the label of node~$v$ must be in $\trim(\Sigma''_k)$ in any correct labelling of the graph, which implies that $\trim(\Sigma''_k) \ne \emptyset$. 
    However, this cannot be the case since
    \begin{equation*}
        \left(
            \trim(\Sigma''_0),
            \Sigma''_1,
            \trim(\Sigma''_1),
            \dots,
            \trim(\Sigma''_{k-1}),
            \Sigma''_k,
            \trim(\Sigma''_k)
        \right)
    \end{equation*}
    would be a good sequence of length $k+1$, contradicting $d_\pi = k$.
    Therefore, algorithm $\mathcal A$ must eventually fail (i.e. with probability $1$).

    Since the algorithm $\mathcal A$ fails with probability more than $\frac{1}{n}$ in all cases, the failure probability of $\mathcal A$ in total must be more than $\frac{1}{n}$.
\end{proof}

The main result now follows as a simple corollary of the previous lemma:
\begin{proof}[Proof of \cref{thm:rooted-full}]
    Let~$\algo$ be a \rolcl algorithm solving LCL problem~$\Pi$ on rooted regular trees having depth~$d_\Pi = k \in \N_+$ with locality~$t = o(n^{1/k})$.
    Then by \cref{lem:lowerbound} algorithm~$\algo$ must fail with probability more than~$\frac{1}{n}$.
    Hence any \rolcl algorithm solving~$\Pi$ with high probability must have locality~$\Omega(n^{1/k})$.
\end{proof}

\section{Lower Bounds in the Super-Logarithmic Region in Unrooted Regular Trees}\label{sec:unrooted-regular}

In this section, we restate and prove the following result:
\restateThmUnrootedFull*

Let us thus fix such a problem $\Pi$ and its depth $k$.
We show that the existence of an $o(n^{1/k})$ algorithm which solves $\Pi$
will force a contradiction.
The procedure is analogous to the rooted case.

We pick $\gamma$ to be the smallest integer satisfying: for each set $S\subseteq \VV$ and each $C \in S\setminus \trim(S)$, there exists no correct labeling of $T_{\gamma}^*$ 
where the node configuration of the root node is $C$ and the node configurations of the remaining $\Delta$-degree nodes are in $S$.
By definition of trim, such a $\gamma$ will always exist. We note that in any algorithm, $\gamma$ is a constant, depending only on the underlying LCL problem $\Pi$.

\begin{definition}[Lower bound graphs]\label{unrooted-reg-LB-graphs}
    Let $t$ be any positive integer, and choose $s = 4t+4$. 
    \begin{itemize}
        \item $G_{\sR,1}$ is the rooted tree $T_{\gamma}$, and $G_{\sR,1}^*$ is the rooted tree $T_{\gamma}^*$. We say all nodes in $G_{\sR,1}$ or $G^*_{\sR, 1}$ are in layer $(\sR, 1)$.
        \item For each integer $i \ge 1$, $G_{\sC, i}$ is constructed as follows. Begin with an $s$-node path $(v_1, v_2, \dots, v_s)$, and let $v_1$ be the root.
        For each $1 \le i< s$, append $\Delta - 2$ copies of $G_{\sR,i}$ to $v_i$. For $i=s$, append $\Delta-1$ copies.
        We say nodes $v_1, v_2, \dots, v_s$ are in layer $(\sC, i)$.
        \item For each integer $i \ge 2$, $G_{\sR, i}$ is constructed as follows. Begin with a rooted tree $T_{\gamma}$. Append $\Delta - 1$ copies of $G_{\sC, i-1}$ to each leaf in $T_{\gamma}$.
        We say all nodes in $T_{\gamma}$ are in layer $(\sR, i)$.
        The rooted tree $G^*_{\sR, i}$ is defined analogously by replacing $T_{\gamma}$ with $T^*_{\gamma}$ in the construction.
    \end{itemize}
\end{definition}

The choice of $s = 4t+4$ will be motivated later.
Note that although the trees $G_{\sR, i}, G^*_{\sR, i},$ and $G_{\sC, i}$ are defined as rooted trees, we can treat them as unrooted trees. We define our main lower bound graph to be $G = G^*_{\sR, k+1}$.
We will use this graph to show that if $d_{\Pi} = k$, there can be no algorithm faster than $\Omega(n^{1/k})$ solving $\Pi$ on $G$.

From the construction, it can easily be seen that $G=G^*_{\sR, k+1}$ has $O(t^k)$ nodes. So to show that $\Pi$ requires a locality of $\Omega(n^{1/k})$, it suffices to show that solving $\Pi$ requires locality of at least $t$ on $G$.

We now define a sequence of subsets of nodes in $G$.

\begin{definition}[Subsets of nodes in $G$: $S_{\sR, i}, S_{\sC, i}, U_i$]
    We define the following sets:
    \begin{itemize}
        \item Let $S_{\sR, 1}$ consist of all nodes in $G$ with their radius-$\gamma$ neighborhood isomorphic to $T_{\gamma}$.
        \item For each path $(v_1, v_2, \dots, v_s)$ in layer $(\sC, i)$, choose one element of the set $\{v_{2t+1}, v_{2t+2}\}$ uniformly at random. Define the set of all such elements as $U_i$.
        \item For $1 \le i \le k$, construct $S_{\sC, i}$ as follows: 
        Initialize $S_{\sC, i}$ as $U_i$. For any $u_j, u_k\in U_i$, there exists a path joining them, $P = (u_j, s_1, s_2, \dots, u_k)$. Add each node $s_{l}$ in such a path to $S_{c, i}$.
        \item For $2 \le i \le k+1$, let $S_{\sR, i}$ consist of all nodes in $S_{\sC, i-1}$ with radius-$\gamma$ neighborhood contained entirely within $S_{\sC, i-1}$ and isomorphic to $T_{\gamma}$.
    \end{itemize}
    
\end{definition}

Intuitively, $S_{\sC, i}$ consists of all nodes in $U_i$ and all nodes ``above'' any node in $U_i$.
The layers $(\sR, i)$ and $(\sC, i)$ form a partition of nodes in $G$; we impose an order on these layers for subsequent discussion: let $(\sR, 1)\preceq (\sC, 1)\preceq \dotsb \preceq (\sR, k) \preceq (\sC, k) \preceq (\sC, k+1)$.
For example, ``layer $(\sR, i)$ or higher'' refers to layers $(\sR, i), (\sC, i), \dots, (\sR, k+1)$. 

We now prove some properties of the sets $S_{\sR, i}$ and $S_{\sC, i}$.

\begin{lemma}[Subset containment]
    It holds that $S_{\sR, 1}\supseteq S_{\sC, 1} \supseteq S_{\sR, 2} \supseteq \dotsb \supseteq S_{\sR, k+1}$.  
\end{lemma}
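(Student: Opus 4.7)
The plan is to mirror the proof of Lemma~\ref{lem:containment} from the rooted case by splitting the chain into two families of containments:
(a)~$S_{\sC, i} \supseteq S_{\sR, i+1}$ for every $1 \le i \le k$, and
(b)~$S_{\sR, i} \supseteq S_{\sC, i}$ for every $1 \le i \le k$.
Containment~(a) is immediate from the definition of $S_{\sR, i+1}$, which by construction selects only nodes already lying in $S_{\sC, i}$, so all the real work is in proving~(b).

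For~(b), fix $v \in S_{\sC, i}$. By construction $v$ either belongs to $U_i$ or lies on a tree-path joining two elements of $U_i$; in either case $v$ sits in a core path of layer $(\sC, i)$ or strictly above it in the layer order. The quantitative heart of the argument is that each element of $U_i$ is chosen from $\{v_{2t+1}, v_{2t+2}\}$ on a core path of length $s = 4t + 4$, so it sits at distance at least $2t+1$ from both endpoints of its core path. Since $\gamma$ depends only on $\Pi$ while our lower bound argument is free to take $t$ arbitrarily large, we may assume $t \ge \gamma$. Then the radius-$\gamma$ neighborhood of $v$ stays inside the recursively built structure, and by the recursive construction it is isomorphic to $T_\gamma$. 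This already settles the case $i = 1$ and reduces the case $i \ge 2$ to checking that this neighborhood is also contained in $S_{\sC, i-1}$.

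The main obstacle is exactly that last containment check for $i \ge 2$: verifying that the radius-$\gamma$ ball of every node in $S_{\sC, i}$ lies inside $S_{\sC, i-1}$. The strategy is to unfold how $S_{\sC, i-1}$ is built from $U_{i-1}$ together with the tree-paths joining its elements, observe that this union absorbs every node in layers $(\sR, i)$ and above as well as the core-path ancestors of the $U_{i-1}$-nodes in layer $(\sC, i-1)$, and then use the same $2t + 1$ slack as above — applied now to core paths of layer $(\sC, i-1)$ — to argue that no walk of length at most $\gamma$ starting from a node in $S_{\sC, i}$ can reach an endpoint of a core path in layer $(\sC, i-1)$ and thereby escape $S_{\sC, i-1}$. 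A clean induction on $i$, alternating between applications of~(a) at lower indices and the neighborhood argument in~(b), then yields the full chain $S_{\sR, 1} \supseteq S_{\sC, 1} \supseteq \dotsb \supseteq S_{\sR, k+1}$.
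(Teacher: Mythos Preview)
Your proposal follows the same two-step decomposition as the paper: (a)~$S_{\sC,i}\supseteq S_{\sR,i+1}$ is immediate from the definition, and (b)~$S_{\sR,i}\supseteq S_{\sC,i}$ is the substantive part, established by showing that every $v\in S_{\sC,i}$ lies in layer $(\sC,i)$ or higher and that its radius-$\gamma$ neighborhood is both isomorphic to $T_\gamma$ and (for $i\ge 2$) contained in $S_{\sC,i-1}$. This is exactly the paper's route.

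One small point worth tightening: the mechanism you invoke for the neighborhood containment in (b)---the ``$2t+1$ slack applied to core paths of layer $(\sC,i-1)$''---is not actually the operative one. A walk of length at most $\gamma$ from a node in layer $(\sC,i)$ or above never even reaches layer $(\sC,i-1)$: the $T_\gamma$ of layer $(\sR,i)$ sits between them and already provides a buffer of depth $\gamma+1$. Hence the radius-$\gamma$ ball stays entirely in layers $(\sR,i)$ and above, which you correctly observe are absorbed by $S_{\sC,i-1}$; the $2t+1$ slack on the $(\sC,i-1)$ paths is never tested. Relatedly, ``reaching an endpoint of a layer-$(\sC,i-1)$ core path'' is not the only way to leave $S_{\sC,i-1}$---one could also step into a $G_{\sR,i-1}$ copy hanging off such a core path---but the $T_\gamma$ buffer rules out both simultaneously. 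This is a mislabeling of the reason rather than a genuine gap; once you sit down to write out the distances, the correct buffer will present itself.
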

\begin{proof}
    Each $S_{\sC, i}\supseteq S_{\sR, i+1}$ follows immediately from the definition, as we require each node in $S_{\sR, i+1}$ to be in $S_{\sC, i}$ in the construction.

    To see that $S_{\sR, i} \supseteq S_{\sC, i}$, observe that $S_{\sR, i}$ contains all nodes in layers $(\sR, i)$ and higher, for all $i \ge 2$:
    each node in layer $(\sR, i)$ or higher is a distance at least $2t+1 \gg \gamma$ from any node in $U_i$, and therefore its radius-$\tau$ neighborhood is contained entirely within $S_{\sC, i}$.
    It can easily be seen from the construction that this neighborhood is isomorphic to $T_{\gamma}$. So we have that $S_{\sR, i} \supseteq S_{\sC, i}$.
\end{proof}

We now define our randomized adversarial order. Given a \rolcl algorithm $\algo$ which has locality $t$,
we will show that if $\algo$ is given nodes in this order, it will fail with probability asymptotically greater than $\frac{1}{n}$.

\begin{definition}[Randomized adversarial order]\label{unrooted-rand-adversarial-order}
    We define a \emph{randomized adversarial order} of nodes as follows: begin with an empty sequence $\mathcal{S}$.
    \begin{enumerate}
        \item For $i$ from $1$ to $k$, append the nodes of $U_i$ to $\seq$ in any randomized order.
        \item Add all the remaining nodes (those in $V(G)\setminus \bigcup_i U_i$) to $\seq$ in any order.
    \end{enumerate}
    Also, define $\seq^U$ to be the subsequence of $\seq$ consisting only of nodes in some $U_i$. Note that $\seq^U$ is a prefix of $\seq$. 
\end{definition}

Now, we argue that when labeling the nodes of $\bigcup_i U_i$, the label which $\algo$ assigns the $i$th node of $\seq^U$ is independent of the randomness in $\seq^U$'s order.

\begin{lemma}[Independence of labels of nodes in each $U_i$]\label{independence-lemma-unrooted}
    For $i \in \{1, 2, \dots, k\}$, the labels assigned by $\algo$ to the nodes of $U_i$ are independent of the way the nodes of $U_i$ are ordered in~$\seq^U$.
\end{lemma}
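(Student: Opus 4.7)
The plan is to mirror the proof of Lemma~\ref{lem:independent-label} from the rooted case, adapted to the unrooted construction. I will show that the radius-$t$ neighborhoods of the nodes in $U_i$ are pairwise disjoint, mutually isomorphic as local views, and contain no node of $U_1 \cup \cdots \cup U_{i-1}$. It will then follow that when $\algo$ processes the $U_i$-block of $\seq^U$, the local view presented at the $j$-th step is the same — up to isomorphism and shared history — regardless of which physical node of $U_i$ occupies that step. Permuting the order within $U_i$ therefore only permutes which physical node receives which output label, leaving the joint distribution of labels on $U_i$ invariant.

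The first step will verify the three geometric claims, all of which rest on the choice $s = 4t+4$ together with each $u \in U_i$ being at position $2t+1$ or $2t+2$ on its length-$s$ core path. Any two distinct $U_i$ nodes lie on different $G_{\sC,i}$ core paths, whose shortest connecting path must traverse a common ancestor in some $G_{\sR,i+1}$ (or higher); combining the $\ge 2t$ intra-path offset on both sides with this bridge yields a mutual separation strictly greater than $2t$, giving disjoint $t$-neighborhoods. Each such neighborhood sees only a generic middle slice of a $G_{\sC,i}$ together with its hanging $G_{\sR,i}$ subtrees, and these are identical across $U_i$ by the recursive construction. Finally, reaching a node of $U_{i-1}$ from $v \in U_i$ requires stepping into a hanging $G_{\sR,i}$, crossing its $T_{\gamma}$, descending into a $G_{\sC,i-1}$, and then traveling $\ge 2t$ more along its core path, so every $U_j$-node with $j < i$ lies well beyond distance $t$ from $v$.

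The second step will be a coupling argument: I fix the random bits of $\algo$, the choices that define the sets $U_1,\ldots,U_k$, and the orderings of all $U_j$ with $j \ne i$. Comparing any two orderings of $U_i$, the three facts above guarantee that $\algo$'s global state after processing $U_1,\ldots,U_{i-1}$ is identical in the two runs, and that within the $U_i$-block the $j$-th local view is an unlabeled isomorphic copy of the same structure in both runs. The output label chosen at step $j$ is therefore the same function of $j$ and the shared randomness, so the two runs differ only in which physical nodes receive these labels — exactly the content of the independence claim. The main technical obstacle will be handling node identifiers, which formally break isomorphism of the local views; as in the rooted case this is absorbed by noting that the oblivious adversary can assign identifiers respecting the recursive automorphism of $G$ on the isomorphic $U_i$-pieces, so the local views become genuinely identical as identifier-labelled graphs and the coupling argument goes through.
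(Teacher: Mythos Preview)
Your proposal is correct and follows essentially the same approach as the paper: both arguments establish that the radius-$t$ neighborhoods of the $U_i$ nodes are pairwise disjoint, mutually isomorphic, and contain no previously-processed (hence no labeled) node, and then conclude indistinguishability. Your write-up is in fact more careful than the paper's own proof---you spell out a coupling argument and separate the three geometric facts cleanly---but one remark: in the \rolcl model as defined here the algorithm sees only the induced subgraph (no unique identifiers are part of the model), so the identifier issue you flag as the ``main technical obstacle'' does not actually arise and that paragraph can be dropped.
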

\begin{proof}
    Let $U_i = \{u_1^i, u_2^i, \dots, u^{c_i}_i\}$. The radius-$t$ neighborhoods of any $u_p^i$ and $u_q^i$ are identical and disjoint: since each node of $U_i$ is chosen as the $2t+1$ or $2t+2$th node in a path of length $4t+4$,
    its radius-$t$ neighborhood cannot see into any other such paths, and because $U_i \subseteq S_{\sC, i} \subseteq S_{\sR, i}$ for all $i \in \{1, 2, \dots, k\}$, each $u_{p}^i$'s radius-$t$ neighborhood is isomorphic to $T_{\gamma}$.
    Since no nodes within the radius-$t$ neighborhood of any $u^i_p$ are labeled when $u^i_p$ is revealed to the algorithm, the algorithm will be unable to distinguish between $u_p^i$ and $u_q^i$ for any $p\ne q$.
    For $i_1\ne i_2$, the radius-$t$ neighborhoods of any two $u_{p_1}^{i_1}$ and $u_{p_2}^{i_2}$ are also disjoint as their nearest common ancestor is a distance of at least $2t$ from at least one of them.
    So every $u_p^i$'s radius-$t$ neighborhood is identical, disjoint, unlabeled at the time of revealing to the algorithm, for all orders of $\seq$. Thus, the algorithm is unable to distinguish between the nodes of $U_i$, and thus, 
    the labeling of $U_i$ is independent of the randomness of the sequence $\seq$, for all $i\in\{1,2,\dotsc,k\}$.
\end{proof}

Note that this also means the orientation---in particular, the half-edge labels which connect a given $u^i_j$ to the rest of the nodes in a path of layer $(\sC, i)$---is indistinguishable to the algorithm.

We now have all necessary lemmas and definitions for the proof of
\cref{thm:unrooted-full}.

\subparagraph*{Proof outline.} We provide a high-level overview of the proof before discussing it with more technicality.
We first prove by induction that all edges in $S_{\sC, i}$ must be labeled with labels in a flexible-SCC of the labels used in $S_{\sR, i}$,
and that all nodes in $S_{\sR, i}$ must be labeled with labels in the trim of the labels used in $S_{\sR, i-1}$. 
This forces the labels of these sets to be contained within a corresponding set of our ``good sequence''---that is, the labels of the nodes in $S_{\sR, i}$ must
belong to $\mathcal{V}_i$, and the labels of $S_{\sC, i}$ must belong to $\mathcal{D}_i$. But this requires that $S_{\sR, k+1}$ be labeled with elements of some $\mathcal{V}_{k+1}$---
which, because the depth of $\Pi$ is $d_{\Pi} = k$ by assumption, is empty. This allows us to conclude that $\algo$ is unable to generate a valid labeling of $G$ with a sufficiently high probability of success.

Throughout the proof, we use the following notation. For each node $v\in G$, we define $\mathcal{V}^{\algo}_{v}$ as the set of all possible node configurations of $v$ that can occur in any run of $\algo$.
Similarly, for any two edges $e_1$ and $e_2$ incident to a node $v$, we define
$\mathcal{D}^{\algo}_{v, e_1, e_2}$ to be the set of all multisets $\{a, b\}\in
\multich{\Sigma}{2}$ such that $\{a, b\}$ is a possible outcome of labeling the
two half-edges $(v, e_1)$ and $(v, e_2)$ when we run $\algo$. 

We now define our induction hypotheses.
\begin{itemize}
  \item \textbf{\boldmath Induction hypothesis for $S_{\sR, i}$.} For each $1\le i\le
  k+1$, any node $v\in S_{\sR, i}$ satisfies $\mathcal{V}^{\algo}_v \subseteq
  \mathcal{V}_i$.
  \item \textbf{\boldmath Induction hypothesis for $S_{\sC, i}$.} For each $1\le i \le k$,
  for each node $v\in S_{\sC, i}$ and any two incident edges $e_1 = (v, u)$ and
  $e_2 = (v, w)$ such that $u$ and $w$ are in layer $(\sC, i)$ or higher, we
  have $\mathcal{D}^{\algo}_{v, e_1, e_2} \subseteq \mathcal{D}_i$.
\end{itemize}

\begin{proof}[Base case: $S_{\sR, 1}$]
    We show that the induction hypothesis for $S_{\sR, 1}$ holds.
    We recall that $\gamma$ was chosen to be the smallest integer such that for each set $S\subseteq \mathcal{V}$ and each $C \in S\setminus \trim(S)$, there exists no correct labeling of $T_{\gamma}^*$ 
    where the node configuration of the root node is $C$ and the node configurations of the remaining $\Delta$-degree nodes are in $S$.
    Now consider any node $v\in S_{\sR, 1}$: by construction, the radius-$\gamma$ neighborhood of $v$ is isomorphic to $T^*_{\gamma}$.
    By setting $S=\mathcal{V}$, we can conclude there is no correct labeling of $G$ such that the labeling chosen for $v$ is in $\mathcal{V}\setminus\trim(\mathcal{V}) = \mathcal{V}\setminus \mathcal{V}_1$. 
    So for any run of $\algo$, the node configuration of $v$ must be contained within $\mathcal{V}_i$---that is, $\mathcal{V}^{\algo}_v\subseteq \mathcal{V}_i$.
\end{proof}

\begin{lemma}[Inductive step: $S_{\sR, i}$]
    For $2\le i\le k$, if the induction hypotheses for $S_{\sR, i-1}$ and $S_{\sC, i-1}$ hold, then the induction hypothesis for $S_{\sR, i}$ holds.
\end{lemma}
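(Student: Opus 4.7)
The plan is to mimic the base-case argument, but with the role of $\mathcal{V}$ replaced by $\mathcal{V}'_{i-1}$---the node-constraint set obtained by restricting $\Pi_{i-1}$ to the multisets in $\mathcal{D}_{i-1}$---and to use both induction hypotheses to translate the labeling produced by $\algo$ around an arbitrary $v \in S_{\sR, i}$ into a $\mathcal{V}'_{i-1}$-conforming labeling of a stand-alone $T_{\gamma}^{*}$.

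First we fix $v \in S_{\sR, i}$ and observe, via the subset containment lemma, that $v \in S_{\sC, i-1}$ and that its radius-$\gamma$ neighborhood $N$ lies entirely inside $S_{\sC, i-1}$ and is isomorphic to $T_{\gamma}^{*}$. In particular, every neighbor of every non-leaf node of $N$ also lies in $S_{\sC, i-1}$, and is therefore in layer $(\sC, i-1)$ or higher. We then apply the $S_{\sC, i-1}$ hypothesis to each non-leaf $u \in N$ and each pair of its incident edges: every pair of half-edges at $u$ has its label multiset in $\mathcal{D}_{i-1}$, so $u$'s configuration $C_u$ satisfies $\multich{C_u}{2} \subseteq \mathcal{D}_{i-1}$. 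Combining this with the $S_{\sR, i-1}$ hypothesis, which forces $C_u \in \mathcal{V}_{i-1}$, yields $C_u \in \mathcal{V}'_{i-1}$ for every $\Delta$-degree node $u$ strictly inside $N$---including $v$ itself.

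With this in hand, the labeling produced by $\algo$ restricts on $N$ to a correct labeling of a stand-alone $T_{\gamma}^{*}$ rooted at $v$, in which the root has configuration $C = C_v$ and every other $\Delta$-degree node has configuration in $S := \mathcal{V}'_{i-1}$. By the very choice of $\gamma$, this forces $C \in \trim(\mathcal{V}'_{i-1}) = \mathcal{V}_i$. Since the same conclusion holds in every run of $\algo$, we obtain $\mathcal{V}^{\algo}_{v} \subseteq \mathcal{V}_i$, which completes the inductive step.

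The main subtlety will be the bookkeeping in the second paragraph---checking that every $\Delta$-degree interior node of $N$ truly satisfies the preconditions of the $S_{\sC, i-1}$ hypothesis (both endpoints of every incident edge must sit in layer $(\sC, i-1)$ or higher), and that the copy of $T_{\gamma}^{*}$ extracted from $N$ is genuinely stand-alone so that the characterization of $\gamma$ applies verbatim. These are essentially consequences of the subset containment lemma together with the defining property $N \subseteq S_{\sC, i-1}$, but deserve careful verification.
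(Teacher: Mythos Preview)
Your proposal is correct and follows essentially the same route as the paper's proof: fix $v\in S_{\sR,i}$, use $N\subseteq S_{\sC,i-1}$ to invoke both the $S_{\sR,i-1}$ and $S_{\sC,i-1}$ hypotheses on every $\Delta$-degree node of $N$, conclude that each such node's configuration lies in the restriction $\mathcal{V}'_{i-1}$ of $\mathcal{V}_{i-1}$ to $\mathcal{D}_{i-1}$, and then appeal to the defining property of $\gamma$ with $S=\mathcal{V}'_{i-1}$ to force the root configuration into $\trim(\mathcal{V}'_{i-1})=\mathcal{V}_i$. Your indexing (using $\mathcal{D}_{i-1}$ rather than $\mathcal{D}_i$) is in fact the internally consistent one; the paper's text has an off-by-one slip here, and your version is what the good-sequence definition requires.
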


\begin{proof}
    Consider any $v\in S_{\sR, i}$. We recall that by the construction of $S_{\sR, i}$, $v$'s radius-$\gamma$ neighborhood is isomorphic to $T^*_{\gamma}$.
    We note that the $\Delta$-degree nodes in this $T^*_{\gamma}$ are the nodes within the radius-$(\gamma - 1)$ neighborhood of $v$. Consider any node $u$ within this neighborhood.
    Since $u\in S_{\sC, i-1}\subseteq S_{\sR, i-1}$, we can apply the inductive hypothesis for $S_{\sR, i-1}$ to conclude that $\mathcal{V}_u^{\algo}\subseteq \mathcal{V}_{i-1}$. 
    Further, since all neighbors of $u$ are in $ S_{\sC, i-1}$, we can apply the
    inductive hypothesis on all edges $e_1, e_2$ incident to $u$, and conclude
    that $\mathcal{D}^{\algo}_{u, e_1, e_2} \subseteq \mathcal{D}_i$.
    From this, it is clear that $\mathcal{V}_u^{\algo}\subseteq \mathcal{S}$,
    where $\mathcal{S}$ consists of node configurations $V\in\mathcal{V}_{i-1}$ such that for all $\{a, b\}\in \multich{\Sigma}{2}$ such that $\{a, b\} \subseteq V$, $\{a, b\}\in \mathcal{D}_i$.
    (Intuitively, this is the restriction of $\mathcal{V_{i-1}}$ to node configurations which have every pair of edge labels $\{a,b\}\in V$ contained in $\mathcal{D}_i$.)

    So all $\Delta$-degree nodes in the $T_{\gamma}^*$ rooted at $v$ have labels in $\mathcal{S}$. So by the definition of $\gamma$, no valid labeling of this $T^*_{\gamma}$ can assign $v$ a node configuration from $\mathcal{S}\setminus \trim(\mathcal{S})$.
    Thus, $\mathcal{V}^{\algo}_v \subseteq \trim(\mathcal{S}) = \mathcal{V}_i$.
\end{proof}

We now prove the inductive hypothesis for $\mathcal{D}_i$. Note that this step is significantly more involved.

\begin{lemma}[Inductive step: $S_{\sC, i}$]
    For $1\le i\le k$, if the induction hypothesis holds for $S_{\sR, i}$, then the induction hypothesis for $S_{\sC, i}$ holds.
\end{lemma}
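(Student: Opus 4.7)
The plan is to mirror the rooted analogue (the proof of \cref{lem:inductive-res} together with Case I(c) of \cref{lem:lowerbound}), with the twist that labels now live on half-edges. Fix any $v \in S_{\sC, i}$ with incident edges $e_1 = (v, u)$, $e_2 = (v, w)$, where $u, w$ are in layer $(\sC, i)$ or higher, and take any $\{a, b\} \in \mathcal{D}^{\algo}_{v, e_1, e_2}$, arising from some run $\rho$ of $\algo$. By the construction of $S_{\sC, i}$, one can pick $u^i_l, u^i_m \in U_i$ so that the unique path $P$ from $u^i_l$ to $u^i_m$ in $G$ passes through $v$ via the edges $e_1, e_2$; moreover every internal node of $P$ lies in $S_{\sR, i}$. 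Applying the induction hypothesis for $S_{\sR, i}$ to $\rho$, the node configurations along $P$ lie in $\mathcal{V}_i$, so the consecutive half-edge label pairs form a walk $\omega$ in the automaton $\mathcal{M}_{\Pi_i}$ of $\Pi_i = (\Delta, \Sigma, \mathcal{V}_i, \mathcal{E})$, starting at some automaton vertex $\alpha$ at $u^i_l$, ending at $\beta$ at $u^i_m$, and passing through the vertex $(a, b)$ (or $(b, a)$, depending on orientation) at $v$.

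The next step is to show that all automaton vertices occurring at $U_i$ nodes across all runs of $\algo$ lie in a single path-flexible SCC of $\mathcal{M}_{\Pi_i}$. For strong connectedness, \cref{independence-lemma-unrooted} makes the joint distribution of half-edge labels on $U_i$ permutation-invariant, so any two occurring automaton vertices can appear at both $u^i_l$ and $u^i_m$ in different runs, yielding walks between them in both directions via the preceding paragraph. For path-flexibility I would exploit the randomness $d^i_l \in \{2t+1, 2t+2\}$: altering only the choice for $u^i_l$ flips the parity of $|P|$, and by \cref{independence-lemma-unrooted} this random choice is independent of $\algo$'s outputs on $U_i$. If the SCC were periodic, then for some occurring pair $(\alpha, \beta)$ no walk of one required parity between them would exist in $\mathcal{M}_{\Pi_i}$, and so $\algo$ would fail with probability at least $1/2$, contradicting the $1/n$ failure bound.

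Lifting this path-flexible SCC to $\multich{\Sigma}{2}$ yields an element $\mathcal{D}^* \in \flexSCC(\mathcal{V}_i)$, which we take as the $\mathcal{D}_i$ in the good sequence we are building. Since $(a, b)$ lies on $\omega$ inside the SCC, $\{a, b\} \in \mathcal{D}_i$, which establishes the induction hypothesis for $S_{\sC, i}$.

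The main obstacle I anticipate is formalizing the parity argument: the path $P$ between two $U_i$ nodes traverses several core paths as well as nodes in higher layers, so $|P|$ depends on several random $d^i_\cdot$ choices simultaneously. I would handle this by fixing every other random choice and varying only $d^i_l$ (effectively sliding the chosen $U_i$ node in a single core path between positions $2t+1$ and $2t+2$), which flips $|P|$ by exactly one, letting the rest of the argument proceed as in the rooted Case I(c) of \cref{lem:lowerbound}.
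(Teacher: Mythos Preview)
Your proposal is correct and follows essentially the same line as the paper's proof: pick two $U_i$ nodes whose connecting path passes through $v$ via the prescribed edges, use the hypothesis for $S_{\sR,i}$ to place the resulting walk inside $\mathcal{M}_{\Pi_i}$, argue that all automaton states seen at $U_i$ nodes lie in a single path-flexible SCC, and conclude that the pair $\{a,b\}$ lies in that component, which is then taken as $\mathcal{D}_i$. The path-construction step you sketch is exactly what the paper does (it builds $P'$ by splicing two $U_i$--$U_i$ paths through $w_1,v,w_2$), so there is no gap there.

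The one genuine methodological difference is in how path-flexibility is established. The paper varies the random positions of \emph{both} endpoints, obtaining walk lengths $\alpha,\alpha+1,\alpha+2$ between a fixed pair of automaton states, and then invokes coprimality of $\alpha+1$ and $\alpha+2$ (Frobenius) to get flexibility directly. You instead vary a single endpoint to get two consecutive lengths and argue by contradiction with the $1/n$ failure bound that the component cannot be periodic. Both arguments are valid and equally short; yours is closer in spirit to Case I(c) of the rooted proof, while the paper's avoids the explicit appeal to the failure probability at this stage. Your anticipated obstacle (many $d^i_\cdot$ choices influencing $|P|$) and your fix (condition on all but one) are exactly right and in fact slightly cleaner than the paper's somewhat informal ``this length is unknown to the algorithm'' phrasing.
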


\begin{proof}
    We first show that all $u_1, u_2\in U_i$ must belong to the same path-flexible strongly connected component. Then, we show that all $v\in S_{\sC, i}$ must also belong to this $\flexSCC$. Finally, we argue that this $\flexSCC$ is indeed $\mathcal{D}_i$.

    Let $u_1, u_2 \in U_i$. Recalling that each $u_i$'s distance from layer $(\sR, i+1)$ was chosen uniformly at random from the set $\{2t+1, 2t+2\}$, let $\alpha$ denote the distance between $u_1$ and $u_2$ if $2t+1$ was the chosen distance from layer $(\sR, i+1)$ for both $u_1$ and $u_2$.
    So the actual distance between $u_1$ and $u_2$ is either $\alpha$ (with probability $1/4$), $\alpha+1$ (with probability $1/2$), or $\alpha+2$ (with probability $1/4$). 
    Further, by \cref{independence-lemma-unrooted}, at the time they are revealed to the algorithm $\algo$, it does not know which of these 3 values is the true distance 
    between $u_1$ and $u_2$, nor does it know which half-edge of each node is contained in the path between $u_1$ and $u_2$. 

    Let $\mathcal{D}_{u_1}$ denote the set of all elements of
    $\multich{\Sigma}{2}$ that can be assigned to 2 half-edges of $u_1$ in some
    fixed correct run of $\algo$. 
    Define $\mathcal{D}_{u_2}$ analogously for $u_2$.
    Since the path between $u_1$ and $u_2$ can be of length $\alpha$,
    $\alpha+1$, or $\alpha+2$, and this length is unknown to the algorithm, any
    $\{ a_1, b_1 \} \in \mathcal{D}_{u_1}$ and $\{ a_2, b_2 \} \in
    \mathcal{D}_{u_2}$ must have walks of all of these lengths in
    $\mathcal{M}_{\mathcal{V}_i}$.
    Since $\alpha+1$ and $\alpha+2$ are clearly coprime, there exists some $N\in\mathbb{N}$ (the Frobenius number of $\alpha+1$ and $\alpha+2$) such that for all $n>N$, there exist $x, y\in \mathbb{N}$ such that $n = x(\alpha+1) + y(\alpha+2)$.
    By definition, then, all such $\{ a_1, b_1 \}$ and $\{ a_2, b_2 \}$ must
    belong to the same path-flexible strongly-connected component of
    $\mathcal{V}_i$. 
    We call this component $\mathcal{D}_U$.
    
    Now, consider any $v\in S_{\sC,i}$, and any two of its incident edges $e_1 =
    \{ v, w_1 \}$ and $e_2 = \{ v, w_2 \}$ such that $w_1, w_2$ are in layer
    $(\sC, i)$ or higher.
    Note that $w_1, w_2$ are therefore in $S_{\sC, i}$ as well. By construction of $S_{\sC, i}$, there is a path $P_1 = (u_1, \dots, w_1, \dots, u_2)$ and a path $P_2 = (u_3, \dots, w_2, \dots, u_4)$ for some $u_1, u_2, u_3, u_4 \in U_i$ (with $u_1 \ne u_2, u_3\ne u_4$). 
    Without loss of generality, suppose $u_1\ne u_4$. Then we can construct a path $P' = (u_1, \dots, w_1, v, w_2, \dots, u_4)$ between $u_1$ and $u_4$ which passes through $e_1$ and $e_2$.
    Let $a_1$ be the edge adjacent to $u_1$ which is traversed in $P'$, and
    $b_1$ any other edge adjacent to $u_1$. Define $a_4$ and $b_4$ analogously
    for $u_4$.

    Let $L(v,e)$ denote the label assigned to the half-edge $(v,e)$.
    By the previous observations, $L(a_1, b_1)$ and $L(a_4, b_4)$ belong to the
    same flexible strongly-connected component, so $L(a_1,b_1)\rightarrow \dots
    \rightarrow L(e_1, e_2) \rightarrow \dots \rightarrow L(a_4, b_4)$
    is a walk in the automaton $\mathcal{M}$ which starts and ends in the same
    flexible strongly connected component.
    Thus, $L(e_1, e_2)$ must also belong to this component---that is,
    $\mathcal{D}_{v, e_1, e_2}\subseteq \mathcal{D}_U$.
    
    Finally, it can be seen that $\mathcal{D}_U$ is indeed $\mathcal{D}_i$ by noting that we must be able to label $S_{\sC, i}$ for $i \in \{1, \dots, k\}$. We have seen that $\mathcal{D}_U$ must be a path-flexible strongly connected component of $\mathcal{V}_i$.
    Thus, if $\mathcal{V}_i$ is a component of a good sequence of length less than $k$, then $\trim(\flexSCC(\mathcal{V}_{k-1}))=\emptyset$ by definition, and we would thus have that $\mathcal{D}_U = \flexSCC(\mathcal{V}_{i-1}) = \emptyset$, and would not be able to label $S_{\sC, k}$. 
\end{proof}

We now have sufficient tools to prove \cref{thm:unrooted-full}:

\begin{proof}[Proof of \cref{thm:unrooted-full}]
    Given a \rolcl algorithm $\algo$ with depth $k = d_\Pi$ which runs with $o(n^{1/k})$ locality, we can derive a contradiction. 
    Taking $G$ to be defined as above, with subsets $S_{\sR, 1}, S_{\sC, 1}, \dots, S_{\sC, k}, S_{\sR, k+1}$, the previous induction argument tells us that $S_{\sR, k+1}$
    must be labeled with elements of some $\mathcal{V}_{k+1} = \trim(\flexSCC(\mathcal{V}_k))$. But if $\mathcal{V}_{k+1}\ne \emptyset$, then we have a good sequence of length $k+1$---a contradiction to the assumption that $d_{\Pi}=k$.
    Thus, no such $o(n^{1/k})$ algorithm can exist, and $\Pi$ has locality $\Omega(n^{1/k})$.
\end{proof}

\section{\boldmath Extending the Gap Between \texorpdfstring{$o(n)$}{o(n)} and \texorpdfstring{$\omega(\sqrt{n})$}{sqrt(n)} in Unrooted Trees}\label{sec:general-gap}

Here, we are concerned with the following result:

\restateThmUnrootedSqrtn*

Our proof is based on the work of Balliu et al.~\cite{almostGlobalLOCAL}.
In said paper the authors show that, in the \local model, any LCL with sublinear
locality $o(n)$ can be transformed into an LCL with locality $O(\sqrt{n})$. 

\subsection{Previous Work}

We provide a brief overview of the proof described in \cite{almostGlobalLOCAL}.

Suppose we have an LCL $\Pi$ which is solved by an algorithm $\algo$ which is known to run with sublinear ($o(n)$) locality.
For any graph $\mathcal{G}$, we can distributedly construct a \enquote{virtual graph} $\mathcal{G}'$ with similar properties to $\mathcal{G}$
by viewing a radius-$O(\sqrt{n})$ neighborhood. $\mathcal{G}'$ is constructed to
be much bigger than $\mathcal{G}$, and has its size upper-bounded by $N\gg n$.
It will be possible to simulate the execution of $\algo$ on $\mathcal{G}'$ by viewing only its corresponding radius-$O(\sqrt{n})$ neighborhoods
in $\mathcal{G}$. 

Given a tree $\tree$, we will now describe how we distributively construct
$\tree'$ using radius $O(\sqrt{n})$ neighborhoods of $\tree$.

Let $\tau = c\sqrt{n}$ for a constant $c$ which depends only on the LCL problem $\Pi$ (we will determine its precise value later).
Given a node $v\in\tree$, we consider its radius-$2\tau$ neighborhood. We construct the \emph{skeleton tree} $\skeltree$ by performing the following operation $\tau$ times:
If a node within the radius-$2\tau$ neighborhood of $v$ is a leaf node, remove
it. Repeating this operation $\tau$ times results in removing all subtrees of
height $\tau$ or less.

Now from $\skeltree$ we construct a set of paths $\pathset_{\tree}$ by first removing all nodes of degree greater than 2. 
We are left with a set of paths which we will call $\skeltree'$. We find a $(c+1, c)$ ruling set $\mathcal{R}$ for these paths. Note that this can be done with constant locality in the \onlinelocal model.
Remove all nodes from $\mathcal{R}$ from $\skeltree'$ to obtain
$\pathset_{\tree}$.

So we have a set $\pathset_{\tree}$ of paths of length $l$ such that $c \le l
\le 2c$. 
From here, we construct~$\tree'$. 

We now define $c = \lpump + 4r$. 
Since $c > \lpump$, we can apply the pumping lemma to make each path in $\pathset$'s corresponding path in $\tree$ arbitrarily long.

\subsection{Constructing a Virtual Tree}
\label{sec:constructing-virtual-tree}

At a high level, we use the following procedure, nearly identical to that described by Balliu et al.~\cite{almostGlobalLOCAL}; the steps are visualized in \cref{fig:pumping-paths}.
For each node we are given, we construct a portion of a virtual graph with many more nodes than our original graph.
This graph consists of radius-$O(\sqrt{n})$ neighborhoods connected by very long paths, which cannot be seen in full by the algorithm.
We show we can run the algorithm on this graph (that is, without knowledge of the parts of the original graph which are outside of the selected node's radius-$O(\sqrt{n})$ neighborhood),
and that the results of this algorithm will give us a valid labeling of the original graph.

\begin{figure}[t]
    \centering
    (a)\includegraphics[scale=0.3,valign=c]{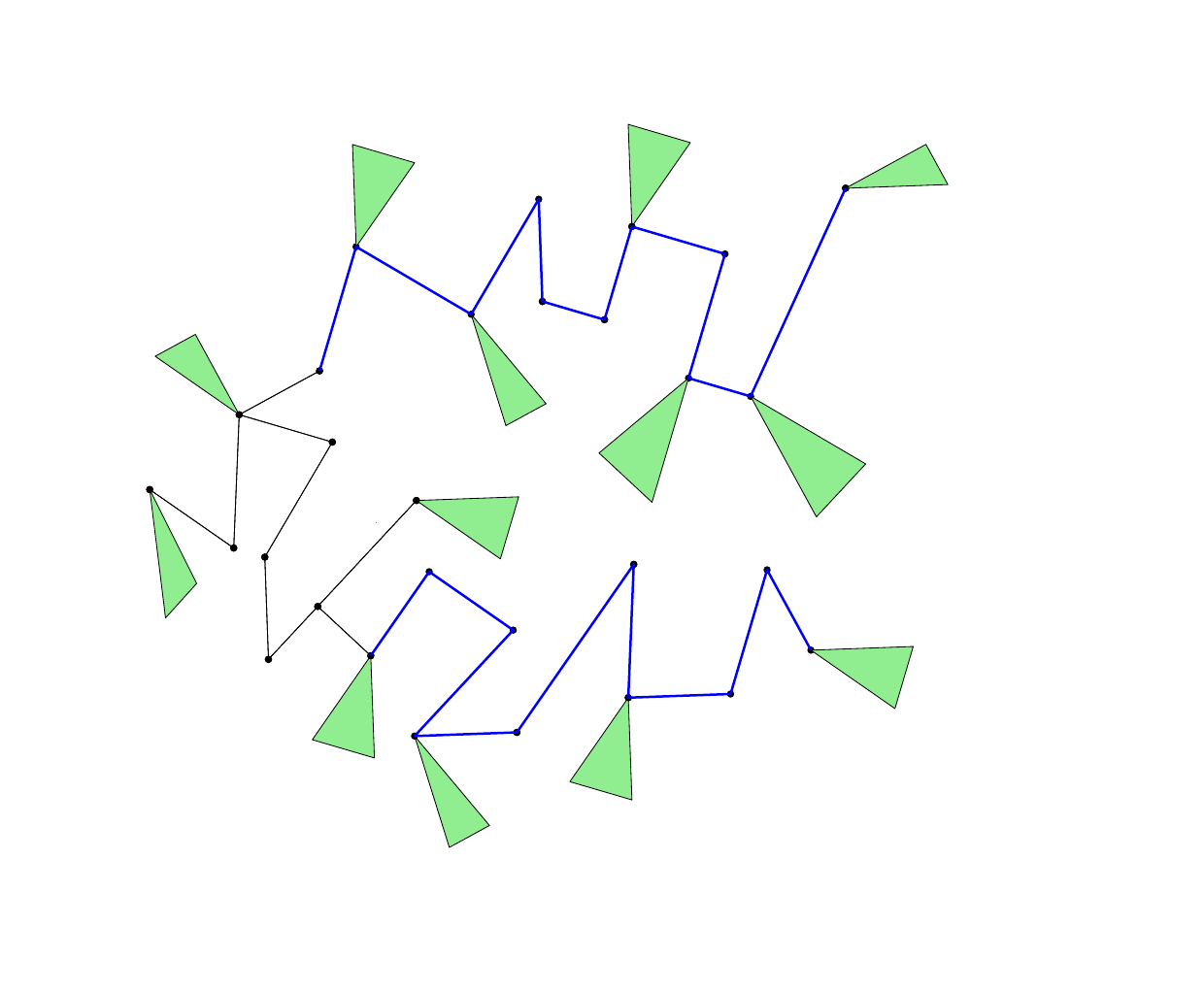}
    (b)\includegraphics[scale=0.3,valign=c]{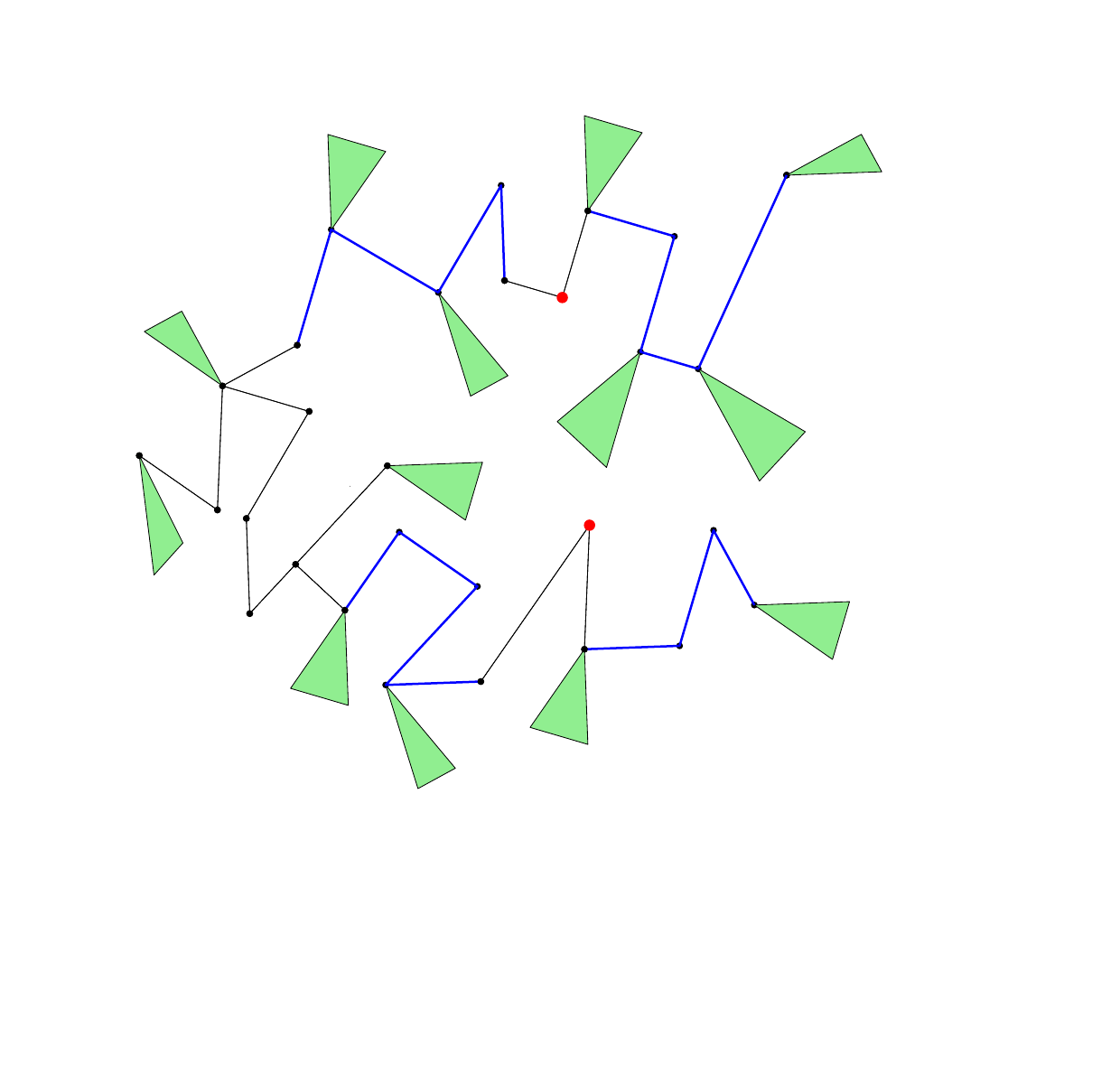}\\
    (c)\includegraphics[scale=0.4,valign=c,trim={0 3cm 0 1cm}]{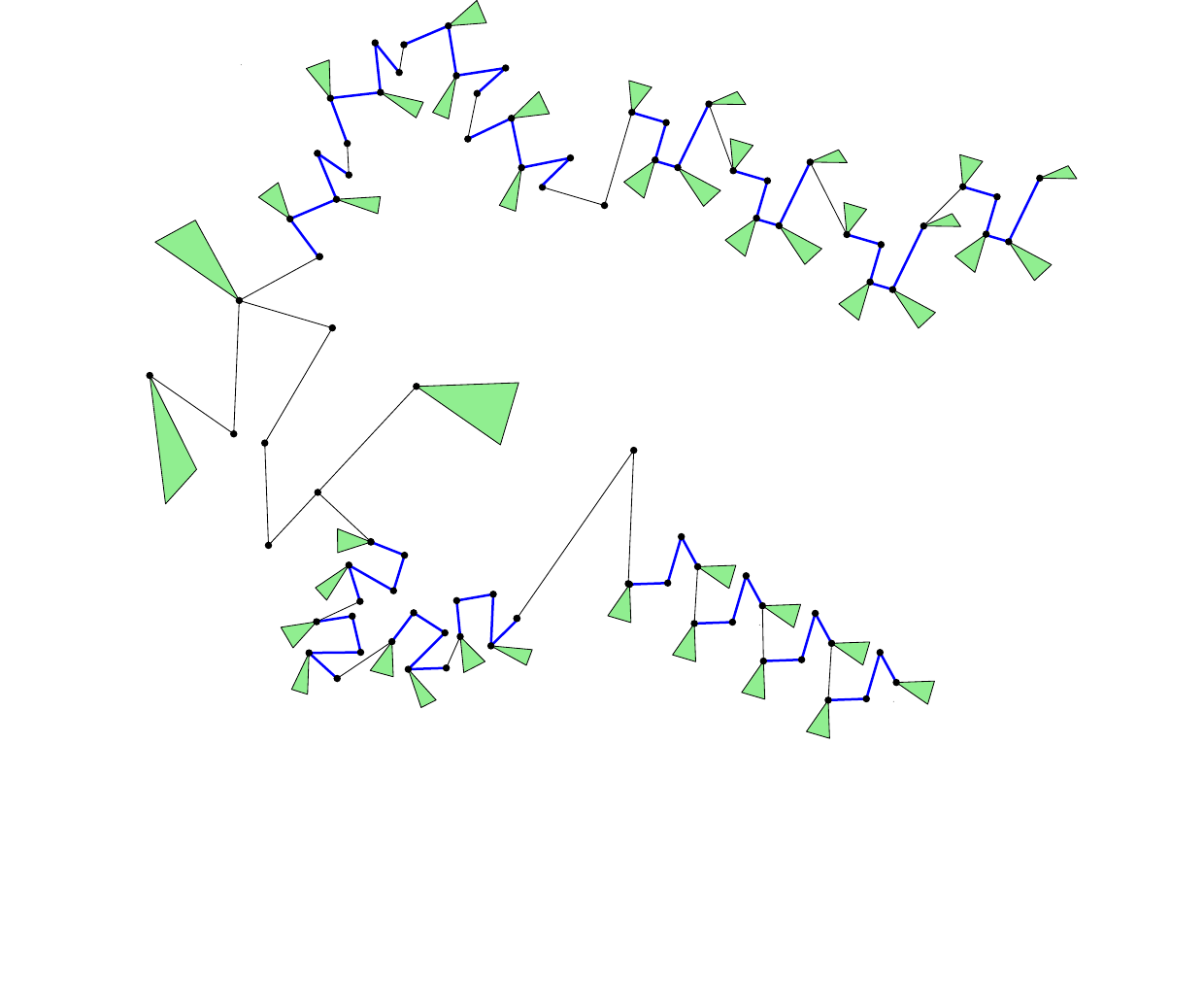}
    \caption{(a)~$T$ with height-$\tau$ trees (green) and long paths (blue) identified.
    (b)~$T$ with ruling set nodes (red) and paths to pump (blue)
    (c)~$S$, with pumped paths.}
    \label{fig:pumping-paths}
\end{figure}

We begin by defining our new virtual graph. Suppose we are given a tree $T=(V,E)$ and an \onlinelocal algorithm $\algo$ with locality $o(n)$.

\subsubsection{Skeleton Tree}

Our first step is to \enquote{prune} $T$, by removing any subtrees of height less than or equal to $\tau = c\sqrt{n}$ (for a constant $c$ which we will define later).
We call this the \emph{skeleton tree} $T'$ of $T$.

More formally, initialize $T'$ as a copy of $T$. For $\tau$ rounds, we remove all leaf nodes from $T'$. After this, $T'$ is our skeleton tree.

Note that we perform this construction distributedly and sequentially: given a node $v$, we check its radius-$2\tau$ neighborhood, and prune the leaf nodes for $\tau$ rounds.
This ensures that every node within a radius-$\tau$ neighborhood will be completely pruned. 

For the remaining nodes in $T'$ (those which were not pruned), we define a function $\psi\colon V(T')\rightarrow V(T)$ which maps nodes of $T'$ to their corresponding node in $T$. 

\subsubsection{A Set of Long Paths}

We now want to construct a set of paths of $T'$ which are of length $l\in[c, 2c]$ for some constant~$c$.

We begin by removing any node with degree greater than $2$ from the skeleton tree $T'$. 
From here, we are left with a set of long paths (note we only worry about those within a radius-$\tau$ neighborhood of whichever node we have selected at a time).
We construct a $(c+1, c)$ ruling set for the remaining nodes.
(Recall that an $(\alpha, \beta)$ ruling set is a set of nodes which are a distance at least $\alpha$ apart, with the property that every other node in the graph is no more than distance $\beta$ from at least one ruling set node.)

Removing all the ruling set nodes, we obtain a set of paths with length at most $2c$.
We discard any paths shorter than $c$ (which occur at the ends of the original long paths we obtain after removing the high-degree nodes). 
We now have a set of paths with length $l \in [c, 2c]$, which we call $\pathset$.

\subsubsection{Virtual Tree}

Before we can finish our definition of the final virtual tree $S$, we need to introduce some additional concepts.
We do this in the next section, then resume the construction of $S$ in
\cref{sec:finishing-virtual-tree}.

\subsection{An Equivalence Relation on Paths and the Pumping Lemma}

We take an \enquote{intermission} to discuss properties of the paths of $\pathset$. 

Given a graph $G$ with a subgraph $H$, we define the \emph{poles} of $H$ to be nodes $v\in V(H)$ which are adjacent to some node in $V(G)\setminus V(H)$.
Now we can define the following equivalence relation on trees (introduced originally in \cite{TimeHierarchyLOCAL}):

\begin{definition}[Equivalence relation $\overset{*}{\sim}$]
    Given a graph $H$ and its poles $F$, define $\xi(H, F) = (D_1, D_2, D_3)$ to
    be a tripartition of $V(H)$ where:
    \begin{itemize}
        \item $D_1 = \bigcup_{v\in F}N^{r-1}(v)$
        \item $D_2 = \bigcup_{v\in D_1} N^{r}(v)\backslash D_1$
        \item $D_3 = V(H) - (D_1\cup D_2) $
    \end{itemize}
    Let $Q$ and $Q'$ be the subgraphs of $H$ and $H'$ induced by the vertices $D_1\cup D_2$ and $D_1'\cup D_2'$ respectively.
    The equivalence holds, i.e., $(H, F)\overset{*}{\sim}(H',F')$, if and only if there is a 1 to 1 correspondence $\phi : (D_1\cup D_2) \rightarrow (D_1'\cup D_2')$ satisfying:
    \begin{itemize}
        \item $Q$ and $Q'$ are isomorphic under $\phi$, preserving the input labels of the LCL problem (if any) and the order of the poles.
        \item Let $\mathcal{L}_*$ be any assignment of the output labels to vertices in $D_1\cup D_2$, and let $\mathcal{L}_*'$ be the corresponding labeling of $D_1'\cup D_2'$ under $\phi$.
        Then $\mathcal{L}_*$ is extendible to $V(H)$ if and only if $\mathcal{L}_*'$ is extendible to $V(H')$. 
    \end{itemize}
\end{definition}

It is proven by Chang and Pettie \cite{TimeHierarchyLOCAL} that if 
the number of poles is constant, there is a constant number of
equivalence classes under $\overset{*}{\sim}$.
We also inherit the following lemma from Chang and Pettie \cite{TimeHierarchyLOCAL}, where $\replace(G, (H, F), (H', F'))$ is the graph obtained by replacing a subgraph $H$ of $G$ with poles $F$ with a new graph $H'$ with poles $F'$.

\begin{lemma}
    Let $G' = \replace(G,(H,F),(H',F'))$. Suppose $(H,F) \overset{*}{\sim} (H',F')$. Let $D_0 = V(G) \setminus V(H)$. Let $\mathcal{L}_{\diamond}$ be a complete labelling of $G$ that is locally consistent for all vertices in $D_2 \cup D_3$. Then there exists a complete labelling $\mathcal{L}'_{\diamond}$ satisfying the following:
	\begin{itemize}
		\item $\mathcal{L}_{\diamond}  = \mathcal{L}'_{\diamond}$ for all $v \in D_0 \cup D_1 \cup D_2$ and their corresponding vertices in $D'_0 \cup D'_1 \cup D'_2$. Also, if $\mathcal{L}_{\diamond}$ is locally consistent for a node $v$, then $\mathcal{L}'_{\diamond}$ is locally consistent for $\phi(v)$.
		\item $\mathcal{L}'_{\diamond}$ is locally consistent for all nodes in $D'_2 \cup D'_3$.
	\end{itemize}
\end{lemma}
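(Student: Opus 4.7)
The plan is to build $\mathcal{L}'_{\diamond}$ in two stages, exploiting that $\replace(G,(H,F),(H',F'))$ only modifies $H$, and then to verify each of the two bullets directly from the definition of $\overset{*}{\sim}$. Throughout I write $\phi$ for the bijection on $D_1 \cup D_2 \to D_1' \cup D_2'$ guaranteed by the equivalence, extended by the identity on $D_0 = D_0'$.

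For the construction, first define $\mathcal{L}'_{\diamond}$ on $D_0' \cup D_1' \cup D_2'$ by transporting $\mathcal{L}_{\diamond}$ along this extension of $\phi$. This immediately makes the equality claim in the first bullet hold. For the remaining nodes in $D_3'$, let $\mathcal{L}_*$ denote the restriction of $\mathcal{L}_{\diamond}$ to $D_1 \cup D_2$, and let $\mathcal{L}_*' = \mathcal{L}_* \circ \phi^{-1}$ be its image on $D_1' \cup D_2'$. Since $\mathcal{L}_{\diamond}$ is locally consistent at every node of $D_2 \cup D_3$, the restriction of $\mathcal{L}_{\diamond}$ to $V(H)$ witnesses that $\mathcal{L}_*$ is extendible to $V(H)$ in the sense of the definition of $\overset{*}{\sim}$. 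By the equivalence $(H,F)\overset{*}{\sim}(H',F')$, the corresponding labeling $\mathcal{L}_*'$ is extendible to $V(H')$; I take any such extension and use it to define $\mathcal{L}'_{\diamond}$ on $D_3'$.

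This choice directly yields the second bullet (local consistency on $D_2' \cup D_3'$), so it remains only to verify the consistency-transfer clause in the first bullet. Fix $v$ at which $\mathcal{L}_{\diamond}$ is locally consistent. If $v \in D_1 \cup D_2$, then by the definition of $D_1,D_2$ the radius-$r$ neighborhood of $v$ is contained in $D_1 \cup D_2$, and the isomorphism of $Q$ and $Q'$ under $\phi$ maps this neighborhood to the radius-$r$ neighborhood of $\phi(v)$ while preserving input and output labels, so $\mathcal{L}'_{\diamond}$ is locally consistent at $\phi(v)$. If $v \in D_0$, its radius-$r$ neighborhood in $G$ intersects $H$ only through the poles $F$, and since $\phi$ preserves the ordering of poles while fixing $D_0$, the neighborhood of $v$ in $G'$ is, as a labeled rooted graph, the same as in $G$.

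The main obstacle is the first bullet's consistency-transfer clause for nodes $v \in D_0$: one must argue that the way $v$ sees into the new subgraph $H'$ is combinatorially identical to how it saw into $H$. This is exactly why the definition of $\overset{*}{\sim}$ insists that the isomorphism between $Q$ and $Q'$ respects the order of the poles (as well as input labels), and checking it boils down to unpacking that condition. Once this is in place, the whole proof reduces to a bookkeeping exercise on top of the defining property of $\overset{*}{\sim}$ that extendibility is an equivalence invariant.
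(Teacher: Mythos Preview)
The paper does not prove this lemma at all; it is explicitly stated as being \enquote{inherited from Chang and Pettie \cite{TimeHierarchyLOCAL}} and is quoted without proof. So there is nothing to compare against, and your write-up is in fact more than what the paper provides.

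Your argument is the natural one and is essentially correct: transport $\mathcal{L}_{\diamond}$ along $\phi$ on $D_0 \cup D_1 \cup D_2$, invoke the extendibility clause of $\overset{*}{\sim}$ to fill in $D_3'$, then verify local consistency by tracking which region contains the radius-$r$ ball of each node. One small imprecision is worth fixing: you assert that for $v \in D_1 \cup D_2$ the radius-$r$ ball lies in $D_1 \cup D_2$, but this is false at both ends. For $v \in D_1$ (in particular for a pole) the ball can reach into $D_0$; this is harmless because $\phi$ is the identity on $D_0$ and the pole-order condition ensures the $D_0$--$D_1$ interface is preserved by the replacement. For $v \in D_2$ the ball can reach into $D_3$; here you should not argue via the isomorphism at all but simply note that consistency at $\phi(v) \in D_2'$ is already delivered by the second bullet, since the extension on $H'$ was chosen to be locally consistent on $D_2' \cup D_3'$. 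Similarly, for $v \in D_0$ the precise statement is that the ball meets $H$ only inside $D_1$ (not \enquote{only through the poles}), which is what makes the $Q \cong Q'$ isomorphism sufficient. With these corrections your proof goes through.
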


We now inherit the notation introduced by Balliu et al.~\cite{almostGlobalLOCAL}: given a tree rooted at $v$, which we denote $\deltree_v$, let $\Class(\deltree_v)$ denote the equivalence class of 
$\deltree_v$ with $v$ as its unique pole. 
Given a path $Q\in\pathset$ with length $k$, we can consider $Q$'s image in the original tree $T$ as a sequence of trees, $(\deltree_i)_{i\in [k]}$, since each node $v\in Q$ is the root of some (possibly empty, aside from $v$ itself) tree which was deleted in the first stage of processing. 
Let $\Type(Q)$ denote the equivalence class of the path $Q$ with its 2 endpoints as poles.

We now restate 2 lemmas:

\begin{lemma}
    Each node $u$ can determine the type of $\deltree_v$ for all $v$ contained within $u$'s radius-$\tau$ neighborhood.
\end{lemma}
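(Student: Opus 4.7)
The plan is to argue the statement almost directly from the construction of the skeleton tree. The key observation is a locality budget argument: each $\deltree_v$ is a bounded-height object, so if $v$ is close to $u$, then all of $\deltree_v$ sits inside $u$'s local view.

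More concretely, I would proceed in three short steps. First, I would note that by the construction of $\skeltree$ in \cref{sec:constructing-virtual-tree}, the pruning procedure repeatedly strips leaves for exactly $\tau$ rounds. Consequently, any subtree $\deltree_v$ that was pruned (so that $v$ is a node remaining in $\skeltree$ together with its discarded subtree hanging off) has height strictly less than $\tau$; otherwise $v$ itself would still have a leaf-descendant after $\tau$ rounds and could not have been in a pruned branch. Hence every node of $\deltree_v$ lies within distance $\tau$ of $v$ in $T$.

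Second, I would combine this with the triangle inequality. If $v$ is within distance $\tau$ of $u$ in $T$, then every node of $\deltree_v$ is within distance $2\tau = 2c\sqrt{n}$ of $u$. Since the algorithm we are designing has locality $O(\sqrt{n})$, we can choose the constant in the $O(\sqrt{n})$ to dominate $2c$, so $u$'s local view already contains all of $\deltree_v$ together with the node of $\skeltree$ to which it is attached.

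Third, since $\Class(\deltree_v)$ is defined purely in terms of the isomorphism type of $\deltree_v$ (viewed as a rooted tree with $v$ as its unique pole) and the input labels it carries, and since $u$ sees all of this information in its local view, $u$ can compute $\Class(\deltree_v)$ deterministically. No randomness or global information is required, and the same argument applies uniformly to every $v$ in $u$'s radius-$\tau$ neighborhood, yielding the claim. There is no real obstacle here: the whole point is that $\tau$ was chosen so that pruned subtrees fit comfortably inside the $O(\sqrt{n})$-radius view; the lemma merely records this design choice.
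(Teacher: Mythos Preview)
Your argument is correct and is the natural one: pruned subtrees have height at most $\tau$, so by the triangle inequality $\deltree_v$ lies entirely within distance $2\tau$ of $u$, and since the construction in \cref{sec:constructing-virtual-tree} already works with the radius-$2\tau$ neighborhood, $u$ sees all of $\deltree_v$ and can compute its class. The paper itself does not prove this lemma but merely restates it from prior work~\cite{almostGlobalLOCAL,TimeHierarchyLOCAL}; your reasoning matches the standard argument, with the minor remark that you need not ``choose the constant to dominate $2c$'' since the paper explicitly grants each node its radius-$2\tau$ view.
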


This lemma allows us to ignore the deleted subtrees of $T\setminus T'$, as they can be filled in at the end.

\begin{lemma}
    Let $H = (\deltree_i)_{i\in [k]}$ and let $H' = (\deltree_i)_{i\in [k+1]}$ be identical to $H$ in its first $k$ trees. 
    Then $\Type(H')$ is a function of $\Type(H)$ and $\Class(\deltree_{k+1})$.
\end{lemma}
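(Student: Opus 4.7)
The plan is to prove this by showing that if $(H_1,\{v_1^1,v_k^1\}) \overset{*}{\sim} (H_2,\{v_1^2,v_k^2\})$ via a bijection $\phi$, and $\deltree_{k+1}^1 \overset{*}{\sim} \deltree_{k+1}^2$ via a bijection $\phi_\Delta$, then we can construct a bijection $\phi'$ witnessing $(H_1',\{v_1^1,v_{k+1}^1\}) \overset{*}{\sim} (H_2',\{v_1^2,v_{k+1}^2\})$. Since $\Type(H_j')$ is just the $\overset{*}{\sim}$-class of $(H_j',\{v_1^j,v_{k+1}^j\})$, this gives the lemma. The first step is to analyze how the tripartitions for $H_j'$ relate to those for $H_j$ and $\deltree_{k+1}^j$: the near-pole region $D_1^{H_j'} \cup D_2^{H_j'}$ is the union of the radius-$r$ neighborhoods of $v_1^j$ and $v_{k+1}^j$, where the first equals the radius-$r$ neighborhood of $v_1^j$ in $H_j$, and the second decomposes along the seam edge $\{v_k^j,v_{k+1}^j\}$ into a part inside $\deltree_{k+1}^j$ (the radius-$r$ neighborhood of its root) and a part inside $H_j$ (the radius-$(r-1)$ neighborhood of $v_k^j$, contained in the old near-pole region of $v_k^j$). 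The path-length bound $k \ge c \ge \lpump + 4r$ guarantees that the two new pole-neighborhoods are disjoint in $H_j'$, so the decomposition is clean.

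Next I would define $\phi'$ piecewise: use $\phi$ on the parts of $D_1^{H_j'} \cup D_2^{H_j'}$ inherited from $H_j$ (that is, near $v_1^j$ and on the radius-$(r-1)$ ball around $v_k^j$, which both lie inside $D_1^{H_j} \cup D_2^{H_j}$), and use $\phi_\Delta$ on the part inside $\deltree_{k+1}^j$. Because both $\phi$ and $\phi_\Delta$ send pole to pole in the prescribed pole order ($v_k^1 \mapsto v_k^2$ and the root of $\deltree_{k+1}^1$ to that of $\deltree_{k+1}^2$), the two pieces agree at the seam edge and $\phi'$ is well defined. Verifying the graph-isomorphism property (preserving adjacency, input labels, and pole order) then reduces to the corresponding properties of $\phi$ and $\phi_\Delta$ on their respective near-pole regions, together with the observation that the only edge bridging the $H_j$ and $\deltree_{k+1}^j$ sides in $H_j'$ is the seam edge, which is handled consistently by both bijections.

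For the extension property, suppose $\mathcal{L}_*$ on $D_1^{H_1'} \cup D_2^{H_1'}$ extends to a valid labeling $\mathcal{L}$ of $V(H_1')$. I would restrict $\mathcal{L}$ to $V(H_1)$ and to $V(\deltree_{k+1}^1)$; these restrictions are valid labelings extending $\mathcal{L}|_{D_1^{H_1} \cup D_2^{H_1}}$ and $\mathcal{L}|_{D_1^{\deltree_{k+1}^1} \cup D_2^{\deltree_{k+1}^1}}$ respectively, so the extension properties of $\phi$ and $\phi_\Delta$ yield valid labelings on $V(H_2)$ and $V(\deltree_{k+1}^2)$. These agree at the seam: both assign to $v_k^2$ the $\phi$-image of $\mathcal{L}(v_k^1)$ and to $v_{k+1}^2$ the $\phi_\Delta$-image of $\mathcal{L}(v_{k+1}^1)$, with edge-consistency inherited from the validity of $\mathcal{L}$ across $\{v_k^1,v_{k+1}^1\}$ and the fact that $\phi',\phi,\phi_\Delta$ all act identically on the pole/root and are graph isomorphisms on radius-$r$ balls containing the seam endpoints. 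Concatenating produces a valid labeling of $V(H_2')$ extending $\mathcal{L}_*' = \mathcal{L}_* \circ (\phi')^{-1}$. The reverse direction is symmetric. The main obstacle is precisely this bookkeeping at the seam: the old pole $v_k^j$ of $H_j$ becomes an interior node of $H_j'$ and must receive consistent labels from the two halves of the construction; making this rigorous is why we need the isomorphism and extension properties of $\phi$ and $\phi_\Delta$ to overlap correctly on the radius-$r$ balls around the seam, which is exactly what the decomposition in the first step ensures.
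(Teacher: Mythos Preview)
The paper does not supply its own proof of this lemma; it is explicitly \emph{restated} from prior work (Chang and Pettie \cite{TimeHierarchyLOCAL}, via \cite{almostGlobalLOCAL}), so there is no in-paper argument to compare against. Your approach---gluing the witness bijections $\phi$ and $\phi_\Delta$ along the seam to obtain $\phi'$---is the natural one and is how the original proof goes.

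Two points of care. First, the radii are off: from the definition $D_1 = N^{r-1}(F)$ and $D_2 = N^r(D_1)\setminus D_1$, one gets $D_1\cup D_2 = N^{2r-1}(F)$, not $N^r(F)$. This is cosmetic and does not affect the structure of your argument, but every ``radius-$r$'' in your decomposition should read ``radius-$(2r-1)$'' (and ``radius-$(r-1)$'' near $v_k^j$ should be ``radius-$(2r-2)$'').

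Second, and more substantively, the seam argument you flag as ``the main obstacle'' is genuinely where the work is, and your sketch does not quite close it. The issue is a node $u$ at distance exactly $r-1$ from $v_k^2$: it lies in $D_1^{H_2}$, so the extension property of $\phi$ gives you \emph{no} local-consistency guarantee for $\mathcal{L}^{H_2}$ at $u$; yet $u$ is at distance $r$ from the new pole $v_{k+1}^2$, hence in $D_2^{H_2'}$, where consistency is required. The fix is not ``edge-consistency inherited from $\mathcal{L}$'' but rather the stronger observation that $u$'s entire radius-$r$ neighborhood in $H_2'$ lies inside $D_1^{H_2}\cup D_2^{H_2}\cup\{v_{k+1}^2\}\subseteq \mathrm{dom}(\phi')$, on which $\mathcal{L}'$ coincides with the $\phi'$-transport of $\mathcal{L}$; since $\phi'$ is an input-label-preserving isomorphism there and $\mathcal{L}$ is locally consistent at $\phi'^{-1}(u)\in D_2^{H_1'}$, consistency at $u$ follows. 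Your final sentence gestures at exactly this, but the argument should be made explicit rather than asserted.
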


\subsubsection{The Pumping Lemma}

We now have the necessary pieces to define a crucial lemma: the pumping lemma for paths.

\begin{lemma}[Pumping lemma for paths]\label{pumping-lem}
    Let $H = (\deltree_i)_{i\in [k]}$ be a chain of trees with $k\ge \lpump$. Then $H$ can be decomposed into three subpaths $H = x\circ y\circ z$ such that:
    \begin{itemize}
        \item $\abs{xy}\le \lpump$,
        \item $\abs{y}\ge 1$,
        \item $\Type(x\circ y^j \circ z) = \Type(x\circ y \circ z)$ for all
        $j\in\N$.
    \end{itemize}
\end{lemma}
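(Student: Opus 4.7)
The plan is to apply a pigeonhole argument in the spirit of the classical pumping lemma for regular languages, leveraging the deterministic \enquote{state transition} provided by the previous lemma.

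First, for each $0 \le i \le k$, I would define the state $s_i = \Type\bigl((\deltree_1, \dots, \deltree_i)\bigr)$ to be the type of the length-$i$ prefix of~$H$. By the previous lemma, $s_{i+1}$ is determined by $s_i$ together with $\Class(\deltree_{i+1})$, so the sequence $s_0, s_1, \dots, s_k$ arises from a deterministic iteration reminiscent of a finite-state automaton: the \enquote{states} are the equivalence classes of paths with two poles, and the \enquote{input alphabet} is the (finite) set of tree classes.

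Next, by the result of Chang and Pettie cited above, the number of such equivalence classes of paths is a finite constant $N$ depending only on~$\Pi$. Setting $\lpump = N + 1$, among $s_0, s_1, \dots, s_{\lpump}$ (which are $\lpump + 1 > N$ states) the pigeonhole principle produces indices $0 \le a < b \le \lpump$ with $s_a = s_b$. Letting $x = (\deltree_1, \dots, \deltree_a)$, $y = (\deltree_{a+1}, \dots, \deltree_b)$, and $z = (\deltree_{b+1}, \dots, \deltree_k)$ then gives $|xy| = b \le \lpump$ and $|y| = b - a \ge 1$. For the pumping property, I would argue by induction on $j$ that processing $x \circ y^j$ ends in state $s_a$: reading $y$ once from $s_a$ lands in $s_b = s_a$, so iterating keeps us at $s_a$. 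Appending $z$ then yields the same final type as $x \circ y \circ z$, which equals $\Type(H)$, for every $j \in \N$ (including $j = 0$, where $y$ is \enquote{skipped}).

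The main subtlety will be making this state-machine picture fully rigorous: the previous lemma is stated only for appending a single tree, so iterating it along the chain requires care, and interpreting $\Type(x \circ y^j \circ z)$ demands that the two-pole structure of the composed path be preserved as the block $y$ is repeated between $x$ and~$z$. This should follow directly from the compositional nature of the construction---the endpoints of $H$ remain the endpoints of $x \circ y^j \circ z$ for every~$j$, and the type of any such chain depends only on the types of its sub-chains and not on the individual trees inside them.
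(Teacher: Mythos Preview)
Your argument is correct and is exactly the standard pumping-lemma argument that the paper's setup is designed to enable: the preceding lemma (that $\Type(H')$ is a function of $\Type(H)$ and $\Class(\deltree_{k+1})$) is stated precisely so that the sequence of prefix types behaves like the run of a finite automaton, after which pigeonhole on the finitely many $\overset{*}{\sim}$-classes yields the repeated state and hence the pumpable block. The paper itself does not spell out a proof of this lemma---it is inherited from the cited prior work of Chang and Pettie and Balliu et al.---so there is nothing to compare beyond noting that your reconstruction matches the intended (and essentially only) route.
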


Now, the pumping lemma tells us that in every path $Q\in \pathset$, there is some subpath which can be \enquote{pumped}---repeated an arbitrary number of times---and leave the type of the graph unchanged.
This means that a valid solution of the graph with pumped paths, $S$, can be mapped to the original graph and be used to produce a valid solution.

\subsection{Properties of the Virtual Tree}

We first finish our construction of the virtual tree, which we began in
\cref{sec:constructing-virtual-tree}.
Then we examine some important properties of $S$ which will allow us to simulate $\algo$ on $\tree$ by knowing only a radius-$O(\sqrt{n})$ neighborhood of $T$.

\subsubsection{Finishing the Virtual Tree Construction}
\label{sec:finishing-virtual-tree}

We now are able to finish the construction of the virtual tree $S$.

We choose a parameter $B$ which can be an arbitrarily large function of $n$ (which we will define more precisely later).
At a high level, we will duplicate the pumpable subpath of each path $Q\in\pathset$ so that its new length $l'$ satisfies the inequality: $cB \le l'\le c(B+1)$.
We maintain a mapping of nodes of $T$ which are not a part of any pumped subpath of some $Q\in\pathset$ to their new nodes in $S$:
Let $S_o\subset V(S)$ denote the set of nodes of $S$ which are not part of any
path in $\pathset$, or are at distance at most $2r$ from a node not contained in
any $Q\in\pathset$ (recall $r$ is the checkability radius of our LCL).
Informally, $S_o$ is the set of \enquote{real} nodes of $S$---they are not part of the pumped paths, and they have corresponding nodes in $T$.
Let $\eta : S_o\rightarrow T$ be a mapping of nodes in $S_o$ to their corresponding nodes in $T$.
We also define $T_o = \{\eta(v) \mid v\in S_o\}$.
So $T_o$ is the set of nodes far enough from the pumped regions of $T$ which were not removed in the construction of $T'$.

So we have a virtual tree $S$ which is a function of $T$ and two parameters $c$ and $B$ (which will be defined more precisely soon).

\subsubsection{\boldmath Properties of \texorpdfstring{$S$}{S}}

We state some properties of $S$, proved originally by Balliu et al.~\cite{almostGlobalLOCAL}:

\begin{lemma}
    $S$ has at most $N = c(B+1)n$ nodes, where $n = \abs{V(T)}$.
\end{lemma}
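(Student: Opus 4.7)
The plan is to bound $|V(S)|$ by partitioning the vertices of $S$ according to their origin in the construction and bounding each part separately. By definition, $S_o \subseteq V(S)$ consists of those vertices of $S$ that are not inside any pumped path region (or lie within distance $2r$ of a non-path vertex), and the map $\eta\colon S_o \to V(\tree)$ is injective. This immediately yields $|S_o| \le |V(\tree)| = n$.

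For the remaining vertices $V(S) \setminus S_o$, every such vertex lies on one of the pumped paths that replace some path $Q \in \pathset$. I would bound their total count by (number of pumped paths) times (maximum length of a pumped path). The paths in $\pathset$ are pairwise vertex-disjoint: they arise as the connected components remaining after deleting the $(c{+}1, c)$-ruling set from the degree-$\le 2$ portion of the skeleton $\skeltree$, and each one retained has length at least $c$. Since they all live inside $\tree$, we get $|\pathset| \le n/c$. By construction, each pumped path has length at most $c(B+1)$. Therefore $|V(S) \setminus S_o| \le (n/c)\cdot c(B+1) = (B+1)n$.

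Combining the two parts gives $|V(S)| \le n + (B+1)n = (B+2)n$, which is at most $c(B+1)n = N$ once $c \ge 2$ (a condition that holds since $c = \lpump + 4r$ with $r \ge 1$). A slicker presentation is to charge each pumped path to any single one of its at-least-$c$ original vertices in $\tree$: each vertex of $\tree$ is then responsible for at most $c(B+1)$ vertices in $S$, directly giving the claimed bound $|V(S)| \le c(B+1)n$.

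The argument is essentially an exercise in counting, and I do not expect any genuine technical obstacle. The only care required is avoiding double-counting between $S_o$ and the pumped regions, which is automatic because $S_o$ is explicitly defined to exclude the (interiors of the) pumped regions, and because the pumping operation replaces the pumpable subpath of each $Q \in \pathset$ without touching vertices outside $Q$'s neighborhood in $\tree$.
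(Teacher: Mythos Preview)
Your argument overlooks the hanging subtrees. Recall that each $Q \in \pathset$ is a path in the skeleton $T'$, but the pumping operates on $Q$ viewed as a \emph{chain of trees} $(\deltree_i)_{i\in[k]}$ in the original tree $T$: every path-node $v_i$ of $Q$ is the root of a (possibly large) subtree $\deltree_i$ of height up to $\tau$ that was pruned when forming $T'$. When the pumpable sub-chain $y$ is repeated $j$ times, all of the subtree vertices hanging off the nodes of $y$ are duplicated $j$ times as well. Consequently the number of vertices a single pumped path contributes to $S$ is not bounded by its path-length $c(B+1)$; it may be vastly larger.

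This breaks both halves of your count. If you place the duplicated subtree vertices into $V(S)\setminus S_o$, then your bound $(n/c)\cdot c(B{+}1)$ counts only path-nodes and misses them entirely. If instead they lie in $S_o$, then $\eta$ sends several copies of the same subtree vertex to a single vertex of $T$ and is not injective, so $|S_o|\le n$ fails. Your ``slicker presentation'' has the same defect: charging an entire pumped path (with all its hanging subtrees) to a single original vertex does not yield a per-vertex load of at most $c(B{+}1)$.

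The paper's one-line argument sidesteps this by bounding the multiplicity per vertex of $T$ rather than per path: any vertex of $T$---whether it is a path-node of some $Q$ or lives in some $\deltree_i$---is duplicated at most as many times as the sub-chain $y$ is repeated. Since $|y|\ge 1$ and the pumped length satisfies $|x|+j|y|+|z|\le c(B{+}1)$, we get $j\le c(B{+}1)$, and hence $|V(S)|\le c(B{+}1)\,n$ directly.
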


This is because each node in a pumped path or a subtree of a node in a pumped path is duplicated at most $c(B+1)$ times.
Note that $N$ is an \emph{upper} bound on the number of nodes of $S$, but the actual number of nodes $\abs{V(S)}$ will be much smaller.

\begin{lemma}
    For any path $P = (x_1, \dots, x_k)$ of length $k \ge c\sqrt{n}$ which is a subgraph of $T'$, at most $\sqrt{n}/c$ nodes in $V(P)$
    have degree greater than $2$.
\end{lemma}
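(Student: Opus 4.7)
The plan is to show that each node on $P$ whose $T'$-degree exceeds $2$ must ``consume'' at least $\tau = c\sqrt{n}$ nodes of $T$ lying outside $P$, and then conclude by a disjointness count. The main ingredient will be a structural characterization of $T'$ in terms of how deep the branches of a surviving node reach into $T$.

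First I would unroll the $\tau$ rounds of leaf pruning to establish the following: a node $u$ lies in $T'$ if and only if $u$ has at least two neighbors $u_1, u_2$ in $T$ such that, for each $j\in\{1,2\}$, the subtree obtained by cutting the edge $\{u,u_j\}$ and keeping the $u_j$-side has depth at least $\tau-1$ (longest distance from $u_j$ to a leaf of $T$, counted in edges). The reason is that at the start of round $i$ the neighbor $u_j$ is still present exactly when this subtree has depth $\ge i-1$, so $u$ retains degree $\ge 2$ throughout all $\tau$ rounds iff at least two of its $T$-branches stay alive until round $\tau$. This translates the dynamic pruning condition into a static statement about branch depths.

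Next I would apply this to a branching node. Let $v\in V(P)$ have degree $>2$ in $T'$. Since at most two of $v$'s $T'$-neighbors lie on $P$, there exists some $T'$-neighbor $u$ of $v$ off $P$. Applying the characterization to $u$, there are at least two branches of $u$ in $T$ whose subtrees have depth $\ge \tau-1$. Because $T$ is a tree, exactly one branch of $u$ heads toward $v$, so at least one deep branch of $u$ points away from $v$. The whole of that deep branch (together with $u$ itself) sits inside the connected component of $T\setminus V(P)$ attached to $v$, and already contributes at least $\tau$ vertices (a root-to-leaf path of length $\tau-1$).

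Finally I would finish by a simple disjointness argument. Removing $V(P)$ from the tree $T$ splits the remaining nodes into components, each attached to a unique vertex of $P$; so the deep off-path components produced above for distinct branching nodes are pairwise vertex-disjoint subsets of $V(T)\setminus V(P)$. If $k'$ denotes the number of $v\in V(P)$ with $T'$-degree exceeding $2$, we therefore get $k'\cdot \tau \le |V(T)\setminus V(P)| \le n$, and hence $k'\le n/\tau = \sqrt{n}/c$. The only slightly delicate step is the pruning-to-branch-depth translation in the first paragraph; everything else is elementary tree combinatorics.
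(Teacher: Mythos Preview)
Your argument is correct. The paper itself does not give a proof of this lemma but defers to Balliu et al.~\cite{almostGlobalLOCAL}; the counting argument you sketch is exactly the natural one and matches the standard proof: a branching vertex $v\in V(P)$ has an off-path $T'$-neighbor $u$, survival of $u$ under $\tau$ rounds of pruning forces a subtree of size at least $\tau$ hanging off $v$ outside $P$, and disjointness of these subtrees (they lie in distinct components of $T\setminus V(P)$) yields $k'\le n/\tau=\sqrt{n}/c$.

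Two minor remarks. First, your bound never uses the hypothesis $k\ge c\sqrt{n}$, and indeed it is not needed for this statement; the length assumption only matters later, where one combines the bound on branching nodes with pigeonhole to extract a long degree-$2$ stretch. Second, the step you flag as \enquote{slightly delicate}---translating $\tau$ rounds of leaf pruning into the branch-depth condition---is indeed the right formulation: a vertex survives $\tau$ rounds if and only if two of its incident branches have depth at least $\tau-1$, which is exactly the statement you need to push one deep branch of $u$ away from $v$ and hence away from $P$.
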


This lemma gives us an upper bound on the number of high-degree nodes, and is
the point where this procedure demands $\Omega(\sqrt{n})$ locality,
rather than allowing us to obtain an even better locality such as $O(\sqrt[3]{n})$.

Now, we can prove a new lemma which is helpful in the \onlinelocal model.
This tells us no matter what path we take from a given node in $S$, we will either hit a pumped path or a leaf node within a radius-$c\sqrt{n}$ neighborhood.

\begin{corollary}
    Any path $P = (x_1, \dots, x_k) \subset T'$ with length $k \ge c\sqrt{n}$ will contain a subpath $Q\subset\pathset$. 
\end{corollary}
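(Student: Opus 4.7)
The plan is to chain the previous lemma (bounding the number of high-degree nodes on a long path in $T'$) with a pigeonhole argument and then exploit the defining properties of the $(c+1, c)$ ruling set used in the construction of $\pathset$. The starting point is that $P$ has length $k \geq c\sqrt{n}$ inside $T'$, while the previous lemma guarantees at most $\sqrt{n}/c$ of its nodes are of degree greater than $2$.

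First, I would delete the (at most $\sqrt n / c$) high-degree nodes of $P$, splitting it into at most $\sqrt{n}/c + 1$ maximal segments of nodes of degree $\leq 2$ in $T'$. By pigeonhole, at least one such segment $\sigma$ has length
\[
  |\sigma| \;\geq\; \frac{k - \sqrt{n}/c}{\sqrt{n}/c + 1} \;\geq\; c^2 - O(1)
\]
for $n$ sufficiently large. By assuming $c$ is a large enough constant (which is legitimate since $c = \lpump + 4r$ can be taken as large as needed, or replaced by $\max(c, 5)$), this gives $|\sigma| \geq 4c + 2$.

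Next, I would observe that $\sigma$ lies inside one of the path-components of $\skeltree'$ obtained from $T'$ by removing the high-degree nodes. On each such component the $(c+1,c)$ ruling set $\mathcal R$ satisfies: consecutive ruling-set nodes are at distance in $[c+1,2c+1]$, and every node is within distance $c$ of some ruling-set node. A short case analysis shows that any contiguous sub-path of length at most $4c+1$ inside such a component can contain at most one ruling-set node (zero or one, with the zero case bounded by $2c$ and the one case by $4c+1$). Since $|\sigma| \geq 4c+2$, the segment $\sigma$ must contain at least two ruling-set nodes.

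Finally, pick two consecutive ruling-set nodes $r, r'$ both inside $\sigma$. The non-ruling-set nodes strictly between them form a path of length in $[c, 2c]$; by the definition of $\pathset$, this is precisely one of the paths $Q \in \pathset$. Since $\sigma \subseteq P$ and the gap between $r$ and $r'$ lies inside $\sigma$, we get $Q \subseteq P$ as required. The only place requiring care is the constant-chasing in the lower bound on $|\sigma|$ versus the threshold $4c+2$; this is routine but is where one has to make sure the choice of $c$ is large enough. The rest of the argument is a direct combination of the previous lemma with the standard properties of ruling sets.
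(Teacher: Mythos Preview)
Your proof is correct and follows essentially the same route as the paper's: bound the number of high-degree nodes on $P$ via the preceding lemma, pigeonhole to find a long degree-$\le 2$ segment, and then use the $(c+1,c)$ ruling-set spacing to locate a full $\pathset$-path inside it. If anything, your version is more careful than the paper's---you explicitly treat the segment as a \emph{sub-path} of a long path in $\skeltree'$ (which matters at the two ends of $P$) and you track the threshold $|\sigma|\ge 4c+2$, whereas the paper simply writes $c^2-1>c$ and jumps to the conclusion; note also that $c=4r+\lpump\ge 5$ automatically, so your constant requirement is already met without replacing $c$ by $\max(c,5)$.

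One small wording slip: the sentence ``any contiguous sub-path of length at most $4c+1$ \ldots\ can contain at most one ruling-set node'' is the wrong direction of the implication (a sub-path of length $c+2$ can already hold two ruling-set nodes). What you need, and what your parenthetical and the next sentence actually use, is the converse: a sub-path containing at most one ruling-set node has length at most $4c+1$. This is purely cosmetic.
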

\begin{proof}
    By the previous lemma, $P$ can have at most $\sqrt{n}/c$ nodes which have degree greater than $2$ in $T'$. 
    By the pigeonhole principle, there must be at least one subpath $(v_i, \dots, v_{i+l})$ of $P$ between 2 high-degree nodes (or the endpoints of $P$) with length $l \ge c^2-1>c$ 
    (this will always be true by our choice of $c$), so after selecting ruling set nodes from this subpath, we will be left with at least one path with length between $c$ and $2c$---that is, a path $Q\subset\pathset$. 
\end{proof}
This corollary tells us that any path in $T'$ starting from some vertex $v\in T'$ will either be short enough that it is contained within $v$'s $\tau$-radius neighborhood, or that it contains some subpath which will be pumped.

We now have all necessary ingredients to construct a new algorithm, $\algo'$, which can find a labeling for $T$ with $O(\sqrt{n})$ locality.

\begin{lemma}
    Let $v\in T'$. For any path in $T$ starting at $\psi(v)$ and ending at a leaf node, one of the following holds:
    \begin{itemize}
        \item The path has length less than or equal to $2\tau$.
        \item The path contains some path $Q\in\pathset$ within its first $\tau$ nodes.
    \end{itemize}
    Furthermore, this can be determined with knowledge of only the radius-$2\tau$ neighborhood of $v$.
\end{lemma}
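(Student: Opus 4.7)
The plan is to split by the length $L$ of the path $P = (u_0, u_1, \ldots, u_L)$, where $u_0 = \psi(v)$ and $u_L$ is the leaf. If $L \le 2\tau$ the first bullet of the statement holds, so the real work is in the other case: I would show that whenever $L > 2\tau$, the initial segment $u_0, u_1, \ldots, u_\tau$ lies entirely in $T'$, and then invoke the previously established corollary (any length-$\tau$ path in $T'$ contains some $Q \in \pathset$) to produce the promised $Q$ inside the first $\tau$ edges of $P$.

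The key ingredient is a structural claim about the boundary of $T'$: if $u \in T'$ and $u'$ is a neighbor of $u$ with $u' \notin T'$, then the subtree $T_{u'}$ hanging off $u$ on the $u'$-side has depth at most $\tau - 1$ from $u'$. To prove this, I would unroll the iterated leaf-pruning that defines $T'$. Since $u \in T'$, $u$ is not pruned during the first $\tau$ rounds, so whenever a node $w$ inside $T_{u'}$ is pruned in round $r \le \tau$, it must have become a leaf of the partially pruned tree by having every one of its children (in the rooting of $T_{u'}$ at $u'$) pruned in some round strictly less than $r$---the parent-direction of $w$ toward $u'$ and ultimately $u$ remains alive. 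A straightforward induction then shows that every node in $T_{u'}$ at depth $d$ from $u'$ is pruned by round at most (pruning round of $u'$) $- d$, so the depth of $T_{u'}$ from $u'$ is at most (pruning round of $u'$) $- 1 \le \tau - 1$.

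With this in hand, suppose $L > 2\tau$ and let $m$ be the largest index with $u_m \in T'$. Such an $m$ exists because $u_0 \in T'$, and $m < L$ because $u_L$ is a leaf and hence gets pruned in round $1$. The subpath $(u_{m+1}, \ldots, u_L)$ lies inside the pruned subtree hanging off $u_m$, so the structural lemma gives $L - m - 1 \le \tau - 1$, forcing $m \ge L - \tau > \tau$. Consequently $u_0, u_1, \ldots, u_\tau \in T'$ and form a path of length $\tau = c\sqrt{n}$ in $T'$; the corollary then yields a $Q \in \pathset$ within this initial segment, completing the second bullet.

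For the furthermore clause, observe that the radius-$2\tau$ neighborhood of $v$ is enough to carry out every check above: it directly reveals whether any leaf of $T$ lies within distance $2\tau$ of $\psi(v)$ (case 1), and otherwise, deciding $u_i \in T'$ for each $i \le \tau$ only requires simulating $\tau$ rounds of leaf-pruning inside the radius-$\tau$ ball around $u_i$, which is contained in the radius-$2\tau$ ball around $\psi(v)$; the same view suffices to identify the high-degree nodes of $T'$ and to run the $(c+1, c)$-ruling set procedure along the initial segment, pinning down $Q \in \pathset$. I expect the main obstacle to be rigorously controlling the pruning dynamics in the structural lemma, since the pruning process is defined globally on $T$ and one has to verify that no cascade on the $u'$-side can propagate past $u$ to lengthen $T_{u'}$; the induction relies crucially on $u$'s survival throughout the relevant rounds, and edge cases (such as $u'$ having further neighbors outside $T'$) need to be handled with care.
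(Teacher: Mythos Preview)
Your argument is correct and follows the paper's overall strategy: case-split on the path length, show that the initial $\tau$-segment lies in $T'$, and then invoke the pigeonhole-style corollary to locate some $Q\in\pathset$ inside that segment. You actually supply more than the paper does on the middle step---the paper simply asserts $\psi^{-1}(P')\subset T'$ with only a locality remark as justification, whereas your pruned-subtree depth bound (branches hanging off $T'$ have depth at most $\tau-1$) proves it cleanly. The one implicit step worth spelling out is the passage from ``$m>\tau$'' to ``$u_0,\dots,u_\tau\in T'$'': this uses that $T'$ is a connected subtree of $T$ (so the unique $u_0$--$u_m$ path stays in $T'$), or equivalently that your structural claim already forces every node past the first exit from $T'$ to be pruned, so the path cannot re-enter $T'$ once it leaves.
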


\begin{proof}
    Let $P = (v = v_1, v_2, \dots, v_l)$ be a path in $T$, where $\psi^{-1}(v)$ is defined, and where $v_l$ is some leaf node which is reachable from $v$.
    $l$ denotes the length of the path.
    If $l\le 2\tau$, then we are done. So suppose $l > 2\tau$. We consider $P' = (v=v_1, v_2, \dots, v_{\tau})$---the first $\tau$ nodes of $P$.
    We note that even if we have only seen $v$ at this point in our algorithm,
    we can be certain of the degree
    of the first $\tau$ nodes of $P$ after trimming their height-$\tau$
    subtrees (that is, in $T'$),
    since we can see the $2\tau$-radius neighborhood of $v$ (and thus the $\tau$-radius neighborhood of all nodes in $P'$).
    So $\psi^{-1}(P')\subset T'$. 
    $P'$ has length $\tau$, so we can apply the previous lemma to determine that at most $\frac{\sqrt{n}}{c}$ nodes of $P'$ have degree greater than $2$.
    Define $P''\subset P'$ as the set of paths obtained by removing all nodes with degree greater than $2$ from $P'$.
    By the pigeonhole principle, at least one path in $P''$ must have at least
    $c^2-1$ nodes. $c^2-1 > 2c$ (this will always be true by our choice of $c$). 
    This path therefore must contain a path between 2 ruling set nodes with length between $c$ and $2c$---that is, one of the paths of $\pathset$.
\end{proof}

\subsection{\boldmath Speeding up \texorpdfstring{$\algo$}{A}}

Given a node $v\in T'$ (a node which will not be removed in the construction of the skeleton tree $T'$), we can construct a local portion of $S$ by viewing only $v$'s radius-$2\tau$ neighborhood (which we will call $T_v$).
If every node in this neighborhood is removed in the construction of $T'$, then the connected component containing $v$ has an $O(\sqrt{n})$ diameter and the problem can trivially be solved with $O(\sqrt{n})$ locality.
So suppose $T'_v$ is nonempty.

Let $\ogruntime(n)$ denote the original runtime of $\algo$ on a graph of $n$ nodes. 
The following lemma, reproduced from work by Balliu et al.~\cite{almostGlobalLOCAL}, is essential in determining the value of $B$:

\begin{lemma}
    There exists some constant $c$ such that, if nodes $u,v\in T_o$ are at distance at least $c\sqrt{n}$ in $T$, then their corresponding nodes $\eta^{-1}(u)$ and $\eta^{-1}(v)$ are at distance at least $cB\sqrt{n}/3$ in $S$. 
\end{lemma}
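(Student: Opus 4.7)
The plan is to take the unique path $P$ from $u$ to $v$ in $T$, count the number of $\pathset$-subpaths it traverses, and then observe that each such subpath blows up to length $\ge cB$ in $S$, directly yielding a lower bound on the $S$-distance of the form (number of pumped subpaths)$\cdot cB$. As a preliminary step, I would argue $P \subseteq T'$: since $u,v \in T_o$ correspond to un-duplicated nodes of $S$, they must survive the leaf-pruning phase and therefore lie in $V(T')$; and because $T'$ is built by iterated leaf deletion from $T$, it is a connected subtree of $T$, so $P$ is entirely contained in $T'$ and has length at least $c\sqrt{n}$. The corresponding path in $S$ follows the image of $P$ under the pumping operation, and its length equals $|P|$ plus the extra length injected by the pumped subpaths.

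The main work lies in lower-bounding the number of $\pathset$-subpaths contained in $P$. By the earlier lemma, $P$ contains at most $\sqrt{n}/c$ nodes of degree $> 2$ in $T'$, so $P$ splits into at most $\sqrt{n}/c + 1$ maximal degree-$2$ subpaths whose total length is at least $|P| - \sqrt{n}/c$. On each such degree-$2$ subpath of length $\ell_i$, the $(c+1,c)$-ruling set has consecutive elements at distance between $c+1$ and $2c+1$, hence at least $\ell_i/(2c+1)$ ruling-set nodes and so at least $\ell_i/(2c+1) - 1$ interior gaps, each of length in $[c,2c-1]$ and thus contributing one $\pathset$-subpath. Summing over all degree-$2$ subpaths of $P$, the total count is at least $(|P| - \sqrt{n}/c)/(2c+1) - (\sqrt{n}/c + 1)$; provided $c$ is chosen sufficiently large, this is at least $\sqrt{n}/3$ whenever $|P| \ge c\sqrt{n}$.

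To finish, each $\pathset$-subpath traversed by $P$ corresponds in $S$ to a pumped path of length at least $cB$, while the remaining edges of $P$ contribute nonnegative length to the $S$-distance. Therefore the distance from $\eta^{-1}(u)$ to $\eta^{-1}(v)$ in $S$ is at least $(\sqrt{n}/3)\cdot cB = cB\sqrt{n}/3$, as claimed. The one subtlety I anticipate is the boundary bookkeeping in the counting step: each of the $O(\sqrt{n}/c)$ degree-$2$ subpaths can waste up to $O(c)$ units of length on end segments too short to yield a $\pathset$-path, giving total slack $O(\sqrt{n})$ that must be absorbed into the $c\sqrt{n}$ lower bound on $|P|$; this should be handled by picking the universal constant $c$ large enough so that the $\sqrt{n}/3$ bound on the number of pumped subpaths goes through.
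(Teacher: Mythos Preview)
The paper does not actually prove this lemma; it is stated as \enquote{reproduced from work by Balliu et al.~\cite{almostGlobalLOCAL}} and no argument is given. Your counting approach is exactly the natural one and is, in effect, the quantitative version of the paper's own preceding corollary (which shows that any length-$c\sqrt{n}$ path in $T'$ contains at least one $\pathset$-subpath): you push the same pigeonhole through to get $\Theta(\sqrt{n})$ many $\pathset$-subpaths rather than just one, and then cash in the $\ge cB$ length per pumped path. The arithmetic checks out for $c$ a sufficiently large absolute constant, which is all the lemma claims.

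One small point worth tightening: your justification that $u,v\in T'$ (\enquote{un-duplicated nodes of $S$ must survive leaf-pruning}) is not literally valid from the formal definition of $S_o$ alone, since a node in a pruned subtree hanging off a non-$\pathset$ skeleton node is also un-duplicated yet lies outside $T'$. The paper's own informal description of $T_o$ explicitly restricts to nodes \enquote{which were not removed in the construction of $T'$}, so $T_o\subseteq T'$ is the intended reading and your conclusion stands; just cite that description rather than the duplication argument. With that fix, the proof is complete and matches what the cited reference does.
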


In particular, let $c = 4r + \lpump$.

So we choose $B$ such that $\ogruntime (N)\le cB\sqrt{n}/6$. Such a $B$ will
always exist, as $\ogruntime(x)=o(x)$. 
(Recall that $B$ may be an arbitrarily large function of $n$.)
This choice of $B$, by the previous lemma, implies that if 2 nodes $u,v\in T_o$
are a distance at least $c\sqrt{n}$ in $T$, their radius-$\ogruntime(N)$
neighborhoods are entirely disjoint, and thus one can have no influence on the
other's labeling.

We run $\algo$ on $\eta^{-1}(v)$, but tell $\algo$ that the size of $S$ is (exactly) $N$. Note that since $N$ is an upper bound, it is not possible that $\algo$ sees more than $N$ nodes.

Finally, we can prove the following result:

\begin{lemma}
    For nodes in $T_o$, it is possible to execute $\algo$ on $S$ by knowing only the neighborhood of radius $2c\sqrt{n}$ in $T$.
\end{lemma}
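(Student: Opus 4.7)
The plan is to show that the radius-$\ogruntime(N)$ ball around $\eta^{-1}(v)$ in $S$---which is exactly the data that $\algo$ consults when labelling $\eta^{-1}(v)$---can be reconstructed from the radius-$2c\sqrt{n}$ ball of $v$ in $T$. Once that holds, a simulator running in $T$ can hand $\algo$ precisely the view it would see if it were really executing on $S$, so a single \onlinelocal step at $\eta^{-1}(v)$ consumes only a radius-$2c\sqrt{n}$ view in $T$. Because \onlinelocal keeps global memory across reveals, we just repeat this reconstruction at every reveal, threading the accumulated memory and shared randomness through each call to $\algo$ and recording its output.

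First I would invoke the preceding distance-inflation lemma in its contrapositive form: if $u \in S_o$ lies within distance $\ogruntime(N)$ of $\eta^{-1}(v)$ in $S$, then $\eta(u)$ lies within distance strictly less than $c\sqrt{n}$ of $v$ in $T$. This is because $B$ was chosen precisely so that $\ogruntime(N) \le cB\sqrt{n}/6$, while distances between $T_o$-nodes are inflated in $S$ by a factor of at least $B/3$ compared to $T$. Hence the \emph{real} nodes of $S$ visible to $\algo$ while labelling $\eta^{-1}(v)$ correspond under $\eta$ bijectively to nodes of $T$ sitting inside the radius-$c\sqrt{n}$ neighborhood of $v$.

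Second I would reconstruct the purely virtual portion of the ball, namely the pumped subpaths themselves. Each such subpath arises from some $Q \in \pathset$ by writing $Q = x \circ y \circ z$ via \cref{pumping-lem} and repeating $y$ many times, and this decomposition depends only on $\Type(Q)$ together with the positions of the $(c+1,c)$-ruling-set nodes that split the long paths of $\skeltree$ into $\pathset$. Both pieces of information are computable from a constant-radius view inside $\skeltree$, and hence from a radius-$(\tau + O(1))$ view in $T$---the extra $\tau$ being exactly what is required to simulate the $\tau$ rounds of pruning that produced $\skeltree$. The factor-of-two slack in $2c\sqrt{n}$ over $c\sqrt{n}$ is what lets us see simultaneously (i)~$v$'s own local skeleton, (ii)~the full context of every ruling-set node cutting $\pathset$, and (iii)~the classes $\Class(\deltree_u)$ of the dangling subtrees along the boundary of the reconstructed ball.

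The main obstacle lies in this last bookkeeping step: confirming that radius $2c\sqrt{n}$ really covers all of the above at once, and that by the equivalence-preservation guarantees of \cref{pumping-lem} the reconstructed pumped segments can be treated as authoritative even though the simulator only ever inspects constantly many representative copies of $y$. Once those boundary cases are pinned down, the distance-inflation lemma controls how far into $S$ any single query of $\algo$ can reach, and the equivalence properties of the pumping construction guarantee that the reconstructed neighborhood is indistinguishable from the true one, completing the argument.
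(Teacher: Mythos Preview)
Your proposal is correct and follows essentially the same approach as the paper: both arguments hinge on the distance-inflation lemma together with the choice of $B$ so that $\ogruntime(N)\le cB\sqrt{n}/6$, from which the contrapositive confines all $S_o$-nodes in $\algo$'s view to the radius-$c\sqrt{n}$ ball around $v$ in $T$. If anything, you are more explicit than the paper's own proof, which compresses the reconstruction of the pumped (virtual) segments into the single sentence ``the locality of $\algo$ on $S$ is less than the radius of the subtree of $S$ which $\eta(v)$ computed''; your second paragraph unpacks precisely this step and correctly attributes the extra factor of two in $2c\sqrt{n}$ to the $\tau$ rounds of pruning needed to see the skeleton-tree structure around the already-localised real nodes.
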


\begin{proof}
    Since $B$ satisfies $\ogruntime (N)\le cB\sqrt{n}/6$ ($\le cB\sqrt{n}/3$), and since by the previous lemma, nodes outside of a radius-$2c\sqrt{n}$ ball in $T_o$ are at distance at least $cB\sqrt{n}/3$ in $S$,
    when $\algo$ runs on $S$ and is processing $v$, it
    cannot see any nodes $u$ such that the distance between $\eta(u)$ and $\eta(v)$ is greater than $2c\sqrt{n}$.
    So the locality of $\algo$ on $S$ is less than the radius of the subtree of $S$ which $\eta(v)$ computed.
    This also implies that the nodes in $T_0$ do not see the whole graph and thus $\algo$ cannot notice that $N$ is not the actual size of the graph $S$.
    Thus, $\algo$ can execute on $S$ by knowing only the radius-$2c\sqrt{n}$ neighborhood of a given node in $T$.
\end{proof}

\subsection{\boldmath LOCALizing \texorpdfstring{$\algo$}{A}}

We now show that $\algo$ can be transformed not only to an \onlinelocal $O(\sqrt{n})$ algorithm, but a \local one.

We note that the procedure followed in constructing $S$ does not rely on any properties unique to the \onlinelocal model--and, in fact, can be done in the \local model, as is shown by Balliu et al.~\cite{almostGlobalLOCAL}.
So we can inherit this construction in the \local model; it only remains to eliminate potential dependencies on global memory and sequential processing.
We do this using a procedure similar to that used by Akbari et al.~\cite{akbari24_online_arxiv}, where we create an \enquote{amnesiac} algorithm $\algo'$. 
This algorithm effectively will see so many nodes and paths of the same type that global memory is of no use, and can be effectively disregarded.

\subsubsection{Preprocessing}

The preprocessing phase involves enumerating all possible pumpable subpaths of the paths in $\pathset$, and feeding them to $\algo$ many times. 
Recall that every such path is a chain of trees of height $\le c\sqrt{n}$
The pumpable portions of a path are bounded by a constant---in particular, they must have length $\le \lpump$.
Further, there is a constant number of equivalence classes (\enquote{Types}) which tree in the path can take on---let us call this number $\xi$.
Then the number of possible pumpable subpaths is bounded by $\xi^{\lpump}$---a constant.

Let $\allpaths$ be a sequence of all possible such paths, in an arbitrary order.
We proceed by feeding each path $P\in\allpaths$ to $\algo$. 
We repeat this many times, until we notice a repeated labeling on each path $P$.
This is guaranteed to happen since there is a constant number of input labels and thus a constant number of possible labelings for each path.
More formally, for any constant $\Delta$, if we feed a given path $P\in \allpaths$ to the original algorithm $\xi^{\lpump}\Delta$ times, by the pigeonhole principle, we are guaranteed to see at least one labeling $\labeling_0^P$ appear $\Delta$ many times.
We call $\labeling_{0}^P$ the \emph{canonical labeling} of $P$.
Finally, we define a function $f : \allpaths \rightarrow \{\labeling_0^P\}_{P\in\allpaths}$, which maps each $P$ to $\labeling_0^P$.

\subsubsection{Running the Algorithm}

From here, we will construct a \local algorithm $\algo'$ which solves $\Pi$ in $O(\sqrt{n})$ rounds.

Given a tree $T$, we first construct $S$ as described above.
This can be done in the \local model with $O(\sqrt{n})$ locality, as proven by Balliu et al.~\cite{almostGlobalLOCAL}.

Let $\pathset_S$ denote the set of all (pumped) paths in $S$.
Each path $Q\in\pathset_S$ takes the form $Q = x\circ y^j \circ z$ for subpaths $x,y,z$ and some constant $j$, with $\abs{y}\le \lpump$,
since we duplicate a middle portion of the original path from $T$ some number $j$ times to obtain $Q$.
$y\in \allpaths$. We find the canonical labeling $\labeling_0^y$ of $y$, and will label each copy of $y\subset Q$ with this labeling.
We then label $x$ and $z$ with brute force. This is doable in constant locality,
since $\abs{x\circ y\circ z}\le 2c$, a constant. 

From here, we can fill in the remaining portions of the graph---those not included in any path in $\pathset$.
By lemma 4.3, the remaining nodes not in any deleted tree or pumped path can be labeled with knowledge of a $\tau$-radius neighborhood.
Similarly, by lemma 3.3, each node in some deleted tree can be labeled with only knowledge of the tree itself.
Since each of these trees has height at most $\tau$, a locality of $\tau$ is sufficient to label each of these nodes using $\algo$. 

So we are able to run $\algo$ with $O(\sqrt{n})$ locality, without reliance on any properties of the \onlinelocal model---and in particular, through use of the \local model.
With this procedure, we can transform any \onlinelocal $o(n)$ algorithm to a \local algorithm which runs with $O(\sqrt{n})$ locality.

Since the paths are bounded by a constant length, this can be done with constant locality.
Further, there is a (likely very large) constant number of paths, so this process will terminate.
We will repeat this many times, and notice that our algorithm will begin to repeat labelings of a given path.
From here, we can paste the labelings to the paths of $\pathset$. 

More formally, we notice that the paths of $\pathset$ have their lengths bounded by a constant, namely $2c$.
Further, the set of input labels $\Sigma_{\iin}$ has constant size.
So there is a constant number of possible paths which can occur in $\pathset$.

\bibliography{paper}

\begin{thebibliography}{10}

\bibitem{akbari24_online_arxiv}
Amirreza Akbari, Xavier Coiteux-Roy, Francesco D'Amore, Fran{\c c}ois {Le
  Gall}, Henrik Lievonen, Darya Melnyk, Augusto Modanese, Shreyas Pai,
  Marc-Olivier Renou, V{\'a}clav Rozhon, and Jukka Suomela.
\newblock Online locality meets distributed quantum computing, 2024.
\newblock \href {https://arxiv.org/abs/2403.01903} {\path{arXiv:2403.01903}}.

\bibitem{akbari_et_al:LIPIcs.ICALP.2023.10}
Amirreza Akbari, Navid Eslami, Henrik Lievonen, Darya Melnyk, Joona
  S{\"a}rkij{\"a}rvi, and Jukka Suomela.
\newblock Locality in online, dynamic, sequential, and distributed graph
  algorithms.
\newblock In Kousha Etessami, Uriel Feige, and Gabriele Puppis, editors, {\em
  50th International Colloquium on Automata, Languages, and Programming,
  {ICALP} 2023, July 10-14, 2023, Paderborn, Germany}, volume 261 of {\em
  LIPIcs}, pages 10:1--10:20. Schloss Dagstuhl - Leibniz-Zentrum f{\"u}r
  Informatik, 2023.
\newblock \href {https://doi.org/10.4230/LIPICS.ICALP.2023.10}
  {\path{doi:10.4230/LIPICS.ICALP.2023.10}}.

\bibitem{balliu2022efficient}
Alkida Balliu, Sebastian Brandt, Yi-Jun Chang, Dennis Olivetti, Jan
  Studen{\'y}, and Jukka Suomela.
\newblock Efficient classification of locally checkable problems in regular
  trees.
\newblock In Christian Scheideler, editor, {\em 36th International Symposium on
  Distributed Computing, {DISC} 2022, October 25-27, 2022, Augusta, Georgia,
  {USA}}, volume 246 of {\em LIPIcs}, pages 8:1--8:19. Schloss Dagstuhl -
  Leibniz-Zentrum f{\"u}r Informatik, 2022.
\newblock \href {https://doi.org/10.4230/LIPICS.DISC.2022.8}
  {\path{doi:10.4230/LIPICS.DISC.2022.8}}.

\bibitem{balliu23_locally_dc}
Alkida Balliu, Sebastian Brandt, Yi-Jun Chang, Dennis Olivetti, Jan
  Studen{\'y}, Jukka Suomela, and Aleksandr Tereshchenko.
\newblock Locally checkable problems in rooted trees.
\newblock {\em Distributed Computing}, 36(3):277--311, 2023.
\newblock \href {https://doi.org/10.1007/S00446-022-00435-9}
  {\path{doi:10.1007/S00446-022-00435-9}}.

\bibitem{almostGlobalLOCAL}
Alkida Balliu, Sebastian Brandt, Dennis Olivetti, and Jukka Suomela.
\newblock Almost global problems in the {LOCAL} model.
\newblock {\em Distributed Computing}, 34(4):259--281, 2021.
\newblock \href {https://doi.org/10.1007/S00446-020-00375-2}
  {\path{doi:10.1007/S00446-020-00375-2}}.

\bibitem{balliu24_shared_arxiv}
Alkida Balliu, Mohsen Ghaffari, Fabian Kuhn, Augusto Modanese, Dennis Olivetti,
  Mika{\"e}l Rabie, Jukka Suomela, and Jara Uitto.
\newblock Shared randomness helps with local distributed problems, 2024.
\newblock \href {https://arxiv.org/abs/2407.05445} {\path{arXiv:2407.05445}}.

\bibitem{balliu23_sinkless_sosa}
Alkida Balliu, Janne~H. Korhonen, Fabian Kuhn, Henrik Lievonen, Dennis
  Olivetti, Shreyas Pai, Ami Paz, Joel Rybicki, Stefan Schmid, Jan Studen{\'y},
  Jukka Suomela, and Jara Uitto.
\newblock Sinkless orientation made simple.
\newblock In Telikepalli Kavitha and Kurt Mehlhorn, editors, {\em 2023
  Symposium on Simplicity in Algorithms, {SOSA} 2023, Florence, Italy, January
  23-25, 2023}, pages 175--191. {SIAM}, 2023.
\newblock \href {https://doi.org/10.1137/1.9781611977585.CH17}
  {\path{doi:10.1137/1.9781611977585.CH17}}.

\bibitem{brandt16_lower_stoc}
Sebastian Brandt, Orr Fischer, Juho Hirvonen, Barbara Keller, Tuomo
  Lempi{\"a}inen, Joel Rybicki, Jukka Suomela, and Jara Uitto.
\newblock A lower bound for the distributed lov{\'a}sz local lemma.
\newblock In Daniel Wichs and Yishay Mansour, editors, {\em Proceedings of the
  48th Annual {ACM} {SIGACT} Symposium on Theory of Computing, {STOC} 2016,
  Cambridge, MA, USA, June 18-21, 2016}, pages 479--488. {ACM}, 2016.
\newblock \href {https://doi.org/10.1145/2897518.2897570}
  {\path{doi:10.1145/2897518.2897570}}.

\bibitem{chang20_complexity_disc}
Yi-Jun Chang.
\newblock The complexity landscape of distributed locally checkable problems on
  trees.
\newblock In Hagit Attiya, editor, {\em 34th International Symposium on
  Distributed Computing, {DISC} 2020, October 12-16, 2020, Virtual Conference},
  volume 179 of {\em LIPIcs}, pages 18:1--18:17. Schloss Dagstuhl -
  Leibniz-Zentrum f{\"u}r Informatik, 2020.
\newblock \href {https://doi.org/10.4230/LIPICS.DISC.2020.18}
  {\path{doi:10.4230/LIPICS.DISC.2020.18}}.

\bibitem{chang19_exponential_siamjc}
Yi-Jun Chang, Tsvi Kopelowitz, and Seth Pettie.
\newblock An exponential separation between randomized and deterministic
  complexity in the {LOCAL} model.
\newblock {\em SIAM Journal on Computing}, 48(1):122--143, 2019.
\newblock \href {https://doi.org/10.1137/17M1117537}
  {\path{doi:10.1137/17M1117537}}.

\bibitem{TimeHierarchyLOCAL}
Yi-Jun Chang and Seth Pettie.
\newblock A time hierarchy theorem for the {LOCAL} model.
\newblock {\em SIAM Journal on Computing}, 48(1):33--69, 2019.
\newblock \href {https://doi.org/10.1137/17M1157957}
  {\path{doi:10.1137/17M1157957}}.

\bibitem{no_distributed_quantum_advantage}
Xavier Coiteux-Roy, Francesco d'Amore, Rishikesh Gajjala, Fabian Kuhn,
  Fran\c{c}ois Le~Gall, Henrik Lievonen, Augusto Modanese, Marc-Olivier Renou,
  Gustav Schmid, and Jukka Suomela.
\newblock No distributed quantum advantage for approximate graph coloring.
\newblock In {\em Proceedings of the 56th Annual ACM Symposium on Theory of
  Computing}, STOC 2024, page 1901–1910, New York, NY, USA, 2024. Association
  for Computing Machinery.
\newblock \href {https://doi.org/10.1145/3618260.3649679}
  {\path{doi:10.1145/3618260.3649679}}.

\bibitem{ghaffari18_derandomizing_focs}
Mohsen Ghaffari, David~G. Harris, and Fabian Kuhn.
\newblock On derandomizing local distributed algorithms.
\newblock In Mikkel Thorup, editor, {\em 59th {IEEE} Annual Symposium on
  Foundations of Computer Science, {FOCS} 2018, Paris, France, October 7-9,
  2018}, pages 662--673. {IEEE} Computer Society, 2018.
\newblock \href {https://doi.org/10.1109/FOCS.2018.00069}
  {\path{doi:10.1109/FOCS.2018.00069}}.

\bibitem{DBLP:conf/stoc/GhaffariKM17}
Mohsen Ghaffari, Fabian Kuhn, and Yannic Maus.
\newblock On the complexity of local distributed graph problems.
\newblock In Hamed Hatami, Pierre McKenzie, and Valerie King, editors, {\em
  Proceedings of the 49th Annual {ACM} {SIGACT} Symposium on Theory of
  Computing, {STOC} 2017, Montreal, QC, Canada, June 19-23, 2017}, pages
  784--797. {ACM}, 2017.
\newblock \href {https://doi.org/10.1145/3055399.3055471}
  {\path{doi:10.1145/3055399.3055471}}.

\bibitem{ghaffari17_distributed_soda}
Mohsen Ghaffari and Hsin-Hao Su.
\newblock Distributed degree splitting, edge coloring, and orientations.
\newblock In Philip~N. Klein, editor, {\em Proceedings of the Twenty-Eighth
  Annual {ACM-SIAM} Symposium on Discrete Algorithms, {SODA} 2017, Barcelona,
  Spain, Hotel Porta Fira, January 16-19}, pages 2505--2523. {SIAM}, 2017.
\newblock \href {https://doi.org/10.1137/1.9781611974782.166}
  {\path{doi:10.1137/1.9781611974782.166}}.

\bibitem{grunau22_landscape_podc}
Christoph Grunau, V{\'a}clav Rozhon, and Sebastian Brandt.
\newblock The landscape of distributed complexities on trees and beyond.
\newblock In Alessia Milani and Philipp Woelfel, editors, {\em {PODC} '22:
  {ACM} Symposium on Principles of Distributed Computing, Salerno, Italy, July
  25 - 29, 2022}, pages 37--47. {ACM}, 2022.
\newblock \href {https://doi.org/10.1145/3519270.3538452}
  {\path{doi:10.1145/3519270.3538452}}.

\bibitem{holroyd16_finitely_fmpi}
Alexander~E. Holroyd and Thomas~M. Liggett.
\newblock Finitely dependent coloring.
\newblock {\em Forum of Mathematics, Pi}, 4, 2016.
\newblock \href {https://doi.org/10.1017/fmp.2016.7}
  {\path{doi:10.1017/fmp.2016.7}}.

\bibitem{linial92_locality_siamjc}
Nathan Linial.
\newblock Locality in distributed graph algorithms.
\newblock {\em SIAM Journal on Computing}, 21(1):193--201, 1992.
\newblock \href {https://doi.org/10.1137/0221015} {\path{doi:10.1137/0221015}}.

\bibitem{naor91_lower_siamjdm}
Moni Naor.
\newblock A lower bound on probabilistic algorithms for distributive ring
  coloring.
\newblock {\em SIAM Journal on Discrete Mathematics}, 4(3):409--412, 1991.
\newblock \href {https://doi.org/10.1137/0404036} {\path{doi:10.1137/0404036}}.

\bibitem{whatcanbecomputedlocally}
Moni Naor and Larry~J. Stockmeyer.
\newblock What can be computed locally?
\newblock {\em SIAM Journal on Computing}, 24(6):1259--1277, 1995.
\newblock \href {https://doi.org/10.1137/S0097539793254571}
  {\path{doi:10.1137/S0097539793254571}}.

\end{thebibliography}

\end{document}